\pdfoutput=1
\documentclass[aps,onecolumn,10pt]{revtex4-2}

\usepackage[T1]{fontenc}
\fontencoding{T1}  
\usepackage[utf8]{inputenc}

\usepackage{enumitem}
\usepackage{tikz}
\usepackage{amsfonts}
\usepackage{amsmath,amssymb,amsthm}
\usepackage{graphicx}
\usepackage[breakable]{tcolorbox}
\usepackage{bbm}
\usepackage{url}
\usepackage{mathtools}
\usepackage{braket}
\usepackage{upgreek}
\usepackage{dsfont}
\usepackage{collectbox}
\usepackage{braket}
\usepackage{quantikz}

\usepackage[plain]{fancyref}

\setlength{\topmargin}{-1.6cm}
\setlength{\textheight}{23cm}

\usepackage{hyperref}

\newtheorem{thm}{Theorem}
\newtheorem*{thm*}{Theorem}
\makeatletter
\newcommand{\setthmtag}[1]{
  \let\oldthethm\thethm
  \newcommand{\thethm}{#1}
  \g@addto@macro\endthm{
    \addtocounter{thm}{-1}
    \global\let\thethm\oldthethm}
  }
\makeatother

\newtheorem{prop}[thm]{Proposition}
\newtheorem*{prop*}{Proposition}
\newtheorem{lemma}[thm]{Lemma}
\newtheorem*{lemma*}{Lemma}
\newtheorem{cor}[thm]{Corollary}
\newtheorem*{cor*}{Corollary}

\newtheorem*{cj*}{Conjecture}
\newtheorem{defn}[thm]{Definition}
\newtheorem*{Def*}{Definition}

\theoremstyle{definition}

\newtheorem*{rem*}{Remark}

\def\beq{\begin{equation}}
\def\eeq{\end{equation}}
\def\bq{\begin{quote}}
\def\eq{\end{quote}}
\def\ben{\begin{enumerate}}
\def\een{\end{enumerate}}
\def\bit{\begin{itemize}}
\def\eit{\end{itemize}}

\def\r|{\right|}

\newcommand{\ketbra}[2]{|#1\rangle\langle #2|}

\newcommand{\norm}[1]{\left\|#1\right\|}

\newcommand\be{\begin{equation}}
\newcommand\ee{\end{equation}}

\begin{document}
\title{Order $p$ quantum Wasserstein distances from couplings}

\author{\begingroup
\hypersetup{urlcolor=navyblue}
\href{https://orcid.org/0009-0006-5387-5291}{Emily Beatty
\endgroup}
}
\email[Emily Beatty]{emily.beatty@ens-lyon.fr}
 \affiliation{Univ Lyon, ENS Lyon, UCBL, CNRS, Inria, LIP, F-69342, Lyon Cedex 07, France}

\author{\begingroup
\hypersetup{urlcolor=blue}
\href{https://orcid.org/0000-0001-9699-5994}{Daniel Stilck Fran\c{c}a}
\endgroup}
\email[Daniel Stilck Fran\c ca ]{daniel.stilck\_franca@ens-lyon.fr}
\affiliation{Univ Lyon, ENS Lyon, UCBL, CNRS, Inria, LIP, F-69342, Lyon Cedex 07, France}

\begin{abstract}
Optimal transport provides a powerful mathematical framework with applications spanning numerous fields. A cornerstone within this domain is the $p$-Wasserstein distance, which serves to quantify the cost of transporting one probability measure to another. While recent attempts have sought to extend this measure to the realm of quantum states, existing definitions often present certain limitations, such as not being faithful. In this work, we present a new definition of quantum Wasserstein distances. This definition, leveraging the coupling method and a metric applicable to pure states, draws inspiration from a property characterising the classical Wasserstein distance — its determination based on its value on point masses. Subject to certain continuity properties, our definition exhibits numerous attributes expected of an optimal quantum rendition of the Wasserstein distance. Notably, our approach seamlessly integrates metrics familiar to quantum information theory, such as the trace distance. Moreover, it provides an organic extension for metrics, like Nielsen's complexity metric, allowing their application to mixed states with a natural operational interpretation. 
We analyse this metric's attributes in the context of random quantum states, unveil phase transitions concerning the complexity of subsystems of random states and use it to derive circuit lower bounds. In addition, we show how we can use our definition to study hypercontractive inequalities for quantum channels that do not admit a faithful fixed point, allowing us to derive concentration inequalities. Finally, we discuss how our distance is well-suited to define quantum Wasserstein generative adversarial networks.
\end{abstract}

\maketitle
\tableofcontents

\section{Introduction}

Optimal transport has established itself as a powerful tool in various areas of science and pure mathematics, such as machine learning~\cite{frogner2015learning}, information theory~\cite{raginsky2015concentration}, partial differential equations~\cite{villani2008optimal} and economics~\cite{galichon2018optimal}. In light of this, it should come as no surprise that the last years have seen a surge of interest in the quantum generalisation of optimal transport~\cite{de2020quantum, Carlen_2014, chakrabarti2019quantum, Golse2015OnTM, DePalma2021, Friedland_2022, de2022limitations, gao2021geometric, datta2020relating, bunth2024metric}.

One of the central concepts of classical optimal transport is the set of $p$-Wasserstein distances $\mathcal{W}_p$, which is a family of distances on the set of probability measures on a metric space. Roughly speaking, if we imagine a probability measure as describing a distribution of mass, these distances measure the cost of transporting one measure onto another in terms of a cost function on the underlying metric space. They can recover many widely studied metrics in probability spaces in a unified way. For instance, the total variation distance corresponds to the Wasserstein distance obtained when we equip the space with the discrete metric.
In the classical setting~\cite{villani2008optimal}, they admit many different equivalent characterisations, which we can broadly categorise into a \emph{geometrical formulation}, a \emph{dual formulation} and a \emph{coupling formulation}.

There now exist many different proposals of quantum Wasserstein distances that take one of the approaches mentioned above~\cite{de2020quantum, De_Palma_2023, DePalma2021, Carlen_2014, Carlen_2019, Rouze2019, cole2022quantum, Friedland_2022, Golse2015OnTM}. Important examples of definitions that follow the geometric route are the definitions by Carlen and Maas~\cite{Carlen_2014, Carlen2017}, which have been extensively studied in a series of works~\cite{Rouze2019, Carlen_2019, datta2020relating}. Roughly speaking, the geometric framework gives a natural generalisation of $\mathcal{W}_2$ in terms of a Riemannian metric on the set of full rank states such that the relative entropy with respect to a reference state arises as a gradient flow.
However, it has the undesirable feature of depending on an underlying semigroup converging to a reference state. 

A good example of a distance that is based on a dual approach is the definition of De Palma et al~\cite{de2020quantum}. Whereas the geometric approach is well-suited to generalise the case $p=2$, this approach works well for $p=1$, and the authors were able to define a natural quantum generalisation of $\mathcal{W}_1$ with respect to the Hamming distance on the hypercube. The dual approach works by first defining a good generalisation of a so-called Lipschitz constant of observables, which in some sense measures how much the expectation value of the observable can change between two points when normalised by the underlying metric.
From this, defining the distance induced by such observables is straightforward through duality. This generalisation has also found numerous applications in quantum information and computation~\cite{depalma2021quantum,de2022limitations,rouze2021learning, Kiani2022, De_Palma_2023, depalma2023shadows}, and has been extended to a distance between channels \cite{Duvenhage1_2023}.

Finally, some works have followed the coupling approach~\cite{cole2022quantum, Friedland_2022, DePalma2021, chakrabarti2019quantum,Duvenhage2_2021, Duvenhage2_2022}. Recall that given two random variables, a coupling is a joint distribution whose marginals are the two random variables. In the classical setting, the Wasserstein distance can be defined as the infimum of the expected cost of transporting one random variable onto the other, where the infimum is taken over all couplings of the two random variables. The main advantage of the coupling approach is that it can yield a definition of a quantum $p$-Wasserstein distance for all values of $p$. Existing approaches that attempt to quantise the coupling approach have significant downsides and usually do not satisfy one or more of the key properties expected from the Wasserstein distance. For example, the definition of \cite{cole2022quantum} is not a semidistance, and the definition of \cite{DePalma2021} is not faithful.  

In this work, we propose a new definition of the quantum Wasserstein distance based on the coupling approach. Our novel definition departs from the observation that, classically, the Wasserstein distance is completely determined by its value on point masses. As pure states are quantum analogues of point masses, we propose a coupling definition of the quantum $p$-Wasserstein distance induced by a metric on pure states and then consider an optimisation over all separable couplings of the underlying states.

We show that, as long as the metric on pure states satisfies some simple continuity properties, the induced quantum $p$-Wasserstein has many desirable properties expected from a good quantisation of the Wasserstein distance. In particular, we show that our definition recovers widely studied metrics on the set of quantum states, such as the trace distance, and is closely related to De Palma's quantum Wasserstein distance of order $1$. Furthermore, we show that it offers a natural way of extending distances from pure to mixed states in a way that preserves all the symmetries of the original metric and recovers its value on pure states. We also discuss some interesting ways in which our quantised distance behaves differently from its classical counterpart.

The main aim of this work is to cover a gap in the existing literature by expanding the theory on $p$-Wasserstein distances beyond the established orders $p=1$ and $p=2$, and to establish a general definition that is not restricted by the structure or natural geometry of the space to which it is applied. The approach given defines a quantum $p$-Wasserstein distance in very broad generality, allowing for much greater flexibility in application. This flexibility is enhanced by its ability to adapt to any underlying geometry on the system's projective Hilbert space subject to the aforementioned simple continuity properties. It follows that these distances can be adapted to represent the transport cost with respect to different qualitative properties of quantum states, as long as those properties can be captured by a distance on the projective Hilbert space.

As an application of our new definition, we give a generalisation of Nielsen's complexity geometry~\cite{Nielsen2006} to mixed states that has a natural operational interpretation of measuring the complexity of mapping one mixed state to another. We then study the behaviour of this metric for various ensembles of random quantum states and identify phase transitions in the expected Wasserstein distance between them. More concretely, we show that if we look at reduced density matrices of regions whose size is at most one-ninth of the total size, then the complexity distance is exponentially small. In contrast, for large enough subsystems, the distance is maximal. This result can be interpreted as formalising the intuitive notion that small enough subsystems of random states are essentially maximally mixed and trivial from a complexity point of view, whereas large enough subsystems are highly entangled and complex. This generalisation also allows us to obtain bounds which reflect the exponential nature of complexity in the number $n$ of qubits, in contrast with existing methods from quantum Wasserstein distances which can only give bounds which are linear in $n$.

We also study De Palma et al's $W_1^H$ distance under random quantum states, showing that they are essentially maximally far apart. Previous works~\cite{rouze2021learning} hinted at the possibility that the $W_1^H$ captures how well states can be distinguished by local observables. However, as random states cannot be distinguished by local observables, and as our results show that random states can be distinguished by the $W_1^H$ distance, we see that this original intuition is false. This mirrors a result found concurrently in \cite{depalma2023shadows}. This shows that the $W_1^H$ distance, and in turn the set of Lipschitz observables defined from its dual, have a significantly richer structure than previously expected.

Finally, we discuss three further applications of our distance: an operational interpretation of our newly introduced distance in terms of classical-quantum sources, analysis of the noise in a channel in terms of its hypercontractivity under the quantum $p$-Wasserstein distance, and theoretical improvements to the qualitative accuracy of quantum Wasserstein generative adversarial networks (qWGANs) using $p$-Wasserstein distances. Our hypercontractive inequalities have the advantage of not requiring the underlying quantum channels to have a faithful fixed point, an issue with other approaches~\cite{bardet2018hypercontractivity}, while still allowing for some the typical applications of hypercontractivity like concentration inequalities.

Each of these is an application which is only possible thanks to the broad flexibility of this new definition. We are able to discuss arbitrary $p^{\text{th}}$ moments of the distance between the output of classical-quantum sources thanks to the definition of $p$-Wasserstein distances for arbitrary $p$, something which has not yet been possible. 

Theoretical improvements to qWGANs come from the flexibility in terms of the underlying distance on the projective Hilbert space. And discussing hypercontractive properties of quantum channels requires comparing $p$-Wasserstein distances of two different orders $p_1, p_2$, which has been made possible by this new definition. Although existing definitions with enough basic properties to be usable cover both $p=1$ and $p=2$, these are applicable in wildly different contexts, and so direct comparison is not feasible without the new definition presented in this work.

\section{Basic concepts and notation}

\subsection{Classical optimal transport}

In the classical setting, the domain of optimal transport introduced in~\cite{monge1781memoire} is well-developed and finds applications in a wide variety of areas. 
The focus is to understand the cost of transporting one measure onto another and find the minimal cost of such transportation. More formally, for measurable spaces $\mathcal{X}, \mathcal{Y}$ with probability measures $\mu, \nu$ respectively, we assign a cost $c: \mathcal{X} \times \mathcal{Y} \to \mathbb{R}_{\geq 0}$ which represents the cost $c(x,y)$ of moving one unit of measure from $x$ to $y$. From here we optimise the total cost of transporting $\mu$ onto $\nu$ for each coupling~\cite{Kantorovich1942} $\gamma$ of $\mu$ and $\nu$.

\begin{defn}
Let $\mu, \nu$ be probability measures on measurable spaces $\mathcal{X},\mathcal{Y}$ respectively. A \textbf{coupling} of $\mu$ and $\nu$ is a measure $\gamma$ on $\mathcal{X} \times \mathcal{Y}$ such that for all measurable sets $\mathcal{A}\subseteq \mathcal{X}$, we have $\gamma(\mathcal{A}\otimes \mathcal{Y}) = \mu(\mathcal{A})$, and all measurable sets $\mathcal{B} \subseteq \mathcal{Y}$ we have $\gamma(\mathcal{X}\otimes \mathcal{B}) = \nu(\mathcal{B})$. Denote by $\mathcal{C}(\mu,\nu)$ the set of couplings of $\mu$ with $\nu$.
\end{defn}

In other words, a coupling is a measure $\gamma$ on $\mathcal{X} \otimes \mathcal{Y}$ with marginals $\mu$ and $\nu$. For a cost function $c$ on $\mathcal{X}\times \mathcal{Y}$, we can then define the \textbf{transport cost} of a coupling $\gamma$ as
\begin{equation} \label{eq:class_trans_cost}
    \mathcal{T}(\gamma) = \int_{\mathcal{X}\times \mathcal{Y}} c(x,y) \mathrm{d}\gamma(x,y).
\end{equation}
The optimal transport cost of transporting $\mu$ and $\nu$ is then given by minimising \eqref{eq:class_trans_cost} over all couplings $\gamma$. Throughout we will only consider the case where $\mathcal{X} = \mathcal{Y}$, $\mathcal{X}$ is equipped with a metric $d$, and the cost function $c$ is a power of the metric $d$. This leads to the definition of a classical Wasserstein distance~\cite[p.~207]{villani2003topics}.

\begin{defn}
    Let $(\mathcal{X},d)$ be a measurable metric space with probability measures $\mu$, $\nu$. Given $p\geq1$, the $p^{\text{th}}$\textbf{-order Wasserstein distance} between $\mu$ and $\nu$ is given by
    \begin{equation} \label{eq:class_wass_dist}
        \mathcal{W}_p(\mu,\nu) = \left(\inf_{\gamma \in \mathcal{C}(\mu,\nu)} \int_{\mathcal{X}\times\mathcal{X}} d(x,y)^p \mathrm{d}\gamma(x,y) \right)^{1/p}.
    \end{equation}
\end{defn}

This notion also extends to $p = \infty$ in the following way:
\begin{defn}
    Let $(\mathcal{X},d)$ be a measurable metric space with probability measures $\mu$, $\nu$. The \textbf{infinite-order Wasserstein distance} between $\mu$ and $\nu$ is given by
    \begin{equation} \label{eq:class_wass_dist_infty}
        \mathcal{W}_\infty (\mu,\nu) = \inf_{\gamma \in \mathcal{C}(\mu,\nu)} \sup_{(x,y) \in \text{{\em  supp}}(\gamma)} d(x,y).
    \end{equation}
\end{defn}

It is desirable to generalise these distances to quantities on quantum states to mirror many of their classical applications. We're particularly interested in generalisations of the case where the underlying metric $d$ is the discrete metric, for which $\mathcal{W}_p$ is the total variation distance, and the case where the underlying metric is the Hamming distance on the hypercube, for which $\mathcal{W}_1$ is Ornstein's $\bar{d}$-distance~\cite{ornstein1973}. These are arguably the most widely used distances in this context so far and a good generalisation should also recover them.

\subsection{Quantum information framework}

We represent a quantum system by a separable Hilbert space $\mathcal{H}$ over $\mathbb{C}$. Pure states on this system are modelled as elements $\ket{\psi}$ of the projective space $\mathbb{P}\mathcal{H}$, where $\mathbb{P}\mathcal{H} = (\mathcal{H} \setminus \{0\})/\mathbb{C}$. We notate the Hermitian conjugate of such a pure state by $\bra{\psi}$, and then in a slight abuse of notation we write $\ketbra{\psi}{\psi}$ for the projection onto $\ket{\psi}$, taking the convention $\text{Tr}[\ketbra{\psi}{\psi}] = 1$. We also use the term `pure state' to refer to this projection in the set of operators on $\mathcal{H}$, and the distinction will be clear from the context. For most of this manuscript, and in particular the examples and applications, we will discuss finite-dimensional Hilbert spaces.

The set of mixed states on $\mathcal{H}$ is defined as the set of positive semidefinite linear operators of trace $1$, otherwise known as density operators, and we will denote this set by $\mathcal{D}(\mathcal{H})$. These operators are all compact, and so can be decomposed spectrally in the form
\begin{equation} \label{eq:mixed_state_def}
    \rho = \sum_j q_j\ketbra{\psi_j}{\psi_j}
\end{equation}
where $\sum_j q_j = 1$ and $q_j > 0$.
The term `state' could refer to either an element $\ket{\psi}$ of $\mathbb{P}\mathcal{H}$ or a trace one positive semidefinite linear operator $\rho$, and the meaning will again be clear from the context.

The combination of two systems $\mathcal{H}_1$ and $\mathcal{H}_2$ is represented by their tensor product $\mathcal{H}_1 \otimes \mathcal{H}_2$. Their projective space is then $\mathbb{P}(\mathcal{H}_1 \otimes \mathcal{H}_2)$. Similarly, the set of density operators on this bipartite system is $\mathcal{D}(\mathcal{H}_1 \otimes \mathcal{H}_2)$.

The quantum equivalent of taking the marginal is the partial trace operation. Using this, we can define a quantum coupling of states $\rho$, $\sigma$.

\begin{defn} \label{defn:quantum_coupling}
    Let $\rho$, $\sigma$ be quantum states on Hilbert spaces $\mathcal{H}_1, \mathcal{H}_2$ respectively. A \textbf{quantum coupling} of $\rho$ with $\sigma$ is a quantum state $\tau$ on $\mathcal{H}_1 \otimes \mathcal{H}_2$ such that
    \begin{equation} \label{eq:quantum_coupling_def}
        \text{{\em Tr}}_2 \tau = \rho \qquad \text{and} \qquad \text{{\em Tr}}_1 \tau = \sigma.
    \end{equation}
\end{defn}
As we focus primarily on distances in this manuscript, we will almost always refer to the case where $\mathcal{H}_1 = \mathcal{H}_2$.
Throughout this work, we will also reference a few key objects from quantum information theory and quantum computation.

\begin{defn}
    For states $\rho$ and $\sigma$ on Hilbert space $\mathcal{H}$, their \textbf{trace distance} is half the trace norm of their difference, defined by
    \begin{equation}
        d_1(\rho,\sigma) = \frac{1}{2} \norm{\rho-\sigma}_1 = \frac{1}{2} \text{{\em Tr}}\left[\sqrt{(\rho-\sigma)^{\dagger}(\rho-\sigma)}\right].
    \end{equation}
We also have the equivalence
    \begin{equation}
        d_1(\rho,\sigma) = \sup_{P^\dagger = P, 0\leq P \leq I} \text{{\em Tr}}[P(\rho-\sigma)].
    \end{equation}
\end{defn}

\begin{defn}
    Let $\mathcal{U}(n)$ be the group of $n \times n$ unitary matrices. The \textbf{Haar measure} \cite{Weil1940} on $\mathcal{U}(n)$, denoted $\mu_{\text{{\em Haar}}}$, is the unique measure on the group $\mathcal{U}(n)$ which is invariant under left-multiplication and for which $\mu_{\text{{\em Haar}}}(\mathcal{U}(n)) = 1$.
\end{defn}

The Haar measure on the unitary group induces a measure on $\mathbb{P}\mathcal{H}$ for $\mathcal{H}$ of dimension $n$. This measure has the form of the pushforward $M_{\#} \mu_{\text{Haar}}$ for the map $M(U) = U\ket{0}$ for some fixed $\ket{0}$. The choice of $\ket{0}$ does not change the final measure. We refer to this as the Haar measure on $\mathbb{P}\mathcal{H}$.

\begin{defn}
    The $2 \times 2$ \textbf{Pauli operators} $X, Y, Z$ are given by the matrices
    \begin{equation} \label{eq:2by2Paulis}
        X = \begin{pmatrix}
            0 & 1 \\ 1&0 \end{pmatrix} \qquad
        Y = \begin{pmatrix}
            0 & -i \\ i&0 \end{pmatrix} \qquad
        Z = \begin{pmatrix}
            1 & 0 \\ 0 & -1 \end{pmatrix}
    \end{equation}
seen as unitary transformations with respect to the computational basis. On the $n$-fold tensor product of $\mathbb{C}^2$, a \textbf{Pauli string} is a product $\sigma_1 \otimes \dots \sigma_n$ where each $\sigma_i \in \{\mathbb{I},X,Y,Z\}$. The \textbf{weight} of such a string is the number of indices for which $\sigma_i \neq \mathbb{I}$, so the weight of $\mathbb{I} \otimes X \otimes \mathbb{I}$ is one and the weight of $X \otimes Y \otimes Z$ is three.
\end{defn}

\section{Motivations and definitions}

To generalise the Wasserstein distances to the quantum setting in a way that is useful, it is desirable to replicate as many basic properties of the classical definition as possible. To allow this, we note a key property of the classical Wasserstein distances. For all orders $p \geq 1$ and all $x, y \in \mathcal{X}$, the $\mathcal{W}_p$ distance between point masses agrees with the metric distance. That is, for Dirac measures $\delta_x$ and $\delta_y$, we have

\begin{equation}
\mathcal{W}_p(\delta_x,\delta_y) = d(x,y).    
\end{equation}

It follows, therefore, that the classical Wasserstein distances are determined by their values on point masses. We also see that we can recover the underlying metric $d$ from the Wasserstein distance. 

This property of determination by point masses is the key motivation of this work. For any quantum Wasserstein distance which is defined for general order $p$, the distances between the quantum versions of point masses (pure states) should be independent of $p$, and should agree with the underlying metric on the space if one exists. This definition of a quantity on mixed states stemming from a quantity on pure states is akin to the convex roof construction in the definition of entanglement measures on mixed states, as in \cite{Toth_2015, Wei2003, Coffman_2000, Wootters1998}.

To replicate this property, consider a Hilbert space $\mathcal{H}$ with projective space $\mathbb{P}\mathcal{H}$, equipped with metric $d$. In this setting, for all orders $p \geq 1$ and states $\ket{\psi}$ and $\ket{\varphi}$ we would expect a quantum Wasserstein distance $W_p^d$ to satisfy

\begin{equation}
    W_p^d(\ketbra{\psi}{\psi},\ketbra{\varphi}{\varphi}) = d(\ket{\psi},\ket{\varphi})
\end{equation}
taking the convention that $\ketbra{\psi}{\psi}$ and $\ketbra{\varphi}{\varphi}$ both have unit trace.
Given this motivation for a property the distance should satisfy, it remains to generalise the definition of a transport cost in \eqref{eq:class_trans_cost} to the quantum setting. One possible generalisation is given in the definition below.

\begin{defn}  Let $\mathcal{H}_1$, $\mathcal{H}_2$ be separable Hilbert spaces with projective spaces $\mathbb{P}\mathcal{H}_1$, $\mathbb{P}\mathcal{H}_2$ respectively, and let $\rho \in \mathcal{D}(\mathcal{H}_1)$, $\sigma \in \mathcal{D}(\mathcal{H}_2)$. We define a \textbf{quantum transport plan} between $\rho$ and $\sigma$ as any countable set $Q$ of triples
\begin{equation} \label{eq:defn_transport_plan}
    Q = \{ (q_j, \ket{\psi_j},\ket{\varphi_j}) \}_{j \in J}
\end{equation}
 such that
\begin{equation}
    \sum_{j \in J} q_j \ketbra{\psi_j}{\psi_j} = \rho \qquad \text{\em and} \qquad \sum_{j \in J} q_j \ketbra{\varphi_j}{\varphi_j} = \sigma. \label{eq:quantum-transport-plan}
\end{equation}
where $q_j > 0$, and taking convention $\text{{\em Tr}}\ketbra{\psi_j}{\psi_j} = \text{{\em Tr}}\ketbra{\varphi_j}{\varphi_j}= 1$. This is equivalent to an expression of a quantum coupling $\tau$ of $\rho$ and $\sigma$ as a convex combination of pure bipartite states
 \begin{equation} \label{eq:example_transport_plan}
     \tau = \sum_{j \in J} q_j \ketbra{\psi_j}{\psi_j} \otimes \ketbra{\varphi_j}{\varphi_j}
 \end{equation}
    where the indexing set $J$ is countable.
\end{defn}
We denote the set of all quantum transport plans between $\rho$ and $\sigma$ by $\mathcal{Q}(\rho,\sigma)$. Note that $\mathcal{Q}(\rho,\sigma) \neq \emptyset$; indeed for spectral decompositions $\rho = \sum_{j \in J} \lambda_j \ketbra{\psi_j}{\psi_j}$ and $\sigma = \sum_{k \in K} \mu_k \ketbra{\varphi_k}{\varphi_k}$ we have 
\begin{equation}
Q = \{ (\lambda_j \mu_k, \ket{\psi_j}, \ket{\varphi_k} ) \}_{j \in J, k \in K} \in \mathcal{Q}(\rho,\sigma).    
\end{equation}
Transport plans can be defined in any way satisfying the criteria above. However when discussing the number of elements in a finite transport plan, we say that two elements $(q,\ket{\psi},\ket{\varphi})$, $(q',\ket{\psi},\ket{\varphi})$ transporting the same states never appear in the same transport plan: any plan written as such is implied to contain the element $(q+q',\ket{\psi},\ket{\varphi})$ in their place.

This equivalence between transport plans and ways of writing couplings reflects the classical case, although we note that one quantum coupling could give rise to multiple quantum transport plans.  For example, the separable quantum coupling $\tau = \ketbra{0}{0} \otimes \frac{\mathbb{I}}{2}$ between $\ketbra{0}{0}$ and $\frac{\mathbb{I}}{2}$ on a two-qubit space gives rise to transport plans $\{(1/2,\ket{0},\ket{0}),(1/2,\ket{0},\ket{1})\}$ and $\{(1/2,\ket{0},\ket{+}),(1/2,\ket{0},\ket{-})\}$.

In building quantum transport plans, we can also refer to \textbf{partial quantum transport plans}. This is a quantum transport plan where $\rho$ and $\sigma$ are instead positive semidefinite operators of equal trace at most $1$, and the partial plan transports the `partial state' $\rho$ onto the `partial state' $\sigma$. 

We can then use this notion of a quantum transport plan to replicate the classical definitions of transport cost and Wasserstein distance in the quantum setting. The concept of a quantum transport plan is defined for all separable Hilbert spaces $\mathcal{H}_1, \mathcal{H}_2$, but from this point forwards we require $\mathcal{H}_1 = \mathcal{H}_2$ in order to have a metric between elements of transport plans. From here onwards we refer to these both as $\mathcal{H}$.

\begin{defn}
Let $\mathcal{H}$ be a separable Hilbert space and let $d$ be a distance on $\mathbb{P}\mathcal{H}$. Let $p \geq 1$. For any transport plan $Q = \{ (q_j, \ket{\psi_j},\ket{\varphi_j})\}_{j \in J}$ we define its $p^{\textbf{th}}$\textbf{-order quantum transport cost} as

\begin{equation} \label{eq:quantum_trans_cost}
T_p^d(Q) = \sum_{j \in J} q_j d \left( \ket{\psi_j},\ket{\varphi_j} \right)^p.
\end{equation}

The $p^{\textbf{th}}$\textbf{-order quantum Wasserstein distance} on $\mathcal{D}(\mathcal{H})$ is then defined as
\begin{equation} \label{eq:quantum_wass_dist}
    W_p^d (\rho,\sigma) = \left( \inf_{Q \in \mathcal{Q}(\rho,\sigma)} T_p^d(Q) \right)^{1/p}.
\end{equation}

\end{defn}

We will see in Proposition \ref{prop:trans_plan_inf_attained} that for finite-dimensional $\mathcal{H}$ and for $d$ continuous, the infimum in this definition is equivalent to the minimum. We will also see in that proposition that the minimum is attained at a transport plan with at most $2D^2$ elements, where $D$ is the dimension of the Hilbert space $\mathcal{H}$. However, we do not strictly limit the number of states in the definition of a transport plan in finite dimension. This is because defining a plan $Q$ from $\rho$ to $\sigma$ upper bounds the Wasserstein distance $W_p^d(\rho,\sigma)$ by $T_p^d(Q)^{1/p}$, and we do not want to restrict the set of plans for which this is possible.

Note that we have restricted to transporting from pure states in $\mathbb{P}\mathcal{H}$ to pure states on $\mathbb{P}\mathcal{H}$ above. This is because there is no natural extension of the distance $d$ to a transportation cost for entangled pure states on $\mathcal{H} \otimes \mathcal{H}$. In Appendix~\ref{appendix:entangled_transport} we discuss one possible extension and then argue that, at least for $p=1$, couplings with entangled states are not advantageous and we can restrict to the couplings defined above without loss of generality.

We could also have considered transport plans defined by an integrable measure $q$ on $\mathbb{P}\mathcal{H}_1 \times \mathbb{P}\mathcal{H}_2$. We discuss this possibility in Appendix~\ref{appendix:integrable_plans}, and show these do also not give an advantage over the transport plans as defined above. However, note that it is still possible to upper bound $W_p^d(\rho,\sigma)$ by $T_p^d(q)^{1/p}$ for such a $q$, where $T_p^d(q)$ is defined in Appendix~\ref{appendix:integrable_plans}.

As with many other ordered distances, we can also define the infinite-order quantum Wasserstein distance.

\begin{defn} \label{def:inf_order_wass_dist}
Let $\mathcal{H}$ be a separable Hilbert space and let $d$ be a distance on $\mathbb{P}\mathcal{H}$. For any states $\rho, \sigma \in \mathcal{D}(\mathcal{H})$, we define their \textbf{infinite-order quantum Wasserstein distance} as
    \begin{equation} \label{eq:quantum_wass_dist_infty}
        W_{\infty}^d (\rho,\sigma) = \inf_{Q \in \mathcal{Q(\rho,\sigma)}} \sup_{j \in J} d(\ket{\psi_j},\ket{\varphi_j})
    \end{equation}
    for quantum transport plans $Q = \{(q_j,\ket{\psi_j},\ket{\varphi_j})\}_{j \in J}$.
\end{defn}

As a natural consequence of the convex nature of these definitions, we see that this family of Wasserstein distances is monotone in $p$. Indeed, for any transport plan $Q$ between $\rho$ and $\sigma$ we have for $p_1 < p_2$ that
\begin{align} 
    T_{p_1}(Q)^{1/{p_1}} 
        &= \left(\sum_{j \in J} q_j d(\ket{\psi_j}, \ket{\varphi_j})^{p_1} \right)^{1/{p_1}} \\
        &\leq \left(\left(\sum_{j \in J} q_j \left(d(\ket{\psi_j}, \ket{\varphi_j})^{p_1}\right)^{p_2 / p_1} \right)^{1/{p_1}}\right)^{p_1 / p_2}\\
        &= \left(\sum_{j \in J} q_j d(\ket{\psi_j}, \ket{\varphi_j})^{p_2} \right)^{1/{p_2}} \\
        &= T_{p_2}(Q)^{1/{p_2}} 
\end{align}
by an application of Jensen's inequality. Taking the infimum over $Q \in \mathcal{Q}(\rho,\sigma)$, we get 
\begin{equation}
    W_{p_1}^d(\rho,\sigma) \leq W_{p_2}^d(\rho,\sigma). \label{eq:hierarchy_in_p}
\end{equation}
The same applies when $p_2 = \infty$, as $T_{p_1}(Q)^{1/p_1} \leq \max_{j \in J} d(\ket{\psi_j},\ket{\varphi_j})$.

This definition is given for all $1 \leq p \leq \infty$. However, at least in the classical case, the most important cases to consider for interesting applications are $p = 1, 2, \text{ and } \infty$~\cite{panaretos2019review, villani2008optimal, trashorras2018}.

Although many desirable properties of these $W_p^d$ are proven in Sec.~\ref{section:general_properties}, establishing a triangle inequality remains a difficult open problem. This is true of many attempts to establish a quantum generalisation of the classical Wasserstein distances, and further discussion of attempts to establish this are discussed in Appendix~\ref{appendix:tri_ineq}. However, by looking at the dual picture in Sec.~\ref{section:dual_picture}, we establish a closely related norm on the space of traceless Hermitian operators for which $W_1^d$ is an upper bound, and which agrees with the original distance $d$ on differences between pure states. For order $p = 1$ this norm allows a trade-off between ease of proving the triangle inequality, and operational interpretation in terms of moving mass. This dual formulation closely mirrors the Kantorovich-Rubinstein theorem in the classical case~\cite[p. 34]{villani2003topics}.
Furthermore, we show in Sec.~\ref{sec:special_instances} that it allows us to essentially recover the trace distance and the definition of \cite{de2020quantum}, thus our results on the dual definition give strong evidence that this $W_p^d$ is a good quantum generalisation.

\subsection{Other approaches}

Many proposals have been suggested for generalisations of the classical Wasserstein distance to quantum states, in various contexts, to varying degrees of success. Arguably the most desirable classical metrics to generalise are the discrete metric on a finite space and the Hamming distance on the hypercube.

The most prominent example is perhaps the first-order distance defined in~\cite{de2020quantum} which generalises the $\mathcal{W}_1$ distance on the hypercube whose underlying geometry is the Hamming distance.
They define states $\rho,\sigma$ on $(\mathbb{C}^d)^{\otimes n}$ to be {\em neighbouring} if there exists some qudit $i$ for which $\text{Tr}_i[\rho-\sigma] = 0$. In other words, $\rho$ and $\sigma$ are the same when qudit $i$ is traced out. The norm $\norm{\cdot}_{W_1^H}$ over traceless Hermitian operators is then defined such that its unit ball is the convex hull of all differences between neighbouring states. This means $\norm{\rho-\sigma}_{W_1^H}$ can be written as
\begin{equation} \label{eq:depalma_defn}
    \norm{\rho-\sigma}_{W_1^H} = \min \left\{\sum_{i=1}^n c_i : c_i > 0, \sum_{i=1}^n c_i (\rho^{(i)} - \sigma^{(i)}) = \rho-\sigma, \text{Tr}_i[\rho^{(i)}-\sigma^{(i)}] = 0 \right\}.
\end{equation}
This is a true metric and it has led to many applications such as in quantum spin systems~\cite{De_Palma_2023} and variational quantum algorithms~\cite{de2022limitations}. However, the approach taken is very specific to the Hamming distance and does not lend itself to general transport costs.

For order $p=2$, a number of definitions have been proposed in various contexts, such as~\cite{Golse2015OnTM} which defines a second-order optimal transport cost in the context of mean field limits that was shown in~\cite{caglioti2021on} to have links with the Brenier formulation of classical optimal transport~\cite[p.\ 238]{villani2003topics}.

More precisely, for a set of Hermitian operators $\{R_1, \dots, R_K\}$ on $\mathcal{H}$, the second-order transport cost of a coupling $\Pi$ on $\mathcal{H} \otimes \mathcal{H}$ by
\begin{equation} 
    \mathcal{C}(\Pi) = \sum_{i=1}^K \text{Tr}[(R_i \otimes \mathbb{I} - \mathbb{I} \otimes R_i)\Pi (R_i \otimes \mathbb{I} - \mathbb{I} \otimes R_i)] 
\end{equation}
from which a second-order Wasserstein distance is derived by taking the square root of the infimum of the cost of all couplings between $\rho$ and $\sigma$ in the usual way. Defining quantity $d(\ket{\psi},\ket{\varphi}) = \sum_{i=1}^K \norm{R_i\ket{\psi}}^2+\norm{R_i\ket{\varphi}}^2 - 2\braket{\psi | R_i | \psi}\braket{\varphi|R_i|\varphi}$, we see that this is a variation of our definition above, although with a specific $d$ which is not a true metric, and with the infimum taken over all couplings as opposed to just those which are separable.

Following the Brenier formulation more closely, the distance in~\cite{Carlen_2014} is also a second-order distance and has been used~\cite{datta2020relating} to prove a quantum version of the HWI inequality. Specifically, the 2-Wasserstein distance is defined here as the geodesic distance on the set of full-rank states equipped with a Riemannian metric defined by the continuity equation. This definition is a world away from our framework, as it leans heavily into the intricacies of the classical dynamical formulation.

We should also mention the definition in~\cite{chakrabarti2019quantum} and refined in~\cite{muellerhermes2022monotonicity} which defines a second-order cost based on couplings with a specific cost function given by an asymmetric projection. While it is conjectured that this gives a true distance, we show in Appendix~\ref{appendix:asym_bad} that this definition cannot be extended to other underlying geometries, such as analogues of the Hamming distance on multipartite spaces. Further approaches include the more flexible~\cite{DePalma2021}, an alternative to \cite{Golse2015OnTM}, which defines a second-order distance based on couplings that is not faithful. It has been conjectured~\cite{trevisanOTQS2022} that a natural modification of this quantity is a true distance, though this remains an open problem. A similarly flexible approach appears in~\cite{Friedland_2022}, following a naïve translation of the classical formulation into the quantum setting. This definition takes a cost matrix $C$ on $\mathcal{H} \otimes \mathcal{H}$, and defines a Wasserstein distance by
\begin{equation}
    W_p(\rho,\sigma) = \left(\inf_{\tau \in \mathcal{C}(\rho,\sigma)}\text{Tr}[C^p \tau] \right)^{1/p}
\end{equation}
where $\mathcal{C}(\rho,\sigma)$ is the set of quantum couplings of $\rho$ with $\sigma$. The particular case of $C$ being the projection onto the asymmetric subspace is studied in detail and coincides with the definition in~\cite{chakrabarti2019quantum}. In the general case, however, this is not shown to be a semidistance. 

Each of the approaches seems to be generalising one particular aspect or application of Wasserstein distances to the noncommutative setting, be it obtaining a distance for a given value of $p$ or a given underlying geometry of the Hilbert space. However, it is not clear how they relate to each other or how to extend them beyond their original setting. The definition in this work adapts to any order $p$ and any underlying metric $d$ on the set of pure states of the Hilbert space provided that $d$ satisfies a few basic continuity properties. This broad flexibility allows us to talk about the moments of the cost of moving between classical-quantum sources in great generality (Sec.~\ref{section:operational}) and also allows us to talk about the noise of an operator by comparing transport distances of different orders in an analogue of hypercontractivity (Sec.~\ref{section:hypercontractivity}).

To avoid confusion, we give in Table \ref{tab:W} a summary of all relevant Wasserstein distances used throughout this work.
\begin{table}[ht]
    \centering
    \begin{tabular}{c|p{0.7\textwidth}} \hline
        Notation & Definition \\ \hline \hline
        $\mathcal{W}_p^d(\mu,\nu)$ & The \textbf{classical} Wasserstein distance of order $p$ between measures $\mu$, $\nu$ on metric space $(X,d)$; see Equations \eqref{eq:class_wass_dist} and \eqref{eq:class_wass_dist_infty}.\\ \hline
        $W_p^d(\rho,\sigma)$ & The \textbf{quantum} Wasserstein distance of order $p$ between states $\rho$, $\sigma$ on Hilbert space $\mathcal{H}$ and metric $d$ on $\mathbb{P}\mathcal{H}$; see Equations \eqref{eq:quantum_wass_dist} and \eqref{eq:quantum_wass_dist_infty}. \\  \hline
        $\norm{\rho-\sigma}_{W_1^H}$ & The quantum Wasserstein norm of order one from~\cite{de2020quantum}; see Equation \eqref{eq:depalma_defn}. \\ \hline
        $W_p^H(\rho,\sigma)$ & The quantum Wasserstein distance of order $p$ on on $\mathcal{H} = (\mathbb{C}^d)^{\otimes n}$ defined by metric $d_H(\ket{\psi},\ket{\varphi}) = \norm{\ketbra{\psi}{\psi}-\ketbra{\varphi}{\varphi}}_{W_1^H}$; see Sec.~\ref{section:WpH}. Write $W_{p=1}^H$ when $p=1$ to avoid confusion with norm $\norm{\cdot}_{W_1^H}$. \\ \hline
        $W_p^1(\rho,\sigma)$ & The quantum Wasserstein distance of order $p$ on on $\mathcal{H} = \mathbb{C}^D$ defined by the trace distance $d_1(\ket{\psi},\ket{\varphi}) = \frac{1}{2}\norm{\ketbra{\psi}{\psi}-\ketbra{\varphi}{\varphi}}_{1}$; see Sec.~\ref{section:Wp1}. \\ \hline
        $W_p^C(\rho,\sigma)$ & The quantum Wasserstein distance of order $p$ on $\mathcal{H} = (\mathbb{C}^2)^{\otimes n}$ defined by the complexity geometry metric $d_C$ \cite{Nielsen2006}; see Sec.~\ref{section:WpC}. \\ \hline
    \end{tabular}
    \caption{An overview of the Wasserstein distances used in this work, and the relevant notation.}
    \label{tab:W}
\end{table}

\section{General properties} \label{section:general_properties}

The goals of this section are to prove some fundamental attributes of $W_p^d$ which will give an idea of how it behaves in the case of general $d$. Throughout we will require some basic regularity conditions on $d$. As we will see in more detail in the discussion following Cor. \ref{cor:Wp_nonneg_faithful}, not every metric $d$ on the set of pure states induces a faithful Wasserstein distance. Thus, we will restrict to metrics with respect to which the $2$-norm on the set of traceless self-adjoint operators is Hölder continuous. In other words, we require that there exists $\alpha \in (0,1]$ and $C>0$ for which for all $\ket{\psi},\ket{\varphi} \in \mathbb{P}\mathcal{H}$ we have
\begin{equation} \label{eq:holder_cts}
    Cd(\ket{\psi},\ket{\varphi})^{\alpha} \geq \norm{\ketbra{\psi}{\psi} - \ketbra{\varphi}{\varphi}}_2.
\end{equation}
Note that as we work in finite dimension, the $2$-norm is chosen here without loss of generality. In the case $\alpha = 1$, this is equivalent to the 2-norm being Lipschitz with respect to $d$. The constant $C$ is a function of the metric space $(\mathbb{P}\mathcal{H},d)$ and so can depend on the dimension of $\mathcal{H}$. For some other properties such as continuity, we will require that $d$ be continuous, noting that on $\mathcal{H}$ of finite dimension this is equivalent to $d$ being uniformly continuous.

Given the main philosophical motivation of this definition, that the transport distance between point masses in the classical setting is given by the underlying metric, it's important to show the quantum equivalent here. That is, for all orders $p$ the quantum Wasserstein distance between pure states agrees with the underlying metric.

\begin{prop} \label{prop:agree-on-pure-states}
    Let $\mathcal{H}$ be a separable Hilbert space with distance $d$ on $\mathbb{P}\mathcal{H}$. Then for pure states $\ketbra{\psi}{\psi}, \ketbra{\varphi}{\varphi}$, we have
\begin{equation}
    W_p^d(\ketbra{\psi}{\psi}, \ketbra{\varphi}{\varphi}) = d(\ket{\psi},\ket{\varphi}).
\end{equation}
\end{prop}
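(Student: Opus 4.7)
The plan is to prove the equality by establishing matching upper and lower bounds on $W_p^d(\ketbra{\psi}{\psi}, \ketbra{\varphi}{\varphi})$ directly from the definition.

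For the upper bound, I would simply exhibit the trivial one-element transport plan $Q^* = \{(1, \ket{\psi}, \ket{\varphi})\}$. This lies in $\mathcal{Q}(\ketbra{\psi}{\psi}, \ketbra{\varphi}{\varphi})$ since the marginal conditions in \eqref{eq:quantum-transport-plan} are trivially satisfied, and its cost is $T_p^d(Q^*) = d(\ket{\psi},\ket{\varphi})^p$. Taking the infimum over $\mathcal{Q}$ then gives $W_p^d(\ketbra{\psi}{\psi}, \ketbra{\varphi}{\varphi}) \leq d(\ket{\psi},\ket{\varphi})$.

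For the lower bound, the key fact is that a rank-one density operator admits only trivial decompositions as a convex combination of rank-one projectors: if $\sum_j q_j \ketbra{\psi_j}{\psi_j} = \ketbra{\psi}{\psi}$ with $q_j > 0$, then sandwiching this identity by any unit vector $\ket{\chi}$ orthogonal to $\ket{\psi}$ yields $\sum_j q_j |\langle \chi | \psi_j \rangle|^2 = 0$, forcing $\langle \chi | \psi_j \rangle = 0$ for every $j$ and every such $\ket{\chi}$. Hence each $\ket{\psi_j}$ is proportional to $\ket{\psi}$, i.e.\ represents the same element of $\mathbb{P}\mathcal{H}$ as $\ket{\psi}$. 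Applying the same argument to the second marginal, every $Q = \{(q_j,\ket{\psi_j},\ket{\varphi_j})\}_j \in \mathcal{Q}(\ketbra{\psi}{\psi}, \ketbra{\varphi}{\varphi})$ satisfies $[\ket{\psi_j}] = [\ket{\psi}]$ and $[\ket{\varphi_j}] = [\ket{\varphi}]$ in $\mathbb{P}\mathcal{H}$ for all $j$. Because $d$ is defined on $\mathbb{P}\mathcal{H}$ (and hence insensitive to global phase), this gives $d(\ket{\psi_j},\ket{\varphi_j}) = d(\ket{\psi},\ket{\varphi})$ for every $j$, so
\begin{equation}
    T_p^d(Q) = \sum_j q_j\, d(\ket{\psi},\ket{\varphi})^p = d(\ket{\psi},\ket{\varphi})^p,
\end{equation}
using $\sum_j q_j = 1$. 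Taking the infimum over $Q$ yields the matching lower bound.

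There is no substantial obstacle in this argument: the entire content reduces to the standard observation that pure states admit only trivial pure-state decompositions, so that the set $\mathcal{Q}(\ketbra{\psi}{\psi}, \ketbra{\varphi}{\varphi})$ effectively collapses and every transport plan has the same cost $d(\ket{\psi},\ket{\varphi})^p$. The statement for $p = \infty$ (although not part of the proposition) would follow identically by replacing the sum by a supremum in the last display.
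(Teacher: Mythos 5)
Your proof is correct and takes essentially the same route as the paper: the paper simply asserts that the only permitted transport plan between two pure states is the trivial one-element plan, and your orthogonal-complement argument is precisely the standard justification of that assertion (which the paper leaves implicit). No issues.
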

\begin{proof}
    As $\ketbra{\psi}{\psi}$ and $\ketbra{\varphi}{\varphi}$ are pure, the only permitted transport plan is $Q = \{(1,\ket{\psi},\ket{\varphi})\}$, which has cost 
    \begin{equation}
        T_p^d(Q) = d(\ket{\psi},\ket{\varphi})^p
    \end{equation}
    and therefore we have $W_p^d(\ketbra{\psi}{\psi}, \ketbra{\varphi}{\varphi}) = d(\ket{\psi},\ket{\varphi})$.
\end{proof}

We can then employ the enforced Hölder continuity condition to prove that $W_p^d$ is nondegenerate. This is captured in the following lemma. Zero self-distance in $W_p^d$ is also fairly clear, as is symmetry. Note that these fundamental properties have not always been present in previous definitions such as in \cite{DePalma2021} where the definition has non-zero self-distance.

\begin{lemma} \label{prop:more-than-norm}
    Let $\mathcal{H}$ be a separable Hilbert space with distance $d$ on $\mathbb{P}\mathcal{H}$ with respect to which the $2$-norm is Hölder continuous with constant $C > 0$ and exponent $\alpha \in (0,1]$. Then for all $\rho, \sigma \in \mathcal{D}(\mathcal{H})$, we have
    \begin{equation}
        W_p^d(\rho,\sigma) \geq \frac{1}{C}\norm{\rho-\sigma}_2^{1/\alpha}.
    \end{equation}
\end{lemma}
\begin{proof}
We have $Cd(\ket{\psi},\ket{\varphi})^{\alpha} \geq \norm{\ketbra{\psi}{\psi} - \ketbra{\varphi}{\varphi}}_2$ on $\mathbb{P}\mathcal{H}$, for $C>0$ and $\alpha \in (0,1]$.
Let $\left\{\left(q_j, \ket{\psi_j}, \ket{\varphi_j}\right) \right\}$ be a quantum transport plan between $\rho$ and $\sigma$. The functions $x \mapsto x^p$ and $X \mapsto \norm{X}_2^{1 / \alpha}$ for traceless $X$ are both convex. Hence
    \begin{align}
        \sum_j q_j d(\ket{\psi_j}, \ket{\varphi_j})^p 
            &\geq \left(\sum_j q_j d(\ket{\psi_j}, \ket{\varphi_j}) \right)^p \\ 
            &\geq \left(\sum_j q_j \frac{1}{C}\norm{\ketbra{\psi_j}{\psi_j}-\ketbra{\varphi_j}{\varphi_j}}_2^{1/\alpha} \right)^p \\
            &\geq \left(\frac{1}{C} \norm{\sum_j q_j (\ketbra{\psi_j}{\psi_j}-\ketbra{\varphi_j}{\varphi_j})}_2^{1/\alpha} \right)^p \\
            &= \frac{1}{C^p}\norm{\rho-\sigma}_2^{p/\alpha}.
    \end{align}
Hence taking the infimum and the $p^{\text{th}}$ root, we have $W_p^d(\rho,\sigma) \geq \frac{1}{C}\norm{\rho-\sigma}_2^{1/\alpha}$.
\end{proof}

\begin{cor} \label{cor:Wp_nonneg_faithful}
    Let $d$ be a distance on $\mathbb{P}\mathcal{H}$ with respect to which the $2$-norm is Hölder continuous. For all $\rho, \sigma \in \mathcal{D}(\mathcal{H})$, we have $W_p^d(\rho,\sigma) \geq 0$ with equality iff $\rho=\sigma$.
\end{cor}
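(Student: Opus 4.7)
The plan is to derive everything from Proposition \ref{prop:more-than-norm} together with an easy direct argument for the zero self-distance. Non-negativity is immediate: every transport plan cost $T_p^d(Q) = \sum_j q_j d(\ket{\psi_j},\ket{\varphi_j})^p$ is a non-negative combination of non-negative terms, so the infimum and its $p^{\text{th}}$ root are non-negative.

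For the $\rho = \sigma$ direction of the equality characterisation, I would exhibit an explicit transport plan of zero cost. Take any convex decomposition $\rho = \sum_j q_j \ketbra{\psi_j}{\psi_j}$ (for example the spectral decomposition) and use the diagonal plan $Q = \{(q_j, \ket{\psi_j}, \ket{\psi_j})\}_j \in \mathcal{Q}(\rho,\rho)$. Since $d$ is a metric we have $d(\ket{\psi_j},\ket{\psi_j}) = 0$ for every $j$, so $T_p^d(Q) = 0$ and therefore $W_p^d(\rho,\rho) = 0$.

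For the converse, suppose $W_p^d(\rho,\sigma) = 0$. By Proposition \ref{prop:more-than-norm}, Hölder continuity of the $2$-norm with constants $C > 0$ and $\alpha \in (0,1]$ gives
\begin{equation}
0 = W_p^d(\rho,\sigma) \geq \tfrac{1}{C}\norm{\rho-\sigma}_2^{1/\alpha},
\end{equation}
so $\norm{\rho-\sigma}_2 = 0$, hence $\rho = \sigma$.

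There is no real obstacle here; the Hölder assumption was built into Proposition \ref{prop:more-than-norm} precisely to force faithfulness, and the only thing worth double-checking is that the diagonal transport plan is allowed by the convention that identical triples are merged (which only matters if the chosen decomposition of $\rho$ has repeated $\ket{\psi_j}$'s, and merging them preserves the zero cost).
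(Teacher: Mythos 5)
Your proposal is correct and follows essentially the same route as the paper: non-negativity is immediate from the definition, the zero self-distance comes from the diagonal transport plan built on a spectral decomposition of $\rho$, and faithfulness is read off from Proposition \ref{prop:more-than-norm}. The extra remark about merging repeated triples is a harmless detail the paper leaves implicit.
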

\begin{proof}
    $W_p^d(\rho,\sigma) \geq 0$ is clear, and equality when $\rho = \sigma$ can be obtained from the taking transport plan $\{(c_j, \ket{\psi_j},\ket{\psi_j}) \}_{j \in J}$ where $\rho = \sum_{j \in J} c_j \ketbra{\psi_j}{\psi_j}$ is a spectral decomposition. Faithfulness is a direct consequence of Proposition \ref{prop:more-than-norm}.
\end{proof}

For a general distance $d$, not necessarily satisfying property \eqref{eq:holder_cts}, we get from the same proof above that $W_p^d(\rho,\sigma) \geq 0$ and $W_p^d(\rho,\rho) = 0$. It is not guaranteed, however, that a general distance on $\mathbb{P}\mathcal{H}$ leads to a nondegenerate $W_p^d$. For example, let $\mathcal{H} = \mathbb{C}^2$ with the standard basis $\{ \ket{0}, \ket{1}\}$. For any non-negative real-valued function $f$ on $\mathbb{P}\mathcal{H}$ with $f(\ket{0}) = 0$ and $f$ positive elsewhere, we can define a metric $d$ as follows:
\begin{equation}
    d(\ket{\psi},\ket{\varphi}) =
\begin{cases}
    0 & \text{if } \ket{\psi} = \ket{\varphi} \\
    f(\ket{\psi}) + f(\ket{\varphi}) & \text{otherwise}. \\
\end{cases}
\end{equation}
This forms a version of the SNCF metric, also known as the centralised railway metric ~\cite[p.~327]{Deza2013}, with $\ket{0}$ at the centre. This is a metric in which travel is only permitted along rays emanating from a central point. We will consider $f$ such that 
\begin{equation}
    f\left(\sqrt{\frac{n-1}{n}}\ket{0} + \sqrt{\frac{1}{n}}\ket{1}\right) = f\left(\sqrt{\frac{n-1}{n}}\ket{1} - \sqrt{\frac{1}{n}}\ket{0}\right) = \frac{1}{n}
\end{equation}
for $n \in \mathbb{N}$ and $f(\ket{\psi}) = 2$ otherwise. Consider the sequence of transport plans
\begin{equation}
   Q_n = \left\{\left(\frac{1}{2},\ket{0},\sqrt{\frac{n-1}{n}}\ket{0} + \sqrt{\frac{1}{n}}\ket{1}\right),\left(\frac{1}{2},\ket{0},\sqrt{\frac{n-1}{n}}\ket{1} - \sqrt{\frac{1}{n}}\ket{0}\right)\right\}
\end{equation}
for $n \in \mathbb{N}$. These are all plans which transport $\ketbra{0}{0}$ onto $\frac{\mathbb{I}}{2}$ and each has cost $T_1^d(Q_n) = \frac{1}{n}$. This gives $W_1^d\left(\ketbra{0}{0}, \frac{\mathbb{I}}{2}\right) \leq T_1^d(Q_n) = \frac{1}{n} \to 0$, and therefore $W_1^d\left(\ketbra{0}{0},\frac{\mathbb{I}}{2}\right) = 0$. We have shown that the Hölder continuity condition is sufficient for $W_p^d$ to be nondegenerate, but it is not immediately obvious whether or not it is necessary.

Having established that $W_p^d$ is at least a semidistance, we turn our attention to its basic behavioural properties. Firstly, we note that all symmetries of $(\mathbb{P}\mathcal{H},d)$ are inherited by the quantum Wasserstein distances.

\begin{prop} \label{prop:Wp_inherits_symmetries}
Let $1 \leq p \leq \infty$ and $d$ be a metric on $\mathbb{P}\mathcal{H}$. The group of unitary symmetries of the underlying metric $d$ is exactly the group of conjugational symmetries of $W_p^d$.
\end{prop}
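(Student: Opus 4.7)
The plan is to prove the two set inclusions separately, both using the transport-plan formulation directly.

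For the forward inclusion (unitary symmetry of $d \Rightarrow$ conjugational symmetry of $W_p^d$), I would fix a unitary $U$ with $d(U\ket{\psi}, U\ket{\varphi}) = d(\ket{\psi},\ket{\varphi})$ for all pure states, and observe that the map $\{(q_j, \ket{\psi_j}, \ket{\varphi_j})\}_j \mapsto \{(q_j, U\ket{\psi_j}, U\ket{\varphi_j})\}_j$ is a bijection between $\mathcal{Q}(\rho,\sigma)$ and $\mathcal{Q}(U\rho U^\dagger, U\sigma U^\dagger)$: the marginal conditions are preserved because $\sum_j q_j \ketbra{U\psi_j}{U\psi_j} = U\rho U^\dagger$, and this bijection preserves the transport cost term by term by hypothesis. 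Hence the two infima (or suprema, in the $p=\infty$ case) coincide, giving $W_p^d(U\rho U^\dagger, U\sigma U^\dagger) = W_p^d(\rho,\sigma)$.

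For the reverse inclusion, I would let $U$ be a conjugational symmetry of $W_p^d$ and apply it to pure states. By Proposition \ref{prop:agree-on-pure-states}, for any $\ket{\psi},\ket{\varphi}$,
\begin{equation}
d(U\ket{\psi}, U\ket{\varphi}) = W_p^d(U\ketbra{\psi}{\psi}U^\dagger, U\ketbra{\varphi}{\varphi}U^\dagger) = W_p^d(\ketbra{\psi}{\psi}, \ketbra{\varphi}{\varphi}) = d(\ket{\psi},\ket{\varphi}),
\end{equation}
so $U$ is a unitary symmetry of $d$.

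Finally, I would briefly note that both sides are genuinely groups under composition (closure under composition and inverses is immediate in each description), so the equality of sets is an equality of groups. There is no real obstacle here: the whole argument rests on the bijection of transport plans under conjugation, together with the already-proved fact that $W_p^d$ restricts to $d$ on pure states, so the proof is quite short.
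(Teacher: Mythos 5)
Your proof is correct and follows essentially the same route as the paper: a cost-preserving bijection of transport plans under conjugation for the forward inclusion, and an appeal to Proposition \ref{prop:agree-on-pure-states} on pure states for the converse. The extra remark about both sides being groups is a harmless addition the paper leaves implicit.
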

\begin{proof}
Let $U$ be a symmetry of $d$. $U$ is invertible, therefore there's a direct correspondence
    \begin{equation}
        \{(q_j,\ket{\psi_j},\ket{\varphi_j})\}_{j \in J} \longleftrightarrow \{(q_j, U\ket{\psi_j},U\ket{\varphi_j})\}_{j \in J}
    \end{equation}
between quantum transport plans from $\rho$ to $\sigma$ and from $U\rho U^\dagger$ to $U \sigma U^\dagger$. The distance $d$ is invariant under $U$ so cost is preserved under this correspondence, therefore the optimal cost is also preserved.

Conversely, let unitary $V$ be a conjugational symmetry of $W_p^d$. Then for all $\ket{\psi}, \ket{\varphi} \in \mathbb{P}\mathcal{H}$, we have
\begin{equation}
    d(\ket{\psi},\ket{\varphi}) = W_p^d(\ketbra{\psi}{\psi},\ketbra{\varphi}{\varphi}) =  W_p^d(V\ketbra{\psi}{\psi}V^\dagger ,V\ketbra{\varphi}{\varphi}V^\dagger ) = d(V\ket{\psi}, V\ket{\varphi})
\end{equation}
and so $V$ is a symmetry of $d$.
\end{proof}
This allows us to prove a result on data processing for mixed unitary channels. The data processing inequality is a central concept in both classical and quantum information theory, and represents the idea that you can't create new information by processing old data: all the information you have about an object is encoded in the rawest data you have. This concept also exists in classical optimal transport: for measures $\mu$ on a metric space $(X,d)$, measurable sets $A,B$ and a measurable map $f: X \to X$, we define the \textit{pushforward} $f_* \mu$ by $f_*\mu (B) = \mu(f_{-1}(B))$. If $f$ is $1$-Lipschitz, then $\mathcal{W}_p^d(\mu,\nu) \geq \mathcal{W}_p^d(f_* \mu, f_* \nu)$.
While many quantities in quantum information theory satisfy data processing for all quantum channels, in the case of the Wasserstein distances we would only expect the data processing inequality to be satisfied for those channels which respect the geometry of the underlying space. For example, for an arbitrary unitary map $U$ we would not expect the channel $\rho \mapsto U\rho U^\dagger$ to satisfy data processing unless $U$ has $d(\ket{\psi},\ket{\phi}) \geq d(U\ket{\psi},U\ket{\varphi})$ for all states. In general, we would only expect channels which respect the geometry of the underlying space to satisfy data processing. For general $d$, the largest class of channels from $\mathcal{D}(\mathcal{H}) \to \mathcal{D}(\mathcal{H})$
that respect the geometry of the underlying space is the class of mixed unitary channels whose unitaries are isometries of $(\mathbb{P} \mathcal{H},d)$.
In this case, we do indeed have data processing as follows:

\begin{prop}
    Suppose $\Phi$ is a mixed unitary channel written as a countable sum
    \begin{equation}
        \Phi(\rho) = \sum_{k \in K} a_k U_k \rho U_k^\dagger \qquad a_k \geq 0, \sum_{k \in K}a_k=1
    \end{equation}
    for which multiplication by each $U_k$ is an isometry with respect to $d$. Then for all orders $1 \leq p \leq \infty$ and all states $\rho$, $\sigma$, we have $W_p^d(\rho,\sigma) \geq W_p^d(\Phi(\rho),\Phi(\sigma))$.
\end{prop}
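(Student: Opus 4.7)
The plan is to take a (near-)optimal transport plan for $(\rho,\sigma)$ and build an explicit transport plan for $(\Phi(\rho), \Phi(\sigma))$ with the same or smaller cost by pushing each pair of pure states through the random unitary. Specifically, given $Q = \{(q_j, \ket{\psi_j}, \ket{\varphi_j})\}_j \in \mathcal{Q}(\rho,\sigma)$, I would define
\begin{equation}
Q' = \{(a_k q_j, U_k\ket{\psi_j}, U_k \ket{\varphi_j})\}_{j, k \in K}.
\end{equation}
The first routine step is to verify that $Q' \in \mathcal{Q}(\Phi(\rho), \Phi(\sigma))$: summing the first marginal one gets $\sum_k a_k U_k \left(\sum_j q_j \ketbra{\psi_j}{\psi_j} \right) U_k^\dagger = \sum_k a_k U_k \rho U_k^\dagger = \Phi(\rho)$, and likewise for $\sigma$.

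The key step is to bound the cost. Since each $U_k$ is a symmetry of $d$, we have $d(U_k \ket{\psi_j}, U_k \ket{\varphi_j}) = d(\ket{\psi_j}, \ket{\varphi_j})$ for every $j, k$. For finite $p$, this gives
\begin{equation}
T_p^d(Q') = \sum_{j,k} a_k q_j\, d(U_k\ket{\psi_j}, U_k\ket{\varphi_j})^p = \left(\sum_k a_k\right) \sum_j q_j\, d(\ket{\psi_j}, \ket{\varphi_j})^p = T_p^d(Q),
\end{equation}
using $\sum_k a_k = 1$. Taking the infimum over $Q$ then yields $W_p^d(\Phi(\rho),\Phi(\sigma))^p \leq W_p^d(\rho,\sigma)^p$, and raising to the $1/p$ power finishes this case. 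For $p=\infty$, the analogous calculation replaces the sum by a supremum: $\sup_{j,k} d(U_k \ket{\psi_j}, U_k\ket{\varphi_j}) = \sup_j d(\ket{\psi_j}, \ket{\varphi_j})$, so the same inequality follows from Definition \ref{def:inf_order_wass_dist}.

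There is no real obstacle here; the only point requiring care is that we are allowed to take $Q$ to achieve (or approach) the infimum defining $W_p^d(\rho,\sigma)$. If $d$ is continuous one can invoke the fact (asserted in the text, via Proposition \ref{prop:trans_plan_inf_attained}) that the infimum is attained; otherwise the argument goes through by choosing, for each $\varepsilon > 0$, a plan $Q$ with $T_p^d(Q) \leq W_p^d(\rho,\sigma)^p + \varepsilon$, constructing $Q'$ as above, and letting $\varepsilon \to 0$.
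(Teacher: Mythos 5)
Your proposal is correct and is essentially identical to the paper's own proof: both push a transport plan $Q$ forward to $Q' = \{(q_j a_k, U_k\ket{\psi_j}, U_k\ket{\varphi_j})\}_{j,k}$, use the symmetry of each $U_k$ to preserve the cost, and take the infimum. Your additional remarks on the $p=\infty$ case and on approximating the infimum are fine but not needed beyond what the paper states.
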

\begin{proof}
    Let $Q = \{(q_j,\ket{\psi_j},\ket{\varphi_j})\}_{j \in J}$ be any $p^{\text{th}}$-order transport plan between $\rho$ and $\sigma$. We can then define a transport plan $Q' = \{(q_ja_k, U_k\ket{\psi_j},U_k\ket{\varphi_j})\}_{j \in J, k \in K}$ between $\Phi(\rho)$ and $\Phi(\sigma)$ which has the same $p^{\text{th}}$-order transport cost as $Q$. Taking the infimum over $Q$ gives the result.
\end{proof}

It is also possible, in finite dimensions, to guarantee the existence of an optimal transport plan when the underlying distance $d$ is continuous and to bound the number of elements of an optimal transport plan. This applies in particular when $d$ is induced by a norm: that is to say, that there exists a norm $\norm{\cdot}_d$ on the self-adjoint traceless operators on $\mathcal{H}$ such that $d(\ket{\psi},\ket{\varphi}) = \norm{\ketbra{\psi}{\psi}-\ketbra{\varphi}{\varphi}}_d$ everywhere.

\begin{prop} \label{prop:trans_plan_inf_attained}
    Let $d$ be a continuous distance on $\mathbb{P}{\mathcal{H}}$ and let $\mathcal{H}$ have dimension $D < \infty$. The infimum in \eqref{eq:quantum_wass_dist} is attained with a transport plan containing at most $2D^2$ elements.
\end{prop}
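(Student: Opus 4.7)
The plan is to recognise each transport plan as a point in a finite-dimensional real affine space and then combine compactness with Carathéodory's theorem. I would introduce the affine space $\mathcal{V}$ of triples $(\alpha,\beta,c)$ where $\alpha,\beta$ are unit-trace Hermitian operators on $\mathcal{H}$ and $c\in\mathbb{R}$, and define $\Phi\colon \mathbb{P}\mathcal{H}\times\mathbb{P}\mathcal{H}\to\mathcal{V}$ by $\Phi(\ket{\psi},\ket{\varphi})=(\ketbra{\psi}{\psi},\ketbra{\varphi}{\varphi},d(\ket{\psi},\ket{\varphi})^{p})$. Continuity of $d$ makes $\Phi$ continuous, so $S=\Phi(\mathbb{P}\mathcal{H}\times\mathbb{P}\mathcal{H})$ is a compact subset of $\mathcal{V}$.

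Next I would observe that a transport plan $Q=\{(q_j,\ket{\psi_j},\ket{\varphi_j})\}_j\in\mathcal{Q}(\rho,\sigma)$ corresponds exactly to writing the triple $(\rho,\sigma,T_p^d(Q))=\sum_j q_j \Phi(\ket{\psi_j},\ket{\varphi_j})$ as a convex combination of points of $S$, and conversely any such convex combination whose first two coordinates equal $(\rho,\sigma)$ comes from a transport plan. Hence $W_p^d(\rho,\sigma)^p$ is the infimum of the cost coordinate $c$ over all $(\rho,\sigma,c)\in K$, with $K:=\mathrm{conv}(S)$. Since the convex hull of a compact set in a finite-dimensional Euclidean space is compact, $K$ is compact; intersecting with the closed affine subspace $\{\alpha=\rho,\ \beta=\sigma\}$ gives a nonempty (e.g.\ via spectral decompositions of $\rho,\sigma$), closed, bounded subset of $\mathbb{R}$, and the infimum is attained.

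Finally I would apply Carathéodory in $\mathcal{V}$. Each unit-trace Hermitian operator on $\mathcal{H}$ lies in a real affine subspace of dimension $D^2-1$, and the cost coordinate contributes one more, so $\mathcal{V}$ has affine dimension $2(D^2-1)+1=2D^2-1$. Carathéodory therefore writes every point of $K\subseteq\mathcal{V}$ as a convex combination of at most $2D^2$ points of $S$; applying this to the optimiser $(\rho,\sigma,W_p^d(\rho,\sigma)^p)$ produces an optimal transport plan with at most $2D^2$ elements. The main piece of care is the dimension count: the marginal normalisations must be absorbed into the ambient affine space $\mathcal{V}$ from the outset, otherwise one would overcount and obtain a weaker bound (e.g.\ roughly $D^4$ from working with couplings in $\mathcal{D}(\mathcal{H}\otimes\mathcal{H})$ instead).
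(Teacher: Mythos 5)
Your proof is correct, and it establishes both the attainment and the $2D^2$ bound. It rests on the same dimension count as the paper's argument --- the pairs $(\ketbra{\psi}{\psi},\ketbra{\varphi}{\varphi})$ live in a real space of pairs of self-adjoint operators of dimension $2D^2$ --- but it packages that count differently. The paper in effect reproves Carath\'eodory by hand: given a plan with more than $2D^2$ elements it extracts a nontrivial linear relation among the projector pairs, splits it into its positive and negative parts, and shifts mass along the relation in whichever direction does not increase the cost, thereby deleting an element; attainment is then obtained separately by optimising the continuous cost over the compact set of plans with at most $2D^2$ elements. You instead carry $d(\ket{\psi},\ket{\varphi})^p$ as an extra coordinate, identify transport plans with convex combinations of points of the compact set $S=\Phi(\mathbb{P}\mathcal{H}\times\mathbb{P}\mathcal{H})$ having prescribed marginal coordinates, and invoke Carath\'eodory and the compactness of $\mathrm{conv}(S)$ as black boxes. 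Your route is shorter and avoids two things the paper has to handle explicitly: the ``without loss of generality'' cost comparison (including the cost coordinate in the Carath\'eodory step forces the reduced combination to have \emph{exactly} the optimal cost, not merely at most it) and the parametrisation of bounded-size plans needed for the compactness step. What the paper's mass-shifting version buys is a constructive reduction procedure applicable to any given plan. Your closing observation about absorbing the unit-trace normalisations into the ambient affine space before counting is exactly the right care --- it is what brings the Carath\'eodory bound down to $2(D^2-1)+1+1=2D^2$ and matches the paper's figure.
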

\begin{proof}
    To show that the infimum over all plans is the same as the infimum over plans of finite size, let $Q = \{(q_j, \ket{\psi_j}, \ket{\varphi_j} ) \}_{j \in \mathbb{N}}$ be an infinite-sized transport plan between $\rho$ and $\sigma$. Let \begin{equation}
        Q' = \{ (q_j, \ket{\psi_j}, \ket{\varphi_j}): j \in \mathbb{N}, q_j \geq \epsilon \} \cup Q''
    \end{equation}
    for $Q''$ a partial transport plan of the form in equation \eqref{eq:example_transport_plan} between $\rho - \sum_{j : q_j \geq \epsilon} q_j \ketbra{\psi_j}{\psi_j}$ and $\sigma - \sum_{j : q_j \geq \epsilon} q_j \ketbra{\varphi_j}{\varphi_j}$. This $Q'$ is then a finite transport plan between $\rho$ and $\sigma$. As $\epsilon \to 0$, the transport cost of the first part of $Q'$ tends to $T_p^d(Q)$ and the cost of $Q''$ tends to 0, which proves the equality of the infima.

    For the bound $2D^2$, suppose $Q = \{(q_j,\ket{\psi_j},\ket{\varphi_j}\}_{j\in J}$ is a transport plan between $\rho$ and $\sigma$ with more than $2D^2$ elements. We will show that there exists a transport plan $Q'$ with strictly fewer elements than $Q$, whose $p^{\text{th}}$-order transport cost is at most that of $Q$. Let $\mathcal{M}_D^{\text{sa}}$ be the space of self-adjoint operators on $D$ dimensions. The space $\mathcal{M}^{\text{sa}}_D \times \mathcal{M}_D^{\text{sa}}$ has real dimension $2D^2$. Therefore among the elements $(\ketbra{\psi_j}{\psi_j},\ketbra{\varphi_j}{\varphi_j})$ in $\mathcal{M}^{\text{sa}}_D \times \mathcal{M}_D^{\text{sa}}$ we can find a nontrivial linear relation
    \begin{equation}
        \sum_{j \in J} c_j (\ketbra{\psi_j}{\psi_j},\ketbra{\varphi_j}{\varphi_j}) = 0.
    \end{equation}
    Define subsets $K$, $L$ of $J$ by $K = \{k \in J: c_k > 0\}$ and $L = \{l \in J: c_l < 0\}$. We can then rewrite this as
    \begin{equation}\label{eq:trans_plan_inf_attained_nontriv_lin_rel}
        \sum_{k \in K} c_k (\ketbra{\psi_{k}}{\psi_{k}},\ketbra{\varphi_{k}}{\varphi_{k}}) = \sum_{l \in L} (-c_l) (\ketbra{\psi_{l}}{\psi_{l}},\ketbra{\varphi_{l}}{\varphi_{l}})
    \end{equation}
so that the coefficients $c_k$ and $-c_l$ are strictly positive.
Without loss of generality, we may assume that 
\begin{equation} \label{eq:trans_plan_inf_attained_inequality_wlog}
    \sum_{k \in K} c_k d(\ket{\psi_{k}},\ket{\varphi_{k}})^p \geq \sum_{l \in L} (-c_l) d(\ket{\psi_{l}},\ket{\varphi_{l}})^p.
\end{equation}
We will aim to replace a portion of the transport plan corresponding to the left-hand side, with a portion corresponding to the right-hand side.

Let $m = \min_{k \in K} \left\{\frac{q_{k}}{c_k} \right\}$, and let $k'$ be a minimiser. We can then form a new transport plan 
\begin{equation} \label{eq:trans_plan_inf_attained_new_plan}
    Q' = \{(q_j,\ket{\psi_j},\ket{\varphi}_j) \}_{j \notin K \cup L} \cup \{(q_k - mc_k,\ket{\psi_k},\ket{\varphi_k}) \}_{k \in K} \cup \{(q_l - mc_l,\ket{\psi_l},\ket{\varphi_l}) \}_{l \in L}
\end{equation}
replacing $m$ times the left hand side of \eqref{eq:trans_plan_inf_attained_nontriv_lin_rel} with $m$ times the right hand side. $k'$ attains the minimum in the definition of $m$ so any multiple of $(\ket{\psi_{k'}},\ket{\varphi_{k'}})$ is removed from the plan. The linear relation means that the resulting plan is still a transport plan between $\rho$ and $\sigma$, and \eqref{eq:trans_plan_inf_attained_inequality_wlog} means that the cost is not increased. Hence we can find a transport plan between $\rho$ and $\sigma$ with fewer elements, without increasing the transport cost.

We may then optimise the transport cost over all transport plans of size at most $2D^2$. This set is compact and the $p^{\text{th}}$-order cost \eqref{eq:quantum_trans_cost} of a transport plan is a continuous function of the transport plan $Q$ for $d$ continuous, so the infimum is attained.
\end{proof}

The existence of an optimal transport plan will be particularly useful later when looking at examples, applications, and further properties.

Another key property of this $W_p^d$ is continuity, which we can prove provided that $d$ is uniformly continuous. This will be necessary for a coherent definition of a dual in Sec.~\ref{section:dual_picture}.

\begin{prop} \label{prop:Wp_cts}
    Suppose $d$ is uniformly continuous on $\mathbb{P}\mathcal{H}$ and let $1 \leq p < \infty$. Then $W_p^d$ is uniformly continuous.
\end{prop}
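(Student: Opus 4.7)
The plan is to prove joint continuity of $W_p^d$ on the compact product space $\mathcal{D}(\mathcal{H})\times\mathcal{D}(\mathcal{H})$ (equipped with the trace-norm topology); since a continuous function on a compact metric space is uniformly continuous, this will yield the claim. I split continuity into lower and upper semicontinuity.

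For lower semicontinuity, I argue by compactness. Given $(\rho_n,\sigma_n)\to(\rho,\sigma)$ and optimal transport plans $Q_n$ (which by Proposition~\ref{prop:trans_plan_inf_attained} can be taken with at most $2D^2$ triples), I parametrise each $Q_n$ by $2D^2$ tuples in $[0,1]\times\mathbb{P}\mathcal{H}\times\mathbb{P}\mathcal{H}$, padding with zero-weight entries as needed, and extract a subsequence along which all tuples converge. By continuity of the marginal maps, the limit $Q^\star$ (after discarding zero-weight entries) is a transport plan for $(\rho,\sigma)$; by continuity of $d$ on compact $\mathbb{P}\mathcal{H}$, $T_p^d(Q^\star)=\lim_k T_p^d(Q_{n_k})$. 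Hence $W_p^d(\rho,\sigma)^p \le T_p^d(Q^\star) = \liminf_n W_p^d(\rho_n,\sigma_n)^p$.

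For upper semicontinuity at a fixed $(\rho,\sigma)$, take an optimal plan $Q^*=\{(q_j,|\psi_j\rangle,|\varphi_j\rangle)\}_{j=1}^k$, and consider $(\rho',\sigma')$ with $\epsilon:=\max(\|\rho-\rho'\|_1,\|\sigma-\sigma'\|_1)$ small. I construct $Q'\in\mathcal{Q}(\rho',\sigma')$ by shrinking the weights of $Q^*$ by amounts $\eta_j\in[0,q_j]$ and attaching a small-mass correction plan that transports the residual marginal mismatch. Using the Jordan decomposition $\rho'-\rho=A^+-A^-$ (with $\text{Tr}\,A^\pm\le\epsilon/2$ and $A^-\le\rho$), I choose the $\eta_j$ so that $\sum_j\eta_j|\psi_j\rangle\langle\psi_j|\ge A^-$, which ensures the shrunk $\rho$-marginal lies below $\rho'$; an analogous choice is made on the $\sigma$-side. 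Since $d$ is continuous on compact $\mathbb{P}\mathcal{H}$ it is bounded by some $D_{\max}<\infty$, so the correction plan of total mass $\sum_j\eta_j$ contributes at most $D_{\max}^p\sum_j\eta_j$ to the cost.

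The main obstacle is showing that the minimum feasible $\sum_j\eta_j$ scales linearly in $\epsilon$ as $\epsilon\to 0$. Since the $\{|\psi_j\rangle\langle\psi_j|\}$ span the support of $\rho$ and $A^-\le\rho$ is supported there, the functional $A^-\mapsto\min\{\sum_j\eta_j : \eta_j\ge 0, \sum_j\eta_j|\psi_j\rangle\langle\psi_j|\ge A^-\}$ is a finite, positively homogeneous gauge on the PSD cone in $\mathrm{supp}(\rho)$; by finite-dimensional equivalence of norms on this cone, it is bounded by a constant (depending on $\rho$ and $Q^*$) times $\text{Tr}\,A^-\le\epsilon/2$. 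For $\epsilon$ small enough, the cap $\eta_j\le q_j$ is not binding, yielding $T_p^d(Q')\le T_p^d(Q^*)+C_{\rho,Q^*}\,\epsilon$ and hence pointwise upper semicontinuity at $(\rho,\sigma)$. Combined with the already established lower semicontinuity, this gives joint continuity, which by compactness of $\mathcal{D}(\mathcal{H})\times\mathcal{D}(\mathcal{H})$ upgrades to uniform continuity.
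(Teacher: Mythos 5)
Your overall strategy (pointwise joint continuity on the compact set $\mathcal{D}(\mathcal{H})\times\mathcal{D}(\mathcal{H})$, upgraded to uniform continuity by compactness) is sound, and your lower-semicontinuity half is correct and clean: extracting a convergent subsequence of optimal plans of size at most $2D^2$ and passing to the limit works. The genuine gap is in the upper-semicontinuity construction. You keep the pure states $\ket{\psi_j},\ket{\varphi_j}$ of the optimal plan $Q^*$ fixed and only shrink the weights, and you justify feasibility of the shrinkage via the claims that $A^-\leq\rho$ and that $A^-$ is supported in $\mathrm{supp}(\rho)=\mathrm{span}\{\ket{\psi_j}\}$. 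Both claims are false in general. Take $\rho=\ketbra{0}{0}$ and $\rho'=\ketbra{\theta}{\theta}$ with $\ket{\theta}=\cos\theta\ket{0}+\sin\theta\ket{1}$ for small $\theta$: then $\rho'-\rho$ is a traceless rank-two operator whose negative part $A^-$ has trace $\sin\theta$ but is supported on a vector that is \emph{not} $\ket{0}$, so no choice of $\eta_j\geq 0$ satisfies $\sum_j\eta_j\ketbra{\psi_j}{\psi_j}\geq A^-$ — your gauge is $+\infty$. Nor does weakening to the condition you actually need, $\sum_j(q_j-\eta_j)\ketbra{\psi_j}{\psi_j}\leq\rho'$, help: $(1-\eta)\ketbra{0}{0}\leq\ketbra{\theta}{\theta}$ forces $\eta=1$ for any $\theta\neq 0$, so the correction plan carries mass $1$ and contributes up to $\mathrm{diam}_d(\mathbb{P}\mathcal{H})^p$ to the cost, which is not $o(1)$. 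Since the compactness upgrade requires continuity at \emph{every} point of $\mathcal{D}(\mathcal{H})\times\mathcal{D}(\mathcal{H})$, including rank-deficient states, this breaks the argument; your construction is only valid at full-rank $(\rho,\sigma)$, where $A^-\leq\frac{\|A^-\|_\infty}{\lambda_{\min}(\rho)}\rho$ indeed yields $\sum_j\eta_j=O(\epsilon)$.

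The missing idea — and the reason the paper's proof in Appendix~D.1 is so much longer — is that one must perturb the \emph{pure states} of the plan, not just the weights: the paper projects each $\ket{\psi_j}$ onto the span $S_{\rho'}$ of eigenvectors of $\rho'$ with eigenvalue at least $c\delta$, renormalises, shows the resulting (rescaled) ensemble sits below $\rho'$, and controls the extra cost via uniform continuity of $d$ applied to the pair $(\ket{\psi_j},\ket{\psi_j^{\rho'}})$, discarding the small total weight of indices whose projections are short. In your example this replaces $\ket{0}$ by $\ket{\theta}$ itself at $d$-cost $o(1)$, which is exactly what weight-shrinking alone cannot achieve. If you repair the upper-semicontinuity step along these lines (or restrict your Jordan-decomposition argument to full-rank states and handle the boundary by the projection trick), the rest of your architecture goes through; note also that the paper's single two-sided estimate gives uniform continuity directly with a $\delta$ depending only on $\epsilon$, $d$ and $\dim\mathcal{H}$, without needing the compactness upgrade.
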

\begin{proof}
The proof is quite technical and not particularly instructive, so has been placed in Appendix~\ref{appendix:continuity} for the convenience of the reader.

\end{proof}

For $p=1$ specifically, we furthermore have joint convexity.
\begin{prop} \label{prop:joint_convexity_p=1}
    $W_1^d$ is jointly convex.
\end{prop}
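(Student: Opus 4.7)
The plan is to construct a transport plan for the convex combination by merging transport plans for the individual pairs, with weights scaled by $\lambda$ and $1-\lambda$ respectively; the key observation is that for $p=1$ the cost function $T_1^d$ is exactly linear in the weights $q_j$, so this direct combination loses nothing.

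Concretely, fix $\varepsilon > 0$ and pick transport plans $Q_1 = \{(q_j^{(1)}, \ket{\psi_j^{(1)}}, \ket{\varphi_j^{(1)}})\}_{j}$ and $Q_2 = \{(q_k^{(2)}, \ket{\psi_k^{(2)}}, \ket{\varphi_k^{(2)}})\}_{k}$ with $T_1^d(Q_i) \leq W_1^d(\rho_i,\sigma_i) + \varepsilon$; by Proposition \ref{prop:trans_plan_inf_attained} we can actually take these to be optimal when $d$ is continuous, but $\varepsilon$-optimality is enough in general. Define the combined plan
\begin{equation}
    Q_\lambda = \{(\lambda q_j^{(1)}, \ket{\psi_j^{(1)}}, \ket{\varphi_j^{(1)}})\}_{j} \,\cup\, \{((1-\lambda) q_k^{(2)}, \ket{\psi_k^{(2)}}, \ket{\varphi_k^{(2)}})\}_{k},
\end{equation}
merging any coincident triples into a single one by summing weights, per our convention.

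Next I verify that $Q_\lambda \in \mathcal{Q}(\lambda\rho_1+(1-\lambda)\rho_2, \lambda\sigma_1+(1-\lambda)\sigma_2)$: the first marginal equals $\lambda \sum_j q_j^{(1)}\ketbra{\psi_j^{(1)}}{\psi_j^{(1)}} + (1-\lambda)\sum_k q_k^{(2)}\ketbra{\psi_k^{(2)}}{\psi_k^{(2)}} = \lambda\rho_1+(1-\lambda)\rho_2$, and similarly for the second marginal. The transport cost is
\begin{equation}
    T_1^d(Q_\lambda) = \lambda \sum_{j} q_j^{(1)} d(\ket{\psi_j^{(1)}},\ket{\varphi_j^{(1)}}) + (1-\lambda)\sum_{k} q_k^{(2)} d(\ket{\psi_k^{(2)}},\ket{\varphi_k^{(2)}}) = \lambda T_1^d(Q_1) + (1-\lambda) T_1^d(Q_2).
\end{equation}

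Taking the infimum on the left-hand side yields $W_1^d(\lambda\rho_1+(1-\lambda)\rho_2, \lambda\sigma_1+(1-\lambda)\sigma_2) \leq \lambda W_1^d(\rho_1,\sigma_1) + (1-\lambda)W_1^d(\rho_2,\sigma_2) + \varepsilon$, and sending $\varepsilon \to 0$ gives joint convexity. There is no real obstacle here: the only subtlety is why the same argument fails for $p > 1$, and indeed for higher $p$ one gets $T_p^d(Q_\lambda) = \lambda T_p^d(Q_1) + (1-\lambda) T_p^d(Q_2)$ of $p$-th powers, which after taking $p$-th roots yields only the weaker inequality $W_p^d(\lambda\rho_1+(1-\lambda)\rho_2, \lambda\sigma_1+(1-\lambda)\sigma_2)^p \leq \lambda W_p^d(\rho_1,\sigma_1)^p + (1-\lambda) W_p^d(\rho_2,\sigma_2)^p$ — precisely the type of statement one would expect for an order-$p$ distance.
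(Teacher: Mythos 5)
Your proof is correct and follows essentially the same route as the paper: combine arbitrary (or $\varepsilon$-optimal) transport plans for the two pairs with weights $\lambda$ and $1-\lambda$, note that $T_1^d$ is linear in the weights, and take infima. The paper's version simply takes the infimum over both plans directly rather than passing through $\varepsilon$-optimal plans, which is an immaterial difference.
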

\begin{proof}
    Suppose $\rho_1$, $ \sigma_1$, $ \rho_2$, and $\sigma_2$ are quantum states, and let $r_1 + r_2 = 1$, $r_i \geq 0$. Let $Q_1 = \{(q_{1,j},\ket{\psi_{1,j}},\ket{\varphi_{1,j}})\}_{j \in J}$ be any transport plan between $\rho_1$ and $\sigma_1$, and $Q_2  = \{(q_{2,k},\ket{\psi_{2,k}},\ket{\varphi_{2,k}})\}_{k \in K}$ any transport plan between $\rho_2$ and $\sigma_2$. Then $Q = r_1Q_1 \cup r_2Q_2 \coloneqq \{(r_1q_{1,j},\ket{\psi_{1,j}},\ket{\varphi_{1,j}})\}_{j \in J}j \cup \{(r_2q_{2,k},\ket{\psi_{2,k}},\ket{\varphi_{2,k}})\}_{k\in K}$ is a transport plan between $r_1\rho_1 + r_2\rho_2$ and $r_1\sigma_1 + r_2\sigma_2$.

    Then we have 
    \begin{align}
        T_1^d(Q) 
            = \sum_{j \in J} r_1q_{1,j} d(\ket{\psi_{1,j}},\ket{\varphi_{1,j}}) + \sum_{k  \in K} r_2q_{2,k} d(\ket{\psi_{2,k}},\ket{\varphi_{2,k}})
            = r_1T_1^d(Q_1) + r_2T_1^d(Q_2).
    \end{align}
    Taking the infimum over $Q_1$ and $Q_2$ shows that $W_1^d(r_1\rho_1 + r_2\rho_2, r_1\sigma_1 + r_2\sigma_2) \leq r_1W_1^d(\rho_1,\sigma_1) + r_2W_1^d(\rho_2,\sigma_2)$.
\end{proof}
For $p > 1$, joint convexity does not hold even in the classical case: consider for example the Hamming distance on the cube and measures $\mu_1 = \delta_{000}, \nu_1 = \delta_{001}, \mu_2 = \delta_{111}, \nu_2 = \delta_{111}$ with $r_1 = r_2 = \frac{1}{2}$. To show that joint convexity does not hold in the quantum case for $p > 1$, we give an example in Section \ref{section:Wp1}.

It is difficult to talk about how tensor products interact with $W_p^d$ in general, as for two spaces $\mathcal{H}_1, \mathcal{H}_2$ whose projectivisations are equipped with metrics $d_1, d_2$ respectively, there is no natural choice of a metric on $\mathbb{P}(\mathcal{H}_1 \otimes \mathcal{H}_2)$. However, if we equip $\mathbb{P}(\mathcal{H}_1 \otimes \mathcal{H}_2)$ with a distance which is subadditive with respect to $d_1$ and $d_2$, then $W_p^d$ is also subadditive:

\begin{prop} \label{prop:subadditive}
    Let $\mathcal{H}_a$, $\mathcal{H}_b$ be finite-dimensional Hilbert spaces whose projectivisations $\mathbb{P}\mathcal{H}_a, \mathbb{P}\mathcal{H}_b$ are equipped with metrics $d_a, d_b$ respectively. Let $\mathbb{P}(\mathcal{H}_a \otimes \mathcal{H}_b)$ be equipped with metric $d$ which satisfies
    \begin{equation} \label{eq:prop_subadditive_distance_condition}
        d_a(\ket{\psi_a},\ket{\varphi_a}) + d_b(\ket{\psi_b},\ket{\varphi_b}) \geq d(\ket{\psi_a}\otimes \ket{\psi_b}, \ket{\varphi_a}\otimes \ket{\varphi_b}).
    \end{equation}
    Then for all $p$ and all states $\rho_a, \sigma_a$ on $\mathcal{H}_a$ and $\rho_b, \sigma_b$ on $\mathcal{H}_b$, we have
    \begin{equation}
        W_p^{d_a}(\rho_a,\sigma_a) + W_p^{d_b}(\rho_b, \sigma_b) \geq W_p^d(\rho_a \otimes \rho_b, \sigma_a \otimes \sigma_b).
    \end{equation}
\end{prop}
\begin{proof}
    Let $Q_a = \{(q_{j,a}, \ket{\psi_{j,a}}, \ket{\varphi_{j,a})}\}_{j \in J}$ and $Q_b = \{(q_{k,b}, \ket{\psi_{k,b}}, \ket{\varphi_{k,b}})\}_{k \in K}$ be transport plans from $\rho_a$ to $\sigma_a$ and from $\rho_b$ to $\sigma_b$ respectively. We'll show that the transport plan $Q = \{(q_{j,a}q_{k,b}, \ket{\psi_{j,a}} \otimes \ket{\psi_{k,b}}, \ket{\varphi_{j,a}} \otimes \ket{\varphi_{k,b}} \}_{j \in J, k \in K}$ has
    \begin{equation}
        T_p^{d_a}(Q_a)^{1/p} + T_p^{d_b}(Q_b)^{1/p} \geq T_p^d(Q)^{1/p}.
    \end{equation}
    Letting $X_a$ be the nonnegative random variable taking value $j$ with probability $q_{j,a}$, and similarly $X_b$ independent of $X_a$ taking value $k$ with probability $q_{k,b}$. Let $A_a = d_a(\ket{\psi_{X_a,a}},\ket{\varphi_{X_a,a}})$ and $A_b = d_b(\ket{\psi_{X_b,b}},\ket{\varphi_{X_b,b}})$. We see for $i = a, b $ that $T_p^{d_i}(Q_i)^{1/p}$ is by its definition the $p^{\text{th}}$ root of the $p^{\text{th}}$ moment of $A_i$. As the $p^{\text{th}}$ root of the $p^{\text{th}}$ moment is an $L_p$ norm on random variables on the measure space $(\mathcal{X}, \mu)$ induced by $X_a$ and $X_b$, we know that 
    \begin{equation}
        T_p^{d_a}(Q_a)^{1/p} + T_p^{d_b}(Q_b)^{1/p} \geq \mathbb{E}_\mu \left[\left(A_a + A_b\right)^p\right]^{1/p}.
    \end{equation}
    The equation \eqref{eq:prop_subadditive_distance_condition} means that $A_a + A_b \geq d(\ket{\psi_{j,a}}\otimes \ket{\psi_{k,b}}, \ket{\varphi_{j,a}}\otimes \ket{\varphi_{k,b}})$, and so
    \begin{equation}
        \mathbb{E}_\mu \left[\left(A_a + A_b\right)^p\right]^{1/p} \geq \mathbb{E}_\mu \left[ d(\ket{\psi_{1,X_a}}\otimes \ket{\psi_{2,X_b}}, \ket{\varphi_{1,X_a}}\otimes \ket{\varphi_{2,X_b}})^p \right]^{1/p} = T_p^d(Q)^{1/p}
    \end{equation}
    as claimed.
\end{proof}

Note that this result also holds fo $p = \infty$, as can be seen by replacing $W_p$ by $W_{\infty}$ and $T_p(\cdot)^{1/p}$ by $\sup_{j \in J} d(\ket{\psi_j},\ket{\varphi_j})$ in the proof.

\subsection{Dual picture} \label{section:dual_picture}

We can now take a look at the dual picture for the first-order transport distance, with some regularity conditions on the underlying metric $d$ which allow the dual to be well-defined. As we will see later, this will then allow us to define a version of the Wasserstein distance for $p=1$ that satisfies the triangle inequality and inherits many of the properties of the original $W_1^d$.

\begin{defn} \label{def:d_lipschitz}
    Let $d$ be a metric on $\mathbb{P}\mathcal{H}$. Suppose $d$ is continuous and that the 2-norm is Lipschitz with respect to $d$. 
    We define the \textbf{dual constant of a self-adjoint operator $O$} as
    \begin{equation} \label{eq:d_lipschitz}
        L_{d}(O) = \sup_{\rho \neq \sigma} \frac{\text{{\em Tr}}[O(\rho-\sigma)]}{W_1^d(\rho,\sigma)}.
    \end{equation}
\end{defn}
We refer to this as the $d$-Lipschitz constant. Usually, in the classical case, the Lipschitz constant of a function with respect to a metric is defined by taking the supremum only with respect to point masses, not probability distributions. As we see below in Proposition~\ref{prop:same_pure}, we can also take w.l.o.g.\ the supremum only with respect to pure states and only define it with mixed states for convenience.

\begin{prop}\label{prop:same_pure}
     Let $d$ be a metric on $\mathbb{P}\mathcal{H}$. Suppose that the 2-norm is Lipschitz with respect to $d$. Then
     \begin{equation} \label{eq:lipschitz_achieved_at_pure}
         L_d(O) = \sup_{\ket{\psi} \neq \ket{\varphi}} \frac{\text{{\em Tr}} [O(\ketbra{\psi}{\psi} - \ketbra{\varphi}{\varphi})]}{d(\ket{\psi},\ket{\varphi})}.
     \end{equation}
\end{prop}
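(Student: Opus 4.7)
The plan is to establish equality by showing both inequalities. The easier direction, $L_d(O) \geq \sup_{\ket{\psi}\neq\ket{\varphi}} \frac{\text{Tr}[O(\ketbra{\psi}{\psi}-\ketbra{\varphi}{\varphi})]}{d(\ket{\psi},\ket{\varphi})}$, follows immediately from Proposition~\ref{prop:agree-on-pure-states}: each pair of distinct pure states $(\ketbra{\psi}{\psi}, \ketbra{\varphi}{\varphi})$ is admissible in the supremum defining $L_d(O)$, and the denominator $W_1^d(\ketbra{\psi}{\psi},\ketbra{\varphi}{\varphi})$ equals $d(\ket{\psi},\ket{\varphi})$.

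For the reverse direction, I would exploit the existence of an optimal transport plan guaranteed by Proposition~\ref{prop:trans_plan_inf_attained}, since $d$ is continuous. Let $M$ denote the right-hand side of \eqref{eq:lipschitz_achieved_at_pure}. Fix any two distinct states $\rho,\sigma$ and let $Q = \{(q_j, \ket{\psi_j}, \ket{\varphi_j})\}_j$ be an optimal transport plan, so that $W_1^d(\rho,\sigma) = \sum_j q_j\, d(\ket{\psi_j}, \ket{\varphi_j})$. Linearity of the trace and the marginal conditions \eqref{eq:quantum-transport-plan} give
\begin{equation}
\text{Tr}[O(\rho-\sigma)] = \sum_j q_j\, \text{Tr}[O(\ketbra{\psi_j}{\psi_j}-\ketbra{\varphi_j}{\varphi_j})].
\end{equation}
Each summand with $\ket{\psi_j}\neq\ket{\varphi_j}$ is bounded above by $M\, d(\ket{\psi_j},\ket{\varphi_j})$ by the definition of $M$; summands with $\ket{\psi_j} = \ket{\varphi_j}$ contribute zero to both sides. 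Summing gives
\begin{equation}
\text{Tr}[O(\rho-\sigma)] \leq M \sum_j q_j\, d(\ket{\psi_j},\ket{\varphi_j}) = M\, W_1^d(\rho,\sigma),
\end{equation}
and dividing by $W_1^d(\rho,\sigma) > 0$ (by Corollary~\ref{cor:Wp_nonneg_faithful}, since $\rho \neq \sigma$) then taking the supremum over $\rho \neq \sigma$ yields $L_d(O) \leq M$.

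The only subtle point is ensuring the optimal plan actually exists so that we can replace the infimum in \eqref{eq:quantum_wass_dist} by an attained minimum; otherwise one would need to work with $\varepsilon$-optimal plans and pass to a limit, which is slightly more cumbersome but still routine. The Hölder/Lipschitz assumption on the $2$-norm is needed only indirectly, to guarantee $W_1^d(\rho,\sigma) > 0$ when $\rho \neq \sigma$, so that the ratio in \eqref{eq:d_lipschitz} is well-defined. No further obstacle arises: the proof is essentially a one-line consequence of the coupling structure of $W_1^d$ together with Propositions~\ref{prop:agree-on-pure-states} and \ref{prop:trans_plan_inf_attained}.
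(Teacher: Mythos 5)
Your proof is correct and follows essentially the same route as the paper's: the easy direction restricts the supremum to pure states via Proposition~\ref{prop:agree-on-pure-states}, and the reverse direction decomposes $\text{Tr}[O(\rho-\sigma)]$ along an optimal transport plan and bounds each term by $M\,d(\ket{\psi_j},\ket{\varphi_j})$, which is exactly the mediant-type estimate the paper writes as a chain of inequalities. Your explicit handling of the terms with $\ket{\psi_j}=\ket{\varphi_j}$ and of the positivity of the denominator is a slight tightening of the paper's argument, not a different method.
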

\begin{proof}
    For any $\rho$ and $\sigma$, let $Q = \{(q_j,\ket{\psi_j},\ket{\varphi_j})\}_{j \in J}$ be an transport plan between them whose first-order transport cost is at most $W_1^d(\rho,\sigma) + \epsilon$. Then
    \begin{align}
    L_d(O)
            & \leq \sup_{\rho \neq \sigma} \frac{\text{Tr}[O(\rho-\sigma)]}{W_1^d(\rho,\sigma)} \\
            &\leq \sup_{\rho \neq \sigma} \frac{\sum_{j \in J} q_j \text{Tr}[O(\ketbra{\psi_j}{\psi_j} - \ketbra{\varphi_j}{\varphi_j})]}{\sum_{j \in J} q_j (d(\ket{\psi_j},\ket{\varphi_j})-\epsilon)} \\
            &\leq \sup_{\rho \neq \sigma} \max_{j \in J} \frac{\text{Tr}[O(\ketbra{\psi_j}{\psi_j} - \ketbra{\varphi_j}{\varphi_j})]}{ d(\ket{\psi_j},\ket{\varphi_j})-\epsilon} \\
            &\leq\sup_{\ket{\psi} \neq \ket{\varphi}} \frac{\text{Tr}[O(\ketbra{\psi}{\psi} - \ketbra{\varphi}{\varphi})]}{d(\ket{\psi},\ket{\varphi})-\epsilon} \\
            &= \sup_{\ket{\psi} \neq \ket{\varphi}} \frac{\text{Tr}[O(\ketbra{\psi}{\psi} - \ketbra{\varphi}{\varphi})]}{W_1^d(\ketbra{\psi}{\psi},\ketbra{\varphi}{\varphi}) - \epsilon}  \\
            &\to  \sup_{\ket{\psi} \neq \ket{\varphi}} \frac{\text{Tr}[O(\ketbra{\psi}{\psi} - \ketbra{\varphi}{\varphi})]}{W_1^d(\ketbra{\psi}{\psi},\ketbra{\varphi}{\varphi})} \qquad \text{as } \epsilon \to 0 \\
            &\leq L_d(O)
    \end{align}
    and so this is a chain of equalities.
\end{proof}
Importantly, this means that this $L_d$ is a function of the metric $d$ and is independent of our definition of $W_p^d$. It also means that our choice of $p=1$ in the definition is arbitrary: defining $L_{d,p}(O) = \sup_{\rho \neq \sigma}\text{Tr}[O(\rho-\sigma)]/W_p^d(\rho,\sigma) $ gives the same quantity as $L_d(O)$. It is also important to note that under regularity conditions only slightly stricter than already established on $d$, this quantity is a norm.

\begin{prop}
    Let $d$ be a metric on $\mathbb{P}\mathcal{H}$ such that the 2-norm is Lipschitz with respect to $d$. Then $L_{d}(O)$ is a norm on the space of traceless self-adjoint operators.
\end{prop}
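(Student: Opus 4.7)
The plan is to verify the four standard axioms of a norm (finiteness, absolute homogeneity, triangle inequality, and definiteness) for $L_d$, regarded as a function on the real vector space of traceless self-adjoint operators on $\mathcal{H}$. Non-negativity is built into the definition once we note that we may swap $\rho$ and $\sigma$ in the supremum in \eqref{eq:d_lipschitz} without changing the denominator.

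First I would establish finiteness, which is the only place the Lipschitz hypothesis on $d$ really enters. By Prop.~\ref{prop:same_pure} it suffices to bound $L_d(O)$ using pure states; then for any $\ket{\psi}\neq \ket{\varphi}$, Cauchy--Schwarz (in the Hilbert--Schmidt inner product) gives
\begin{equation}
|\text{Tr}[O(\ketbra{\psi}{\psi}-\ketbra{\varphi}{\varphi})]|
\leq \|O\|_2\, \|\ketbra{\psi}{\psi}-\ketbra{\varphi}{\varphi}\|_2
\leq C\,\|O\|_2\, d(\ket{\psi},\ket{\varphi}),
\end{equation}
where $C$ is the Lipschitz constant from \eqref{eq:holder_cts} with $\alpha=1$. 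Dividing by $d(\ket{\psi},\ket{\varphi})>0$ and taking the supremum yields $L_d(O)\leq C\|O\|_2<\infty$.

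Absolute homogeneity $L_d(\lambda O)=|\lambda|L_d(O)$ follows from linearity of the trace together with the sign swap $(\rho,\sigma)\leftrightarrow(\sigma,\rho)$ to absorb negative scalars. For the triangle inequality, for any $\rho\neq\sigma$ we have
\begin{equation}
\frac{\text{Tr}[(O_1+O_2)(\rho-\sigma)]}{W_1^d(\rho,\sigma)}
=\frac{\text{Tr}[O_1(\rho-\sigma)]}{W_1^d(\rho,\sigma)}+\frac{\text{Tr}[O_2(\rho-\sigma)]}{W_1^d(\rho,\sigma)}
\leq L_d(O_1)+L_d(O_2),
\end{equation}
and taking the supremum on the left gives $L_d(O_1+O_2)\leq L_d(O_1)+L_d(O_2)$.

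The only nontrivial point is definiteness: $L_d(O)=0$ should force $O=0$. I expect this to be the main obstacle, but it can be handled directly. If $L_d(O)=0$, then by Prop.~\ref{prop:same_pure} $\text{Tr}[O(\ketbra{\psi}{\psi}-\ketbra{\varphi}{\varphi})]=0$ for all pure states, so $\bra{\psi}O\ket{\psi}$ is a constant $c$ independent of $\ket{\psi}$. Averaging the identity $\bra{\psi}O\ket{\psi}=c$ over any basis gives $\text{Tr}[O]=cD$, and since $O$ is traceless we get $c=0$. Thus the diagonal of $O$ in every orthonormal basis vanishes, which for a self-adjoint operator forces $O=0$ (e.g.\ diagonalising $O$ in its own eigenbasis shows all eigenvalues are zero). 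Combined with $L_d(0)=0$, this completes the verification that $L_d$ is a norm on the traceless self-adjoint operators.
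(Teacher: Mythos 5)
Your proof is correct, and it follows the same overall axiom-by-axiom verification as the paper, but two of the four steps are handled differently. For definiteness, the paper argues directly: a nonzero traceless self-adjoint $O$ has eigenvectors $\ket{\psi}$, $\ket{\varphi}$ with eigenvalues of opposite sign, and the pair $(\ketbra{\psi}{\psi},\ketbra{\varphi}{\varphi})$ already witnesses $L_d(O)>0$. Your contrapositive argument (constant expectation value on all pure states forces $c=\text{Tr}[O]/D=0$ and hence $O=0$) is also valid but longer; the paper's route is the more economical one. For finiteness, the paper bounds the numerator by $2\norm{O}_\infty\norm{\rho-\sigma}_1$ and the denominator from below via Proposition \ref{prop:more-than-norm}, whereas you reduce to pure states via Proposition \ref{prop:same_pure} and apply Cauchy--Schwarz in the Hilbert--Schmidt inner product; your version yields the explicit bound $L_d(O)\leq C\norm{O}_2$ and stays entirely within the 2-norm, which is arguably cleaner. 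Finally, your triangle inequality takes the supremum only after bounding an arbitrary pair $\rho\neq\sigma$, which avoids the paper's implicit assumption that the supremum defining $L_d(O_1+O_2)$ is attained --- a small but genuine improvement in rigour.
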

\begin{proof}
    For finiteness, we use $|\text{Tr}[O(\ketbra{\psi}{\psi} - \ketbra{\varphi}{\varphi})]| \leq \norm{O}_{\infty} \norm{\ketbra{\psi}{\psi} - \ketbra{\varphi}{\varphi}}_1$ by Hölder's inequality and then that $\norm{O}_{\infty} \norm{\ketbra{\psi}{\psi} - \ketbra{\varphi}{\varphi}}_1 \leq 2 \norm{O}_{\infty} \norm{\ketbra{\psi}{\psi} - \ketbra{\varphi}{\varphi}}_2$ since $\ketbra{\psi}{\psi} - \ketbra{\varphi}{\varphi}$ has rank at most $2$. As Lipschitz is the same as Hölder continuous with exponent $\alpha = 1$, we have that this is at most $2C\norm{O}_{\infty}W_1^d(\ketbra{\psi}{\psi},\ketbra{\varphi}{\varphi})$ for some positive constant $C$ by Proposition \ref{prop:more-than-norm}.

    To show nondegeneracy, suppose $O$ is nonzero and let $\ket{\psi}$, $\ket{\varphi}$ be eigenstates of $O$ with positive and negative eigenvalues respectively, which must exist as $O$ is traceless. Then $\rho = \ketbra{\psi}{\psi}$, $\sigma = \ketbra{\varphi}{\varphi}$ gives $\text{Tr}[O(\rho-\sigma)]/W_1^d(\rho,\sigma) > 0$.

    For the other norm properties, clearly $L_{d}(\lambda O) = |\lambda | L_{d}(O)$. And suppose we have Hermitian operators $O_1$ and $O_2$. For any states $\rho$ and $\sigma$, we have
    \begin{equation} \label{eq:d_lipschitz_is_norm_tri_ineq}
        \text{Tr}[(O_1+O_2)(\rho-\sigma)]/W_1^d(\rho,\sigma) = \text{Tr}[O_1(\rho-\sigma)]/W_1^d(\rho,\sigma) + \text{Tr}[O_2(\rho-\sigma)]/W_1^d(\rho,\sigma) \leq L_{d}(O_1) + L_{d}(O_2)
    \end{equation}
    and taking the supremum of the left over such $\rho$ and $\sigma$ proves the triangle inequality for $L_{d}$.
\end{proof}

We can then dualise one more time and consider the norm
\begin{align}
\|\rho-\sigma\|_{DW_1^d}=\sup\limits_{L_{d}(O)\leq 1}\text{Tr}[O(\rho-\sigma)].
\end{align}

\begin{prop} \label{prop:dual_basic_properties}
    Let $\norm{\cdot}_{DW_1^d}$ be defined by $\norm{X}_{DW_1^d} = \sup\limits_{L_{d}(O)\leq 1}\text{\em {Tr}}[OX]$ on the space of traceless Hermitian operators. Then this is a norm, and on quantum states $\rho,\sigma$ we have
    \begin{equation}
        W_1^d(\rho,\sigma) \geq \norm{\rho-\sigma}_{DW_1^d}.
    \end{equation}
\end{prop}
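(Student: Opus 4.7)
The plan is to treat the two claims of the proposition separately: first that $\norm{\cdot}_{DW_1^d}$ is a norm, and then the bound $W_1^d(\rho,\sigma) \geq \norm{\rho-\sigma}_{DW_1^d}$. The second part is the more substantive one operationally but is actually an almost immediate rearrangement of the definition of $L_d$, whereas the norm property is essentially a formal duality statement.

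For the norm property, I would appeal to the fact that the space of traceless self-adjoint operators on $\mathcal{H}$ is a finite-dimensional real vector space equipped with the nondegenerate Hilbert–Schmidt pairing $(A,B) \mapsto \text{Tr}[AB]$. The previous proposition already established that $L_d$ is a norm on this space, and $\norm{\cdot}_{DW_1^d}$ is by definition the dual norm with respect to this pairing. Finiteness, absolute homogeneity, and the triangle inequality for $\norm{\cdot}_{DW_1^d}$ then follow from the standard fact that the dual of a norm on a finite-dimensional real inner-product space is a norm. Nondegeneracy of $\norm{\cdot}_{DW_1^d}$ follows because if $X \neq 0$ is traceless self-adjoint, one can pick $O$ proportional to $X$, rescale so that $L_d(O) \leq 1$, and obtain $\text{Tr}[OX] > 0$; equivalently, a nontrivial linear functional on a finite-dimensional space is detected by some unit vector in the predual norm.

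For the inequality, the key observation is that the definition \eqref{eq:d_lipschitz} of $L_d$ immediately yields, for every $\rho \neq \sigma$ and every traceless self-adjoint $O$,
\begin{equation}
\text{Tr}[O(\rho-\sigma)] \leq L_d(O)\, W_1^d(\rho,\sigma).
\end{equation}
Restricting to those $O$ with $L_d(O) \leq 1$ gives $\text{Tr}[O(\rho-\sigma)] \leq W_1^d(\rho,\sigma)$, and taking the supremum over such $O$ yields $\norm{\rho-\sigma}_{DW_1^d} \leq W_1^d(\rho,\sigma)$. The case $\rho = \sigma$ is trivial since both sides vanish.

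The main subtlety I expect is verifying nondegeneracy cleanly: one needs to know that $L_d(O)$ is finite (so the rescaling $O \mapsto O/L_d(O)$ makes sense), which uses the assumption that the 2-norm is Lipschitz with respect to $d$ together with Proposition~\ref{prop:more-than-norm}, exactly as in the proof that $L_d$ itself is a norm. Apart from this bookkeeping, both parts of the statement are formal consequences of the setup; no optimisation over transport plans needs to be revisited.
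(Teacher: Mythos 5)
Your proof is correct and follows essentially the same route as the paper's: the inequality is obtained by exactly the same rearrangement of the definition of $L_d$ followed by a supremum over $O$ with $L_d(O)\leq 1$, and the norm properties are the standard dual-norm facts (the paper verifies homogeneity and the triangle inequality by hand rather than citing general duality, and for nondegeneracy it uses the projector onto the positive part of $X$ where you use $O\propto X$, but both witnesses work). No gaps.
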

\begin{proof}
    For the norm, it is clear that when $X=0$ we have $\norm{X}_{WD_1^d}$ equal to zero. The fact that $\norm{\lambda X}_{DW_1^d} = | \lambda | \norm{X}_{DW_1^d} $ is clear from the definition, and the triangle inequality holds as
    \begin{equation}
        \norm{X_1+X_2}_{DW_1^d} = \sup_{L_d(O) \leq 1} \text{Tr}[O(X_1+X_2)] \leq \sup_{L_d(O_1) \leq 1} \text{Tr}[O_1X_1] + \sup_{L_d(O_2) \leq 1} \text{Tr}[O_2X_2] = \norm{X_1}_{DW_1^d}+\norm{X_2}_{DW_1^d}.
    \end{equation}
    For nondegeneracy, take $O$ to be the projector onto the positive part of $X$ normalised to $L_d(O) = 1$, which gives $\text{Tr}[OX] > 0$.

    On states $\rho,\sigma$, let $O$ be any operator with $L_d(O) \leq 1$. Then 
    \begin{equation}
        1 \geq L_d(O) = \sup_{\rho' \neq \sigma'} \frac{\text{Tr}[O(\rho'-\sigma')]}{W_1^d(\rho',\sigma')} \geq \frac{\text{Tr}[O(\rho-\sigma)]}{W_1^d(\rho,\sigma)}
    \end{equation}
    and so $\text{Tr}[O(\rho-\sigma)] \leq W_1^d(\rho,\sigma)$. Taking the supremum over $O$ gives $\norm{\rho-\sigma}_{DW_1^d} \leq W_1^d(\rho,\sigma)$.
\end{proof}

The inequality $W_1^d(\rho,\sigma) \geq \norm{\rho-\sigma}_{DW_1^d}$ is important in our understanding of these quantities. It is as yet unclear whether or not equality holds in this equation, and establishing the conditions for equality is an important open problem, not least because this would allow us to recover the triangle inequality for $W_1^d$. There are some simple necessary conditions for equality in this equation, namely that $W_1^d$ depends only on $\rho - \sigma$, and that it scales like a norm.

However, we can consider the case where there exists a metric $\norm{\cdot}_d$ such that $d(\ket{\psi},\ket{\varphi})=\norm{\ketbra{\psi}{\psi}-\ketbra{\varphi}{\varphi}}_d$. Replacing $\norm{\cdot}_2$ with $\norm{\cdot}_d$ in the proof of Proposition \ref{prop:more-than-norm}, and using $d(\ket{\psi},\ket{\varphi}) = \norm{\ketbra{\psi}{\psi}-\ketbra{\varphi}{\varphi}}_d$, we get that $W_1^d(\rho,\sigma) \geq \norm{\rho-\sigma}_d$. This allows us to recover even more properties of $\norm{\cdot}_{DW_1^d}$:

\begin{prop} \label{prop:more-than-underlying-norm}
    Suppose that there $\mathcal{H}$ is finite-dimensional exists norm $\norm{\cdot}_d$ such that $d(\ket{\psi},\ket{\varphi})=\norm{\ketbra{\psi}{\psi}-\ketbra{\varphi}{\varphi}}_d$. Then:
    \begin{enumerate}
        \item $W_1^d(\rho,\sigma) \geq \norm{\rho-\sigma}_{DW_1^d} \geq \norm{\rho-\sigma}_{d}.$
        \item When $\rho$ and $\sigma$ are pure, this becomes an equality. \label{propitem:equality}
        \item For all $1 \leq q \leq \infty$, we have $W_q^d = W_q^{DW_1^d}$.
    \end{enumerate}
\end{prop}
\begin{proof}
As discussed above, we know that $\norm{\rho-\sigma}_d \leq W_1^d(\rho,\sigma)$. Letting $\norm{\cdot}_D$ be the dual to $\norm{\cdot}_d$ on the space of traceless Hermitian operators, we have then that $\norm{O}_D \geq L_d(O)$. Dualising once more, as in finite dimension the corresponding Banach space is reflexive, we know for all $X$ that $\norm{X}_d \leq \norm{X}_{DW_1^d}$. Combining this with Proposition \ref{prop:dual_basic_properties} gives the chain
        \begin{equation}\label{eq:dual_inequalities_chain}
        W_1^d(\rho,\sigma) \geq \norm{\rho-\sigma}_{DW_1^d} \geq \norm{\rho-\sigma}_{d}.
        \end{equation}
If $\rho, \sigma$ are pure then by Proposition \ref{prop:agree-on-pure-states} we have $W_1^d(\rho,\sigma) = \norm{\rho-\sigma}_d$, and so the whole chain \eqref{eq:dual_inequalities_chain} is an equality.
It follows that for $\ket{\psi}, \ket{\varphi}$, we have
        \begin{equation}
            d(\ket{\psi},\ket{\varphi}) = \norm{\ketbra{\psi}{\psi}-\ketbra{\varphi}{\varphi}}_{DW_1^d}.
        \end{equation}
        This means in turn that for all $1 \leq q \leq \infty$, we have $W_q^d = W_q^{DW_1^d}$.
\end{proof}
We restricted to the finite-dimensional case here to ensure that reflexivity of the Banach space is automatic. In the infinite-dimensional separable case, this result also holds when the Banach space corresponding to $\norm{\cdot}_d$ is reflexive, though this is no longer guaranteed. The first statement of Proposition \ref{prop:more-than-underlying-norm} tells us that $\norm{\cdot}_{DW_1^d}$ is maximal among norms that agree with $\norm{\cdot}_d$ on distances between pure states, and the third statement tells us that we can, without changing $W_1^d$, assume that the underlying norm $\norm{\cdot}_d$ is just $\norm{\cdot}_{DW_1^d}$.

We present this norm here, not as a replacement for $W_p^d$, but because the two complement each other nicely. This method of defining $\norm{\cdot}_{DW_1^d}$ via a supremum of a trace of the difference of two states over Lipschitz observables mirrors exactly the Kantorovich-Rubinstein theorem from classical optimal transport~\cite[p. 34]{villani2003topics}. In taking the double dual, we gain an easy proof of the triangle inequality, albeit at the expense of flexibility of order $p$ and of natural interpretation in terms of transport plans and couplings. We have seen here that $W_p^d$ and $\norm{\cdot}_{DW_1^d}$ are closely related, particularly when $d$ is induced by a norm. We will see later in Sec.~\ref{section:WpH} and Sec.~\ref{section:Wp1} that in many cases $\norm{\cdot}_{DW_1^d}$ essentially recovers the original norm $\norm{\cdot}_d$.

\section{Special instances}\label{sec:special_instances}

\subsection{$W_1^H$ distance on $n$-qudit systems} \label{section:WpH}

\cite{de2020quantum} introduced a quantum Wasserstein distance of order 1 which generalises the Hamming distance on the discrete hypercube. 
This is defined on Hilbert space $\mathcal{H} = \left(\mathbb{C}^d\right)^{\otimes n}$. The distance defined is normed, and we notate it here by $\norm{\rho-\sigma}_{W_1^H}$.

$\norm{\cdot}_{W_1^H}$ has the interesting property that it recovers the classical first-order Wasserstein distance $\mathcal{W}_1^H$ on the Hamming cube, for states that are diagonal in the computational basis. That is, for $r$ and $s$ probability distributions on $\{0,1,\dots, d-1\}^n$, and states $\rho = \sum_{x \in \{0,1,\dots, d-1\}^n} r(x)\ketbra{x}{x}$ and $\sigma = \sum_{y \in \{0,1,\dots,d-1\}^n} s(y) \ketbra{y}{y}$ we have
\begin{equation} \label{eq:w1H_norm_generalises_hamming}
\norm{\rho-\sigma}_{W_1^H} = \mathcal{W}_1^H(r,s).    
\end{equation}

Furthermore, its formulation mirrors the Kantorovich-Rubinstein theorem with the definition of the quantum Lipschitz constant \cite[Definition 8]{de2020quantum}:
\begin{equation} \label{eq:defn_quantum_lipschitz_w1H}
    \norm{O}_L = \max\{ \text{Tr}[O(\rho-\sigma)] : \norm{\rho-\sigma}_{W_1^H} \leq 1\}.
\end{equation}
It then induces a metric $d_H$ on $\mathbb{P}\mathcal{H}$, given by $d(\ket{\psi},\ket{\varphi}) = \norm{\ketbra{\psi}{\psi}-\ketbra{\varphi}{\varphi}}_{W_1^H}$, from which we can define a $p^{\text{th}}$ order Wasserstein distance $W_p^H$ as above. For the special case $p=1$, we will write $W_{p=1}^H$ to avoid confusion.

We might expect that property in equation \eqref{eq:w1H_norm_generalises_hamming} extends to the $W_p^H$ distance: that is, that for classical states $\rho$ and $\sigma$ defined by $r$ and $s$ as above, that
\begin{equation} \label{eq:clas_quant_equal}
W_p^H(\rho,\sigma) = \mathcal{W}_p^H(r,s).    
\end{equation}
However, in this case, we recover some interesting differences between the classical and quantum definitions of the $p$-Wasserstein distances, by taking `quantum shortcuts' in our transport plans.

For $p=1$, equality occurs in \eqref{eq:clas_quant_equal}: an optimal classical coupling $\gamma$ between $r$ and $s$ gives us the transport plan
$    \{ (\gamma(x,y), \ket{x}, \ket{y} \}_{x,y \in \{0,1,\dots, d-1\}^n}$
which has transport cost $\mathcal{W}_1^H(r,s)$, and we know that $W_{p=1}^H(\rho,\sigma) \geq \norm{\rho-\sigma}_{W_1^H} = \mathcal{W}_1^H(r,s)$ from Proposition \ref{prop:more-than-underlying-norm} above and Proposition 6 of \cite{de2020quantum}.

However, for $p > 1$, we find an interesting phenomenon of quantum `shortcuts' between classical states. Indeed, consider the case $n = d = 2$ and the states $\rho = \ketbra{00}{00}$, $\sigma = \frac{\mathbb{I}^{\otimes 2}}{4}$. Classically, the distributions $\delta_{00}$ and the uniform distribution have a $\mathcal{W}_{\infty}^H$ distance of $2$. However, consider the quantum transport plan
\begin{equation} \label{eq:quantum_shortcut_plan}
    \left\{\left( \frac{1}{4}, \ket{00}, \ket{\varphi^{\pm \pm}} \right)\right\}
\end{equation}
for the standard 2-qubit Bell states 
\begin{equation}
\begin{split}
\ket{\varphi^{++}} = \frac{1}{\sqrt{2}}(\ket{00}+\ket{11}) \qquad 
\ket{\varphi^{+-}} = \frac{1}{\sqrt{2}}(\ket{00}-\ket{11}) \\
\ket{\varphi^{-+}} = \frac{1}{\sqrt{2}}(\ket{01}+\ket{10}) \qquad 
\ket{\varphi^{--}} = \frac{1}{\sqrt{2}}(\ket{01}-\ket{10}).
\end{split}
\end{equation}
This corresponds to the separable coupling
\begin{equation}
    \tau = \sum_{\pm \pm}\frac{1}{4} \ketbra{00}{00} \otimes \ketbra{\varphi^{\pm \pm}}{\varphi^{\pm \pm}}
\end{equation}
of $\rho$ and $\sigma$.
Proposition 2 of \cite{de2020quantum} shows that both 
\begin{equation}
    \norm{\ketbra{00}{00} - \ketbra{\varphi^{+ \pm}}{\varphi^{+\pm}}}_{W_1^H} \leq \sqrt{2} \qquad \text{and} \qquad \norm{\ketbra{00}{00} - \ketbra{\varphi^{- \pm}}{\varphi^{-\pm}}}_{W_1^H} \leq \frac{\sqrt{5}+3}{4}.
\end{equation}
This in turn gives us $W_{\infty}^H(\rho,\sigma) \leq \max \left\{\sqrt{2},\frac{\sqrt{5}+3}{4}\right\} = \sqrt{2}$. While these quantum shortcuts are intriguing, it is as yet unclear the full extent to which they can appear and how they behave for different distances $d$. 

From Proposition \ref{prop:Wp_inherits_symmetries} we see that $W_p^H$ does inherit the invariance properties of $\norm{\cdot}_{W_1^H}$, namely invariance under local unitaries and qubit swaps. We can also relate the double-dual norm $\norm{\cdot}_{DW_1^H}$ to the norm $\norm{\cdot}_{W_1^H}$ as follows.
\begin{prop} \label{prop:Hamming_norms_equiv} Let $\rho, \sigma$ be states on $\mathcal{H} = \left(\mathbb{C}^d\right)^{\otimes n}$. Then
\begin{equation} \label{eq:Hamming_norms_equiv}
      2\norm{\rho-\sigma}_{W_1^H} \geq \norm{\rho-\sigma}_{DW_1^H} \geq \norm{\rho-\sigma}_{W_1^H}.
\end{equation}
\end{prop}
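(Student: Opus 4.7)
The right-hand inequality $\norm{\rho-\sigma}_{DW_1^H} \geq \norm{\rho-\sigma}_{W_1^H}$ will follow immediately from item~1 of the preceding proposition specialised to $d = d_H$: by definition $d_H(\ket{\psi},\ket{\varphi}) = \norm{\ketbra{\psi}{\psi}-\ketbra{\varphi}{\varphi}}_{W_1^H}$ is induced by the norm $\norm{\cdot}_{W_1^H}$, so that general statement applies verbatim.

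For the left-hand inequality, the plan is to prove the bound at the level of observables and then dualise. Since $\norm{\cdot}_{W_1^H}$ is the predual of the Lipschitz norm $\norm{\cdot}_L$ from \cite{de2020quantum} (via the Kantorovich-Rubinstein-type duality proved there), and $\norm{\cdot}_{DW_1^H}$ is by construction the predual of $L_{d_H}(\cdot)$, it will suffice to establish the operator-level bound $\norm{O}_L \leq 2\,L_{d_H}(O)$ for every traceless self-adjoint $O$. For this I would invoke the characterisation from \cite{de2020quantum} that $\norm{O}_L = 2 \max_{i} \min_{O^{(i)}} \norm{O - O^{(i)}}_\infty$, where $O^{(i)}$ ranges over Hermitian operators on $\mathcal{H}$ that do not act on qudit $i$. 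Fix an $i^{*}$ and an optimal $O^{(i^{*})}$ realising this min-max, and set $Y := O - O^{(i^{*})}$, so $\norm{Y}_\infty = \norm{O}_L/2$. It then suffices to exhibit pure states $\ket{\psi},\ket{\varphi}$ sharing their marginal on the complement of qudit $i^{*}$ with $\text{Tr}[O(\ketbra{\psi}{\psi}-\ketbra{\varphi}{\varphi})] \geq \norm{Y}_\infty$, since such states are neighbouring (so $\norm{\ketbra{\psi}{\psi}-\ketbra{\varphi}{\varphi}}_{W_1^H} \leq 1$) and this would force $L_{d_H}(O) \geq \norm{Y}_\infty = \norm{O}_L/2$.

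I would construct these pure states via a twirling argument on $Y$. Let $\bar Y := \int dU\, (U^\dagger \otimes I) Y (U \otimes I)$ denote the Haar average over unitaries acting only on qudit $i^{*}$; invariance of $dU$ forces $\bar Y$ to commute with every unitary of the form $V \otimes I$, so $\bar Y = I_{i^{*}} \otimes \tilde Y$ for some $\tilde Y$. Consequently $O^{(i^{*})} + \bar Y$ is an admissible competitor in the inner minimisation, and optimality of $O^{(i^{*})}$ forces $\norm{Y - \bar Y}_\infty \geq \norm{Y}_\infty$; moving the norm inside the Haar integral via convexity then yields $\sup_U \norm{Y - (U^\dagger \otimes I)Y(U\otimes I)}_\infty \geq \norm{Y}_\infty$, with the supremum attained by compactness of $\mathcal{U}(\mathcal{H}_{i^{*}})$. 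Choosing such a $U$ together with a pure state $\ket{\varphi}$ attaining the corresponding operator norm, and setting $\ket{\psi} = (U\otimes I)\ket{\varphi}$, produces states differing only by a unitary on qudit $i^{*}$, and therefore sharing their marginal on the complement. Because $O^{(i^{*})}$ acts trivially on qudit $i^{*}$, its expectation values on $\ket{\psi}$ and $\ket{\varphi}$ agree, so $\text{Tr}[O(\ketbra{\psi}{\psi}-\ketbra{\varphi}{\varphi})] = \braket{\varphi|(U^\dagger \otimes I)Y(U\otimes I) - Y|\varphi}$, whose absolute value equals $\norm{Y - (U^\dagger\otimes I)Y(U\otimes I)}_\infty \geq \norm{Y}_\infty$; swapping the two states if necessary gives the signed bound.

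The main obstacle I anticipate is making the twirling step fully rigorous --- in particular verifying that the min in the characterisation from \cite{de2020quantum} is attained, and that the Haar-averaged operator really does yield a competitor in the required subspace --- and tracking the dualisation carefully so that the factor $2$ in the operator bound converts correctly into the factor $2$ in the norm bound. Once these points are settled, the remainder of the argument is essentially direct.
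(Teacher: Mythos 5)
Your proof is correct and follows essentially the same route as the paper's: the lower bound is taken from the general dual chain for norm-induced metrics, and the upper bound is obtained by proving the observable-level estimate $\norm{O}_L \le 2L_{d_H}(O)$ via a pair of neighbouring pure states related by a unitary on a single qudit produced through a Haar average, and then dualising. The only differences are cosmetic --- the paper twirls the top eigenvector of the explicit operator $O - \text{Tr}_i O \otimes \mathbb{I}_i$ rather than twirling $Y$ against an abstract optimal $O^{(i^*)}$ --- and one small correction: Proposition 15 of de Palma et al.\ gives only the one-sided bound $\norm{O}_L \le 2\max_i\min_{O^{(i)}}\norm{O-O^{(i)}}_\infty$, not the equality you assert, though your argument only ever uses that direction, so nothing breaks.
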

\begin{proof}
    The lower bound comes from equation \eqref{eq:dual_inequalities_chain}. For the upper bound, we refer to the dual $\norm{O}_L$ and show that $\norm{O}_L \leq 2L_d(O)$.
    Taking the dual of this equation will give the upper bound required.

    We have from \cite[Proposition 15]{de2020quantum}, that
    \begin{equation}
        2 \max_{1 \leq i \leq n} \norm{O - \text{Tr}_i O \otimes \mathbb{I}_i}_{\infty} \geq \norm{O}_L.
    \end{equation}
    So we will show that $L_d(O) \geq \max_{1 \leq i \leq n} \norm{O - \text{Tr}_i O \otimes \mathbb{I}_i}_{\infty}$.

    Fix $i$, and let $\ket{\psi}$ be an eigenvector of $O - \text{Tr}_i O \otimes \mathbb{I}_i$ with an eigenvalue which is maximal in absolute value, assuming without loss of generality that this is positive. Let $\mu_{\text{Haar}}$ be the Haar measure on the unitary group $\mathcal{U}(d)$ acting on the $i^{\text{th}}$ qudit. Writing $U_i$ for a unitary on the $i^{\text{th}}$ qubit, $\mathbb{I}_i$ for the identity on the $i^{\text{th}}$ qubit, and $\mathbb{I}_{\hat{i}}$ for the identity acting on all except the $i^{\text{th}}$ qubit, we note that 
    \begin{equation}
        \mathbb{E}_{U_i \sim \mu_{\text{Haar}}}\text{Tr}[(O - \text{Tr}_i O \otimes \mathbb{I}_i)U_i \otimes \mathbb{I}_{\hat{i}} \ketbra{\psi}{\psi}U^{\dagger} \otimes \mathbb{I}_{\hat{i}}] = \text{Tr}[(O - \text{Tr}_i O \otimes \mathbb{I}_i) \text{Tr}_i \ketbra{\psi}{\psi} \otimes \mathbb{I}_i /d] = 0.
    \end{equation}
    Therefore taking any $U_i \in \mathcal{U}(d)$ such that $U_i\ket{\psi}$ has $\text{Tr}[(O - \text{Tr}_i O \otimes \mathbb{I}_i)U_i \otimes \mathbb{I}_{\hat{i}}\ketbra{\psi}{\psi}U_i^{\dagger}\otimes \mathbb{I}_{\hat{i}}] < 0$, let this $U_i\otimes \mathbb{I}_{\hat{i}}\ket{\psi}$ be $\ket{\varphi}$. Note that $\text{Tr}_i[\ketbra{\psi}{\psi}] = \text{Tr}_i[\ketbra{\varphi}{\varphi}]$ and therefore $\norm{\ketbra{\psi}{\psi} - \ketbra{\varphi}{\varphi}}_{W_1^H} \leq 1$. This gives
    \begin{align}
        \frac{\text{Tr}[O(\ketbra{\psi}{\psi}-\ketbra{\varphi}{\varphi})]}{\norm{\ketbra{\psi}{\psi} - \ketbra{\varphi}{\varphi}}_{W_1^H}} 
            &\geq \frac{\text{Tr}[O(\ketbra{\psi}{\psi}-\ketbra{\varphi}{\varphi})]}{1} \\
            &= \text{Tr}[O(\ketbra{\psi}{\psi}-\ketbra{\varphi}{\varphi})] + \text{Tr}[\text{Tr}_i O (\text{Tr}_i \ketbra{\psi}{\psi} - \text{Tr}_i \ketbra{\varphi}{\varphi}])] \\
            &= \text{Tr}[(O - \text{Tr}_i O \otimes \mathbb{I}_i)(\ketbra{\psi}{\psi}-\ketbra{\varphi}{\varphi})] \\
            & \geq \norm{O - \text{Tr}_i O \otimes \mathbb{I}_i}_{\infty}
    \end{align}
    which concludes the proof.
\end{proof}

For this specific instance, we have a natural way to look at $p^{\text{th}}$-order quantum Wasserstein distances under tensor products. Indeed, for qudit systems $\mathcal{H}_a$ and $\mathcal{H}_b$ each equipped with $\norm{\cdot}_{W_1^H}$ on $n_a$ and $n_b$ qudits respectively, $\mathcal{H}_a \otimes \mathcal{H}_b$ can be equipped with $\norm{\cdot}_{W_1^H}$ on $n_a + n_b$ qudits. We can therefore apply the general result \ref{prop:subadditive} to $W_p^H$ as $\norm{\cdot}_{W_1^H}$ is additive under tensor products \cite[Proposition 4]{de2020quantum}.

\subsection{Trace distance} \label{section:Wp1}
In this section, we consider the case where $d$ is induced by the trace distance (that is, half the trace norm). For pure states $\ket{\psi}, \ket{\varphi}$, we have
\begin{equation}
    d(\ket{\psi},\ket{\varphi}) = \frac{1}{2}\norm{\ketbra{\psi}{\psi} - \ketbra{\varphi}{\varphi}}_1 = \sqrt{1-\left\lvert \braket{\psi | \varphi}\right\rvert^2}.
\end{equation}

The trace distance, when applied to pure states on a $D$-dimensional space, is a direct analogue of the discrete metric on a space with $D$ elements. This means that it is the simplest case to consider in terms of optimal transport. The trace distance on mixed states can also be considered as a quantum generalisation of the total variation distance $d_{TV}(\mu,\nu) = \sum_{i=1}^D |\mu(i)-\nu(i)|$, which is the classical $\mathcal{W}_1$ distance on this discrete space.

We write the $p^{\text{th}}$-order transport cost associated with this distance as $T_p^1$ and the associated quantum Wasserstein distance as $W_p^1$. We note from Proposition 2 of \cite{DePalma2021} that this agrees with $W_p^H$ in the case of a single qudit. From this we inherit all the properties of $W_p^H$, notably Proposition \ref{prop:Hamming_norms_equiv} which gives us $\norm{\rho-\sigma}_1 \geq \norm{\rho-\sigma}_{DW_1^1} \geq \frac{1}{2}\norm{\rho-\sigma}_1$. However, we can go one step further.

\begin{prop} \label{prop:Trace_norms_equivalent}
    Let $\rho, \sigma$ be states on $\mathbb{C}^d$. Then
    \begin{equation} \label{eq:Trace_norms_equivalent}
        \norm{\rho - \sigma}_{DW_1^1} = \frac{1}{2} \norm{\rho-\sigma}_{1}.
    \end{equation}
\end{prop}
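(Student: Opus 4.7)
The plan is as follows. The lower bound $\norm{\rho-\sigma}_{DW_1^1} \geq \tfrac{1}{2}\norm{\rho-\sigma}_1$ is immediate because the underlying pure-state metric $d$ is induced by the norm $\tfrac{1}{2}\norm{\cdot}_1$: the chain~\eqref{eq:dual_inequalities_chain}, established in the proposition just before Proposition~\ref{prop:Hamming_norms_equiv}, gives $\norm{\rho-\sigma}_{DW_1^1}\geq \norm{\rho-\sigma}_d = \tfrac{1}{2}\norm{\rho-\sigma}_1$. Everything therefore reduces to proving the reverse inequality, and my strategy is to identify (up to harmless identity shifts) the unit ball of $L_d$ as the operator interval $[-\tfrac{1}{2}I,\tfrac{1}{2}I]$.

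To that end, I would first invoke Proposition~\ref{prop:same_pure} to evaluate $L_d(O)$ as a supremum over pairs of pure states, where $d(\ket{\psi},\ket{\varphi}) = \sqrt{1-|\braket{\psi|\varphi}|^2}$. The key observation is that this quantity equals $1$ precisely when $\ket{\psi}\perp\ket{\varphi}$. Testing the ratio~\eqref{eq:lipschitz_achieved_at_pure} on a pair of orthonormal eigenvectors of $O$ corresponding to its largest and smallest eigenvalues then yields
\begin{equation}
\lambda_{\max}(O) - \lambda_{\min}(O) \leq L_{d}(O).
\end{equation}

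Now, given any self-adjoint $O$ with $L_{d}(O)\leq 1$, the spread above is at most $1$, so I may shift $O$ by $-\tfrac{\lambda_{\max}+\lambda_{\min}}{2}\,I$, which alters neither $L_{d}(O)$ nor $\text{Tr}[O(\rho-\sigma)]$ (since $\rho-\sigma$ is traceless), to arrange $-\tfrac{1}{2}I \leq O \leq \tfrac{1}{2}I$. Setting $P := O + \tfrac{1}{2}I$ then gives $0 \leq P \leq I$ and
\begin{equation}
\text{Tr}[O(\rho-\sigma)] = \text{Tr}[P(\rho-\sigma)] \leq \sup_{0 \leq P' \leq I} \text{Tr}[P'(\rho-\sigma)] = \tfrac{1}{2}\norm{\rho-\sigma}_1,
\end{equation}
by the standard variational characterisation of the trace distance. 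Taking the supremum over admissible $O$ delivers the required upper bound.

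I do not anticipate any real obstacle: the whole argument hinges on the observation that in the trace metric, orthogonal pure states are at distance exactly one, so the Lipschitz constraint $L_{d}(O)\leq 1$ is already enforced by the operator-norm spread of $O$. The only care required is the reduction of $L_d$ to pure states via Proposition~\ref{prop:same_pure}, and the identity shift that turns a spread constraint into the operator interval $-\tfrac{1}{2}I\leq O\leq \tfrac{1}{2}I$ against which the trace norm is naturally dual.
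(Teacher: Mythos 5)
Your proof is correct, but it reaches the upper bound by a genuinely different route than the paper. The paper fixes an arbitrary observable $O$ and shows $\sup_{\rho\neq\sigma}\mathrm{Tr}[O(\rho-\sigma)]/\tfrac{1}{2}\norm{\rho-\sigma}_1\leq L_1(O)$ by taking the Jordan decomposition $\rho-\sigma=\rho'-\sigma'$ with $\rho'\perp\sigma'$, expanding both parts spectrally, coupling the two spectral measures classically, and observing that every cross term $\norm{\ketbra{\psi_j}{\psi_j}-\ketbra{\varphi_k}{\varphi_k}}_1$ equals $2$ because the eigenvectors are orthogonal; a mediant inequality and one more abstract dualisation (the double dual of $\tfrac{1}{2}\norm{\cdot}_1$ being itself) then finish the job. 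You instead work on the primal side: you bound the spread $\lambda_{\max}(O)-\lambda_{\min}(O)$ by $L_1(O)$ using a single pair of orthogonal eigenvectors, shift by a multiple of the identity (harmless against traceless $\rho-\sigma$) to land in the operator interval $[-\tfrac{1}{2}I,\tfrac{1}{2}I]$, and invoke the variational characterisation $\tfrac{1}{2}\norm{\rho-\sigma}_1=\sup_{0\leq P\leq I}\mathrm{Tr}[P(\rho-\sigma)]$ that the paper records in its definition of the trace distance. Both arguments pivot on the same fact --- orthogonal pure states sit at trace distance exactly one --- but yours makes the duality concrete rather than invoking the bidual theorem, and the derivation of the key inequality $\mathrm{spread}(O)\leq L_1(O)$ from one eigenvector pair is more economical than the paper's coupling argument. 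The only point deserving a remark is that the admissible $O$ in the definition of $\norm{\cdot}_{DW_1^1}$ need not be traceless, but since both $L_d(O)$ and $\mathrm{Tr}[O(\rho-\sigma)]$ are invariant under $O\mapsto O+cI$, your identity shift is legitimate exactly as you use it.
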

\begin{proof}
    That $\norm{\rho - \sigma}_{DW_1^1} \geq \frac{1}{2} \norm{\rho-\sigma}_{1}$ is a direct consequence of Proposition \ref{prop:more-than-underlying-norm}. For the other direction, we once again consider the dual. The double dual of the $1$-norm is itself, and so to show inequality in the other direction we need only show for any traceless Hermitian operator $O$ that 
    \begin{equation}
        \sup_{\rho \neq \sigma} \frac{\text{Tr}[O(\rho-\sigma)]}{\frac{1}{2}\norm{\rho-\sigma}_1} \leq L_1(O)
    \end{equation}
    as taking the dual of this equation would give the inequality in the other direction.

    For any $\rho \neq \sigma$, let $\rho - \sigma = \rho' - \sigma'$ for $\rho', \sigma'$ positive semidefinite operators with $\rho' \perp \sigma'$. Then write $\rho' = \sum_{j \in J} \mu_j \ketbra{\psi_j}{\psi_j}$ and $\sigma' = \sum_{k\in K} \nu_k \ketbra{\varphi_k}{\varphi_k}$ in spectral decompositions. Let $\gamma$ be a classical coupling of the measures $\mu$ and $\nu$. It then follows that
    \begin{equation}
    \frac{1}{2} \norm{\rho-\sigma}_1 = \frac{1}{2}\sum_{j \in J, k \in K} \gamma_{j,k} \norm{\ketbra{\psi_j}{\psi_j} - \ketbra{\varphi_k}{\varphi_k}}_1
    \end{equation}
    and so
    \begin{equation}
        \frac{\text{Tr}[O(\rho-\sigma)]}{\frac{1}{2}\norm{\rho-\sigma}_1} = \frac{\sum_{j \in J,k \in K} \gamma_{j,k}\text{Tr}[O(\ketbra{\psi_j}{\psi_j} - \ketbra{\varphi_k}{\varphi_k})]}{\sum_{j \in J,k \in K}\gamma_{j,k} \norm{\ketbra{\psi_j}{\psi_j} - \ketbra{\varphi_k}{\varphi_k}}_1} \leq \sup_{\ket{\psi} \neq \ket{\varphi}} \frac{ \text{Tr}[O(\ketbra{\psi}{\psi} - \ketbra{\varphi}{\varphi})]}{ \norm{\ketbra{\psi}{\psi} - \ketbra{\varphi}{\varphi}}_1} = L_1(O)
    \end{equation}
    which concludes the proof.
\end{proof}

We can also now give an example to show that joint convexity does not hold in the case $p > 1$, analogously to the classical case. Using the notation of Proposition \ref{prop:joint_convexity_p=1} consider $\rho_1 = \ketbra{0}{0}$, $\sigma_1 = \ketbra{0}{0}$, $\rho_2 = \ketbra{0}{0}$, and $\sigma_2 = \ketbra{1}{1}$, with $r_1 = r_2 = \frac{1}{2}$. In this case $r_1 W_p^1(\rho_1,\sigma_1) + r_2W_p^1(\rho_2,\sigma_2) = \frac{1}{2}$ no matter the value of $p$, and $r_1\rho_1 + r_2 \rho_2 = \ketbra{0}{0}$, $r_1 \sigma_1 + r_2 \sigma_2 = \frac{\mathbb{I}}{2}$. By the proof of Proposition \ref{prop:more-than-underlying-norm} we have that for any transport plan $Q = \{ (q_j, \ket{0}, \ket{\varphi_j} )\}_{j \in J}$ between $\ketbra{0}{0}$ and $\frac{\mathbb{I}}{2}$ we have
\begin{equation}
    T_p^1 (Q) ^{1/p} \geq T_1^1(Q) \geq \norm{\ketbra{0}{0} - \frac{\mathbb{I}}{2}}_1 = \frac{1}{2}.
\end{equation}

Equality in the first comparison happens if and only $d(\ket{0},\ket{\varphi_j})$ is constant in $j$ since $x \mapsto x^p$ is strictly convex. This means that the $\ket{\varphi_j}$ must all take form $\alpha \ket{0} + \sqrt{1-|\alpha|^2} e^{i \theta_j} \ket{1}$ for some fixed $\alpha$. We must have $\alpha = \frac{1}{\sqrt{2}}$ to ensure the end state is indeed $\frac{\mathbb{I}}{2}$. Equality in the second comparison then happens only if all $\ketbra{0}{0} - \ketbra{\varphi_j}{\varphi_j}$ commute, meaning all phases $e^{i \theta_j}$ must be the same. But then the end state cannot be $\frac{\mathbb{I}}{2}$, therefore a plan $Q$ between $\ketbra{0}{0}$ and $\frac{\mathbb{I}}{2}$ satisfying equality here cannot exist. We conclude that joint convexity does not hold for $p > 1$.

\subsection{Complexity geometry} \label{section:WpC}

A distance $d(I, U)$ giving lower bounds for the complexity of synthesising a unitary $U \in \mathcal{SU}(2^n)$ from a universal one- and two-qubit gate set was defined in~\cite{Nielsen2006} and refined in~\cite{Dowling2008, Nielsen_2006_second}. This distance $d$ on $\mathcal{SU}(2^n)$ is a geodesic distance of a Riemannian manifold, where the Riemannian metric is chosen such that local travel is fast in directions corresponding to multiplication by low-weight unitaries, and slow in directions corresponding to multiplication by high-weight unitaries. 

This idea of expressing quantum gate complexity in terms of Riemannian geometry has seen renewed interest in recent years, from applications in black hole thermodynamics~\cite{brown2016blackholes, heller2023complexreview} to rigid bodies~\cite{brown2019complexitygeom} and the complexity of typical unitaries~\cite{brown2023lowerbounds}. However, this metric is originally defined as a distance between unitaries, with the complexity of unitary expressed in terms of its distance from the identity. This extends naturally to distances between pure states as the lowest complexity of a unitary which transforms one into the other. Our optimal transport formulation allows for a natural extension of this metric to mixed states, in a way that can be considered as quantifying the lowest possible complexity of transforming one mixed state into another. A related approach in ~\cite{ruan2021thesis} extends a variation of this complexity geometry metric, one in which multiplication by a unitary is independent of weight, to mixed states using purification methods described in~\cite{agon2019subsystem}. Another approach in \cite{li2022wasserstein} instead studies complexity using the $\norm{\cdot}_{W_1^H}$ norm.

Formally, \cite{Nielsen2006} defines a right-invariant Riemannian metric $g$ on $\mathcal{SU}(2^n)$ given by the following inner product on $T_I\mathcal{SU}(2^n)$. A vector $X$ in the tangent space $T_U \mathcal{SU}(2^n)$ to $\mathcal{SU}(2^n)$ at $U$ is identified with its Hamiltonian representation $H$, given by $X =\left[ \frac{\text{d}}{\text{d}t} e^{-iHt}U \right]_{t=0}$. $H$ can be decomposed as $H_P + H_Q$, where $H_P$ is a linear combination of Pauli matrices of weight at most $2$, and $H_Q$ is a linear combination of Pauli matrices of weight at least $3$. 
Defining operations $\mathcal{P}$ and $\mathcal{Q}$ by $\mathcal{P}(H) = H_P$ and $\mathcal{Q}(H) = H_Q$, the inner product on $T_U \mathcal{SU}(2^n)$ is then defined by
\begin{equation}
    \langle H, J \rangle = \frac{\text{Tr}(H\mathcal{P}(J)) + q\text{Tr}(H\mathcal{Q}(J))}{2^n}
\end{equation}
where $q > 4^n$ is a penalty parameter. The length $\sqrt{\langle H, H \rangle }$ will then be of order $1$ for 2-local Hamiltonians, and of order $q$ otherwise.
$q > 4^n$ is chosen such that for any geodesic, the associated Hamiltonian path has an approximately low-weight decomposition.

The length of a curve $U(t)$ defined by evolution $\dot{U}(t) = -iH(t)U(t)$ is then defined as standard by
\begin{equation}
    \int \sqrt{\langle H(t), H(t) \rangle} \text{d}t
\end{equation}
and the distance $d(I, U)$ is simply the geodesic distance on this Riemannian manifold.

The main purpose of $d$ is to find a geometric interpretation of gate complexity, and~\cite[Equation 3]{Dowling2008} gives bounds for $d$ in terms of gate complexity. Let $G(U)$ be the exact gate complexity of $U$, i.e. the minimal number of one- and two-qubit gates required to synthesise $U$ exactly. For $\epsilon\geq0$, let $G(U,\epsilon)$ be the minimal number of one- and two-qubit gates required to synthesise a gate $V$ such that $\norm{U-V}_{\infty} \leq \epsilon$, also known as the $\epsilon$-approximate gate complexity. We then have bounds

\begin{equation} \label{eq:complexity_distance_approxcompl_bounds}
    \frac{\kappa G(U,\epsilon)^{1/3}\epsilon^{2/3}}{n^{2}} \leq d(I,U) \leq G(U)
\end{equation}
for some constant $\kappa > 0$.

With these unitaries acting on the $n$-qubit space $\mathcal{H} = (\mathbb{C}^2)^{\otimes n}$, we can define a metric $d_C$ on $\mathbb{P}\mathcal{H}$ by
\begin{equation}
    d_C(\ket{\psi},\ket{\varphi}) = \min\left\{d(I,U): U \in \mathcal{SU}(2^n), U\ket{\psi} = \ket{\varphi} \right\}.
\end{equation}
 The metric properties of $d_C$ come directly from the metric properties and right-invariance of $d$. Operationally, this gives a lower bound for the minimum circuit complexity required to synthesise $\ket{\varphi}$ from $\ket{\psi}$. Indeed, 
 \begin{equation}\label{eq:complexity_distance_approxcompl_bounds_for_states}
         \frac{\kappa (\min \{G(U,\epsilon) : U\ket{\psi} = \ket{\varphi}\})^{1/3}\epsilon^{2/3}}{n^{2}} \leq d_C(\ket{\psi},\ket{\varphi}) \leq \min \{G(U) : U\ket{\psi} = \ket{\varphi} \}.
 \end{equation}

In order to say anything useful about the quantum Wasserstein distance induced by this metric $d_C$, we must demonstrate some basic continuity properties. Continuity of $d_C$ is inherent from the fact that $d$ is a geodesic distance on a Riemannian manifold and Hölder continuity of the 2-norm with respect to $d_C$ comes from the following proposition.
\begin{prop} \label{prop:unitary_complexity_to_2norm}
    Let $U \in \mathcal{SU}(2^n)$. Then
    \begin{equation} \label{eq:unitary_complexity_to_2norm}
        d(I,U) \geq 2^{-n/2} \norm{U-I}_2 .
    \end{equation}
\end{prop}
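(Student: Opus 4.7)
The plan is to exploit two features of the definition of $d$: the penalty parameter $q$ satisfies $q \geq 1$, so the Riemannian metric $\langle \cdot, \cdot \rangle$ dominates a rescaled Hilbert-Schmidt metric on $T_U\mathcal{SU}(2^n)$; and the curves in question are unitary, so the Euclidean speed $\|\dot U(t)\|_2$ equals the Hamiltonian speed $\|H(t)\|_2$. From there the bound follows by the standard fact that Euclidean distance is bounded above by length.

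\medskip

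Concretely, I would first observe that the projectors $\mathcal{P}$ and $\mathcal{Q}$ onto the span of weight-$\leq 2$ and weight-$\geq 3$ Pauli strings are orthogonal with respect to the Hilbert-Schmidt inner product, so for any Hermitian $H$ one has $\operatorname{Tr}(H\mathcal{P}(H)) = \|H_P\|_2^2$, $\operatorname{Tr}(H\mathcal{Q}(H)) = \|H_Q\|_2^2$, and $\|H\|_2^2 = \|H_P\|_2^2 + \|H_Q\|_2^2$. Since $q > 4^n \geq 1$, this gives
\begin{equation}
\langle H, H \rangle = \frac{\|H_P\|_2^2 + q\,\|H_Q\|_2^2}{2^n} \ \geq\ \frac{\|H\|_2^2}{2^n}.
\end{equation}

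\medskip

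Next, for any smooth curve $U(t)$, $t \in [0,1]$, joining $I$ to $U$ with $\dot U(t) = -iH(t)U(t)$, unitarity of $U(t)$ gives $\|\dot U(t)\|_2 = \|H(t)U(t)\|_2 = \|H(t)\|_2$. Combining this with the previous estimate,
\begin{equation}
\int_0^1 \sqrt{\langle H(t),H(t)\rangle}\,\mathrm{d}t \ \geq\ 2^{-n/2}\int_0^1 \|H(t)\|_2\,\mathrm{d}t \ =\ 2^{-n/2}\int_0^1 \|\dot U(t)\|_2\,\mathrm{d}t.
\end{equation}
The triangle inequality for Bochner integrals in the finite-dimensional Hilbert space of matrices equipped with $\|\cdot\|_2$ yields
\begin{equation}
\int_0^1 \|\dot U(t)\|_2\,\mathrm{d}t \ \geq\ \left\|\int_0^1 \dot U(t)\,\mathrm{d}t \right\|_2 \ =\ \|U-I\|_2.
\end{equation}
Taking the infimum over all such curves, I obtain $d(I,U) \geq 2^{-n/2}\|U-I\|_2$.

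\medskip

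I do not anticipate a serious obstacle: the proof is essentially a comparison of Riemannian metrics followed by the standard ``length bounds chord'' argument. The only place to be a little careful is the first step, where one must verify that $\mathcal{P}$ and $\mathcal{Q}$ are Hilbert-Schmidt orthogonal projections onto complementary subspaces of $\mathfrak{su}(2^n)$ so that $\langle H,H\rangle \geq 2^{-n}\|H\|_2^2$; this is immediate from the orthogonality of distinct Pauli strings under the trace inner product.
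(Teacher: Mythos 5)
Your proof is correct and follows essentially the same route as the paper: compare the Riemannian metric to the rescaled Hilbert--Schmidt metric using $q>1$, then bound the chord $\norm{U-I}_2$ by the Euclidean length of the curve and take the infimum. Your use of the integral triangle inequality is in fact cleaner than the paper's discretisation with $o(T/N)$ error terms, and you also supply the justification of $\langle H,H\rangle \geq 2^{-n}\norm{H}_2^2$ that the paper asserts without proof.
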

\begin{proof}
    Suppose $U(t)$ is a geodesic curve in $\mathcal{SU}(2^n)$ with $U(t) = I$ and $U(T) = U$, generated by Hamiltonian $H(t)$ satisfying $\dot{U}(t) = -iH(t)U(t)$. By a family of smooth curves uniformly approximating $U$ we may assume that both $U$ and $H$ are smooth, with first-order Taylor expansion $U(t+h) = U(t) - iH(t)U(t)h + \delta(t,h)h$ for some $\delta$ such that for all $t$, $\delta(t,h) \to 0$ with $h$. By smoothly extending $U$ we may assume that $\delta$ is defined on some $[ 0, T ] \times (-\epsilon, \epsilon)$ and as $[0,T]$ is compact this convergence to $0$ is uniform in $t$. Then for any $N$, we have
    \begin{align}
        \norm{U-I}_2 
            &\leq \sum_{j=0}^{N-1} \norm{U\left((j+1)T/N\right) - U\left(jT/N\right)}_2 \\
            &\leq \sum_{j=0}^{N-1} \norm{i(T/N)H\left(jT/N\right)U\left(jT/N\right) + (T/N)\delta(jT/N, T/N)}_2  \\
            &\leq \sum_{j=0}^{N-1}\left( \frac{T}{N}\norm{iH\left(jT/N\right)U(jT/N)}_2 + \frac{T}{N}\norm{\delta(jT/N, T/N)}_2 \right) \qquad \text{by the triangle inequality}\\
            &\leq \left( \sum_{j=0}^{N-1}\frac{T}{N}\norm{H\left(jT/N\right)}_2 \right)+ \sup_t \norm{\delta(t,T/N)}_2 \qquad \ \text{by unitary invariance of $\norm{\cdot}_2$} \\
            &\to \int_0^T \norm{H(t)}_2 \text{d}t \qquad \text{ as } N \to \infty \text{ by smoothness of $H$} \\
            &\leq 2^{n/2} \int_0^T \sqrt{\langle H(t),H(t) \rangle} \text{d}t
    \end{align}
    
using the inequality $2^{-n/2}\norm{H}_2 \leq \sqrt{\langle H,H \rangle}$ in the last line. Taking the infimum over such curves gives $d(I,U) \geq 2^{-n/2} \norm{U-I}_2 $.
\end{proof}

\begin{cor} \label{prop:state_complexity_to_2norm}
    Let $\ket{\psi},\ket{\varphi} \in \mathbb{P}\mathcal{H}$. Then
    \begin{equation} \label{eq:state_complexity_to_2norm}
        2^{(n+1)/2}d_C(\ket{\psi},\ket{\varphi}) \geq \norm{\ketbra{\psi}{\psi}-\ketbra{\varphi}{\varphi}}_2.
    \end{equation}
\end{cor}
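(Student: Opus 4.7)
The plan is to mimic the proof of Proposition \ref{prop:unitary_complexity_to_2norm}, but track the pure-state trajectory $U(t)\ketbra{\psi}{\psi}U(t)^\dagger$ along a smooth curve $U(t)$ in $\mathcal{SU}(2^n)$ rather than $U(t)$ itself. The essential new ingredient is a commutator--variance identity which turns the velocity estimate into a variance bound, picking up a $\sqrt{2}$ improvement; this is precisely what produces the prefactor $2^{(n+1)/2}$ in the statement, whereas naïvely composing Proposition \ref{prop:unitary_complexity_to_2norm} with a triangle-inequality bound $\norm{\rho-U\rho U^\dagger}_2 \leq 2\norm{U-I}_2$ would only deliver $2^{(n+2)/2}$.

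Concretely, fix $U \in \mathcal{SU}(2^n)$ with $U\ket{\psi} = \ket{\varphi}$ and any smooth curve $U(t)$, $t \in [0,T]$, from $I$ to $U$ generated by a Hamiltonian path $H(t)$ via $\dot U(t) = -iH(t)U(t)$. Set $\rho(t) = U(t)\ketbra{\psi}{\psi}U(t)^\dagger$, so that $\rho(0) = \ketbra{\psi}{\psi}$, $\rho(T) = \ketbra{\varphi}{\varphi}$, and $\dot\rho(t) = -i[H(t),\rho(t)]$. For any pure $\rho = \ketbra{\chi}{\chi}$ and Hermitian $H$, a direct expansion of the trace yields
\begin{equation*}
\norm{[H,\rho]}_2^{\,2} = 2\bigl(\bra{\chi}H^2\ket{\chi} - (\bra{\chi}H\ket{\chi})^{2}\bigr) = 2\,\mathrm{Var}_\chi(H) \leq 2\norm{H}_\infty^{\,2} \leq 2\norm{H}_2^{\,2},
\end{equation*}
and from the definition of $\langle \cdot,\cdot\rangle$ one has $\norm{H}_2^{\,2} = \mathrm{Tr}(H^2) \leq 2^n\langle H,H\rangle$, exactly as used at the end of the proof of Proposition \ref{prop:unitary_complexity_to_2norm}. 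Hence $\norm{\dot\rho(t)}_2 \leq \sqrt{2}\cdot 2^{n/2}\sqrt{\langle H(t),H(t)\rangle}$, and the discretise-then-pass-to-the-limit argument used in that proof (plus the triangle inequality for the Frobenius norm over the partition) gives
\begin{equation*}
\norm{\ketbra{\psi}{\psi} - \ketbra{\varphi}{\varphi}}_2 \leq \int_0^T \norm{\dot\rho(t)}_2 \, dt \leq \sqrt{2}\cdot 2^{n/2}\int_0^T \sqrt{\langle H(t),H(t)\rangle} \, dt.
\end{equation*}
Taking the infimum first over smooth paths $U(\cdot)$ joining $I$ to $U$ (which produces $d(I,U)$ on the right), and then over $U$ satisfying $U\ket{\psi} = \ket{\varphi}$ (which produces $d_C(\ket{\psi},\ket{\varphi})$), the claim follows.

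The only mildly non-routine step is the commutator--variance identity; it is a short computation but crucial, since replacing the bound $\norm{[H,\rho]}_2 \leq \sqrt{2}\norm{H}_2$ by the cruder $\norm{[H,\rho]}_2 \leq 2\norm{H}_2$ costs exactly the $\sqrt{2}$ that separates $2^{(n+1)/2}$ from $2^{(n+2)/2}$. Otherwise the argument is a direct adaptation of the proof of Proposition~\ref{prop:unitary_complexity_to_2norm}, so I do not foresee any substantive obstacle.
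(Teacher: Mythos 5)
Your proof is correct, but it takes a genuinely different route from the paper's. The paper treats the statement as a true corollary of Proposition~\ref{prop:unitary_complexity_to_2norm}: for any $U$ with $U\ket{\psi}=\ket{\varphi}$ it bounds $\norm{U-I}_2^2 \geq \braket{\psi|(U-I)^\dagger(U-I)|\psi} \geq 2(1-|\braket{\psi|\varphi}|)$ and then uses the fidelity formula for $\norm{\ketbra{\psi}{\psi}-\ketbra{\varphi}{\varphi}}_2$ to conclude $\norm{U-I}_2^2 \geq \tfrac12\norm{\ketbra{\psi}{\psi}-\ketbra{\varphi}{\varphi}}_2^2$; there the factor $\sqrt{2}$ comes from the elementary inequality $2(1-F)\geq 1-F^2$. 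You instead re-run the length-integral argument of Proposition~\ref{prop:unitary_complexity_to_2norm} directly on the projected trajectory $\rho(t)=U(t)\ketbra{\psi}{\psi}U(t)^\dagger$, and your $\sqrt{2}$ comes from the commutator--variance identity $\norm{[H,\rho]}_2^2 = 2\,\mathrm{Var}_\chi(H) \leq 2\norm{H}_2^2$, which I have checked and which is correct ($[H,\rho]$ is anti-Hermitian and the trace expands to exactly twice the variance). Both arguments land on the same constant $2^{(n+1)/2}$. What the paper's route buys is brevity, since the unitary-level proposition is already in hand and the remaining step is purely algebraic; what your route buys is a cleaner geometric explanation of the constant (the speed of the state trajectory is controlled by the variance of the generator rather than its full norm) and independence from the phase bookkeeping hidden in $\braket{\psi|U|\psi}=\braket{\psi|\varphi}$. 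The only step you share with the paper that is not fully rigorous in either treatment is the discretise-and-pass-to-the-limit bound on the $o(T/N)$ terms, which the paper also dispatches by appeal to smoothness of the metric, so this is not a gap relative to the paper's own standard.
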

\begin{proof}
    Let $U \in \mathcal{SU}(2^n)$ have $U\ket{\psi} = \ket{\varphi}$. Then
    \begin{align}
        \norm{U-I}_2^2 
            &= \text{Tr}[(U-I)^\dagger (U-I)] \\
            &\geq \braket{\psi | (U-I)^\dagger(U-I) | \psi} \\
            &\geq 2(1 - |\braket{\psi | \varphi}|).
    \end{align}

Using the formula $\norm{\ketbra{\psi}{\psi}-\ketbra{\varphi}{\varphi}}_2^2 = 2(1-|\braket{\psi | \varphi}|)^2$, we get
\begin{align}
    \norm{U-I}_2^2 
        &\geq 2\left(1-\sqrt{1-\frac{1}{2}\norm{\ketbra{\psi}{\psi}-\ketbra{\varphi}{\varphi}}_2^2} \right) \\
        &\geq \frac{1}{2}\norm{\ketbra{\psi}{\psi}-\ketbra{\varphi}{\varphi}}_2^2.
\end{align}
Combining this with Proposition~\ref{prop:unitary_complexity_to_2norm} gives the result.
\end{proof}

It follows from Proposition \ref{prop:more-than-norm} that all $W_p^C$ defined from this $d_C$ are nondegenerate. This gives a natural extension of the ideas of state complexity to mixed states, to which we can apply results such as those which will be discussed in Section \ref{section:operational} on classical-quantum (cq) sources.

Looking at $\rho$ and $\sigma$ in their eigenbases, we can give a concrete interpretation of these values. Indeed, let
\begin{equation}
\rho = \sum_{b \in \{0,1\}^n} r_b \ketbra{\psi_b}{\psi_b} \qquad \sigma = \sum_{c \in \{0,1\}^n} s_c \ketbra{\varphi_c}{\varphi_c}
\end{equation}
for $r$ and $s$ some classical probability distributions on $\{0,1\}^n$. Letting $U$ be any unitary such that all $U\ket{\psi_b} = \ket{b}$, $V$ any unitary such that all $V\ket{\varphi_c} = \ket{c}$, and $q$ be an optimal $p^{\text{th}}$-order classical coupling of the distributions $r$ and $s$, we can consider transport plan
\begin{equation}
    Q = \{(q_{bc}, \ket{\psi_b}, \ket{\varphi_c})\}_{b,c \in \{0,1\}^n}
\end{equation}
which has a first-order transport cost of at most $G(U) + G(V) + \mathcal{W}_1^H(r,s)$. This allows us to conclude that
\begin{equation}
    W_1^C(\rho,\sigma) \leq G(U)+G(V)+\mathcal{W}_1^H(r,s).
\end{equation}
This splits the quantum transport distance arising from complexity geometry into two parts: the classical transport cost between the states, and their quantum complexity as a whole.

For this specific instance we also have subadditivity as discussed in Proposition \ref{prop:subadditive}, as there is a natural choice for the distance $d_C$ on the tensor product. Namely, for two systems $\mathcal{H}_a$, $\mathcal{H}_b$ on $n_a$, $n_b$ qubits respectively, we can equip $\mathbb{P}(\mathcal{H}_a \otimes \mathcal{H}_b)$ with the distance $d_C$ on $n_a + n_b$ qubits. For states $\ket{\psi_a}, \ket{\varphi_a}$ on $\mathcal{H}_a$ and $\ket{\psi_b}, \ket{\varphi_b}$ on $\mathcal{H}_b$, let $U_a\ket{\psi_a} = \ket{\varphi_a}$, and $U_b\ket{\psi_b} = \ket{\varphi_b}$.
We have
\begin{align}
    d(I_a \otimes I_b, U_a \otimes U_b)
        &\leq d(I_a \otimes I_b, U_a \otimes I_b) + d(U_a \otimes I_b, U_a \otimes U_b) \\
        &= d(I_a \otimes I_b, U_a \otimes I_b) + d(I_a \otimes I_b, I_a \otimes U_b) \\
        &\leq d(I_a , U_a) + d(I_b , U_b)
\end{align}
and therefore $d_C(\ket{\psi_a}\otimes \ket{\psi_b}, \ket{\varphi_a}\otimes \ket{\varphi_b}) \leq d_C(\ket{\psi_a}, \ket{\varphi_a}) + d_C(\ket{\psi_b}, \ket{\varphi_b})$. The condition of Proposition \ref{prop:subadditive} is satisfied, and so under the natural choice $d_C$ on $\mathbb{P}(\mathcal{H}_a \otimes \mathcal{H}_b)$ the $W_p^d$ distances are subadditive.

\section{Applications}
\subsection{Results for random quantum states} \label{section:random_states}

To understand these quantities in general, it is useful to look at how they behave on random quantum states. We will look at both the versions stemming from $\norm{\cdot}_{W_1^H}$ and from complexity geometry.

We look at a few regimes in the definition of `random' states. For random pure states, we generate $\ket{\psi}$ according to the uniform measure on the unit sphere in $\mathcal{H}$, and take $\rho = \ketbra{\psi}{\psi}$. For mixed states, we adjoin an auxiliary system $\mathcal{A}$ of dimension $s$, generate a random pure state $\ket{\psi}$ on $\mathcal{H}\otimes \mathcal{A}$, then take $\rho = \text{Tr}_{\mathcal{A}} \ketbra{\psi}{\psi}$. The distribution of $\rho$ depends entirely on the values of $s$ and $\text{dim} \mathcal{H}$ chosen.

Note that for the two versions studied, the underlying space $\mathcal{H}$ is a qudit space $\mathcal{H} = (\mathbb{C}^d)^{\otimes n}$. It is convenient, in this case, to write $s = d^m$, as this allows us to consider the qudit ratio $c = \frac{m}{n}$ between the auxiliary and base systems. Note that while $s$ is always an integer, $m$ need not be, i.e. we will also consider auxiliary dimensions that are not a power of $d$. We can then consider the regime $c < 1$ as `low rank', and regime $c > 1$ as `high rank'. We will see that there is a marked phase transition between values $c < 1$, in which $W_1^H$ and $W_1^C$ grow on average like $\text{diam}(\mathbb{P}\mathcal{H})$ in $n$, and between values $c > c^*$ for some threshold $c^* \geq 1$ depending on $\mathcal{H}$, in which all $W_1^d$ decay exponentially with $n$ on average.

Previous works have highlighted similar properties for other quantites. For example,
\cite{PageConjecture}
shows that for the regime $c > 1$ we have $\mathbb{E}[S(\rho)] \geq n \log d - \frac{1}{2}d^{-(c-1)n}$, from which we can show that the quantum relative entropy $D(\rho \| \sigma) = \text{Tr}[(\rho \log \rho - \rho \log \sigma)]$ has $\mathbb{E}[D(\rho \| \sigma)]$ decaying exponentially in $n$ for $c > 1$. Meanwhile, two random states $\rho, \sigma$ with $c > 1$ have $\text{span } \sigma \not\subseteq \text{span } \rho$ with probability $1$, and so $D(\rho \| \sigma)$ is in general infinite. Similar results have been found for the trace distance~\cite{Telles_de_Miranda_2023}, showing that for $c < 1$ we have $\mathbb{E}\left[ \frac{1}{2}\norm{\rho-\sigma}_1\right] \to_{n \to \infty} 1$, and for $c > 1$ $\mathbb{E}\left[ \frac{1}{2} \norm{\rho-\sigma}_1 \right] \to_{n \to \infty} 0$. However, these existing analyses relate to the notion of distinguishability of states, whereas analysing the phase transition in $W_1^d$ allows us to go beyond this regime to discuss other properties such as the computational complexity between random states.

For mixed states in the high-rank regime, we see exponential decay in the expected $W_1^d$ distance between two i.i.d. states, no matter the underlying metric $d$. This is summarised in the following proposition. Note that while it is phrased for qudit systems, it applies for any $\mathcal{H}, \mathcal{A}$ where $c = \frac{\log \dim \mathcal{A}}{\log \dim \mathcal{H}}$ and $d^n = \dim \mathcal{H}$. We write $\text{diam}_d(\mathbb{P}\mathcal{H})$ for the diameter of $\mathbb{P}\mathcal{H}$ under metric $d$.

\begin{prop} \label{prop:random_mixed_largerank}
    Let $\mathcal{H} = (\mathbb{C}^d)^{\otimes n}$ with i.i.d. random mixed states $\rho, \sigma$ generated by a auxiliary system of $\mathcal{A}$ of dimension $s = d^m$. Let $c = \frac{m}{n}$, and suppose $c > 1$. Then for any $\beta > 0$, we have 
    \begin{equation} \label{eq:random_mixed_largerank}
        \mathbb{P}\left[W_1^d(\rho,\sigma) \geq \beta d^{-(c-3)n/2} \text{{\em diam}}(\mathbb{P}\mathcal{H})\right] \leq \frac{1}{\beta^2}.
    \end{equation}
\end{prop}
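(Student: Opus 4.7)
The plan is to apply Markov's inequality to $W_1^d(\rho,\sigma)^2$, reducing the claim to a second-moment bound $\mathbb{E}[W_1^d(\rho,\sigma)^2]\lesssim d^{(3-c)n}\,\text{diam}(\mathbb{P}\mathcal{H})^2$. I would obtain this from two ingredients: a state-independent domination $W_1^d(\rho,\sigma)\le \tfrac{1}{2}\text{diam}(\mathbb{P}\mathcal{H})\,\|\rho-\sigma\|_1$, and a standard Haar computation of $\mathbb{E}[\|\rho-\sigma\|_2^2]$, connected via the Cauchy--Schwarz norm bound $\|\cdot\|_1\le\sqrt{D}\,\|\cdot\|_2$ with $D=d^n$.

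The harder ingredient is the deterministic inequality
\[
W_1^d(\rho,\sigma)\le \tfrac{1}{2}\,\text{diam}(\mathbb{P}\mathcal{H})\,\|\rho-\sigma\|_1,
\]
since $W_1^d$ is not known to satisfy a triangle inequality, so $\|\rho-\sigma\|_1$ does not automatically upper bound it. My approach is via an explicit transport plan. Take the Jordan decomposition $\rho-\sigma=\Delta_+-\Delta_-$ with $\Delta_\pm\ge 0$ orthogonal and $\mathrm{Tr}(\Delta_\pm)=\tfrac{1}{2}\|\rho-\sigma\|_1$; set $\omega:=\rho-\Delta_+=\sigma-\Delta_-\ge 0$; spectrally decompose $\omega=\sum_k\omega_k\ketbra{\chi_k}{\chi_k}$, $\Delta_+=\sum_i\mu_i\ketbra{\phi_i}{\phi_i}$, $\Delta_-=\sum_j\nu_j\ketbra{\psi_j}{\psi_j}$; and let $\gamma_{ij}$ be any classical coupling of the measures $(\mu_i)$ and $(\nu_j)$. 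Then
\[
Q=\{(\omega_k,\ket{\chi_k},\ket{\chi_k})\}_k\cup\{(\gamma_{ij},\ket{\phi_i},\ket{\psi_j})\}_{i,j}\in\mathcal{Q}(\rho,\sigma),
\]
where the diagonal terms cost zero and the off-diagonal terms contribute at most $\text{diam}(\mathbb{P}\mathcal{H})\cdot\mathrm{Tr}(\Delta_+)=\tfrac{1}{2}\text{diam}(\mathbb{P}\mathcal{H})\|\rho-\sigma\|_1$.

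The second ingredient is routine. Using independence of $\rho,\sigma$ (both with mean $I/D$) and the standard swap-operator identity $\mathbb{E}[\mathrm{Tr}(\rho^2)]=\tfrac{D+s}{Ds+1}$ for induced random states, I get
\[
\mathbb{E}[\|\rho-\sigma\|_2^2]=2\bigl(\mathbb{E}[\mathrm{Tr}(\rho^2)]-1/D\bigr)=\frac{2(D^2-1)}{D(Ds+1)}\le\frac{2}{s}.
\]
Combining with the Cauchy--Schwarz step yields $\mathbb{E}[W_1^d(\rho,\sigma)^2]\le \tfrac{1}{2}d^{(1-c)n}\,\text{diam}(\mathbb{P}\mathcal{H})^2$. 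Markov at threshold $\beta^2 d^{(3-c)n}\text{diam}(\mathbb{P}\mathcal{H})^2$ then gives probability at most $\tfrac{1}{2\beta^2}d^{-2n}\le 1/\beta^2$, with room to spare. The main obstacle is the first ingredient (the transport-plan construction); the rest is essentially a Chebyshev argument on a Haar second moment.
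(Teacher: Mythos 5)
Your second ingredient (the Haar second-moment computation) and the Chebyshev/Markov skeleton are fine, but the first ingredient contains a genuine gap, and in fact the deterministic inequality it is meant to establish is false in the generality needed. The specific failure is the claim that $\omega:=\rho-\Delta_+=\sigma-\Delta_-$ is positive semidefinite. The Jordan decomposition of $\rho-\sigma$ does \emph{not} satisfy $\Delta_+\le\rho$ in general: take $\rho=\ketbra{0}{0}$ and $\sigma=\ketbra{+}{+}$, so that $\Delta_+=\tfrac{1}{\sqrt{2}}\ketbra{v_+}{v_+}$ for an eigenvector $v_+$ not proportional to $\ket{0}$; then $\bra{1}(\rho-\Delta_+)\ket{1}<0$. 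Worse, no repair of the construction can work, because any transport plan that fixes a sub-operator $\omega$ in place requires $0\le\omega\le\rho$ and $\omega\le\sigma$, and for the two pure states above the only such $\omega$ is $0$. Consequently the inequality $W_1^d(\rho,\sigma)\le\tfrac{1}{2}\,\mathrm{diam}_d(\mathbb{P}\mathcal{H})\,\norm{\rho-\sigma}_1$ reduces on pure states to $d(\ket{\psi},\ket{\varphi})\le\mathrm{diam}_d(\mathbb{P}\mathcal{H})\sqrt{1-|\braket{\psi|\varphi}|^2}$, which fails for metrics $d$ that assign near-diametral distance to pairs with large overlap. The paper's standing hypothesis \eqref{eq:holder_cts} only bounds a norm \emph{from above} by $d$, never $d$ from above by a norm, so no such inequality is available; this is exactly the obstruction (related to the missing triangle inequality) that prevents $\norm{\rho-\sigma}_1$ from controlling $W_1^d$.

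The paper circumvents this by keeping fixed a part that is \emph{literally the same operator} on both sides and manifestly dominated by both states: writing $\lambda_{\min}=\tfrac{1}{d^n}-\delta$ for the smallest eigenvalue among $\rho$ and $\sigma$, it transports $\lambda_{\min}\mathbb{I}$ onto itself at zero cost and moves only the remainders, of trace $\delta d^n$, giving $W_1^d(\rho,\sigma)\le \delta d^n\,\mathrm{diam}_d(\mathbb{P}\mathcal{H})$; the quantity $\delta$ is then controlled probabilistically via Page's theorem, Markov's inequality on the entropy deficit, and Pinsker. If you want to keep your moment-based structure, the correct deterministic input is a bound of the form $W_1^d(\rho,\sigma)\le d^n\max\left\{\norm{\rho-\tfrac{\mathbb{I}}{d^n}}_\infty,\norm{\sigma-\tfrac{\mathbb{I}}{d^n}}_\infty\right\}\mathrm{diam}_d(\mathbb{P}\mathcal{H})$ (proved exactly by the common-floor construction above), after which your swap-trick estimate $\mathbb{E}\,\mathrm{Tr}(\rho^2)=\tfrac{D+s}{Ds+1}$ applied to $\norm{\rho-\mathbb{I}/d^n}_\infty\le\norm{\rho-\mathbb{I}/d^n}_2$ does close the argument.
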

\begin{proof}
For random $\rho, \sigma$ generated from large auxiliary systems, we generally expect both to be close to maximally mixed. And so letting the minimum eigenvalue among $\rho$ and $\sigma$ be $\frac{1}{d^n} - \delta$, we can split up $\rho$ into parts $\rho - \left(\frac{1}{d^n}-\delta\right)\mathbb{I}$ and $\left(\frac{1}{d^n}-\delta\right)\mathbb{I}$, both of which are positive semidefinite. We can do the same for $\sigma$. We can then transport $\left(\frac{1}{d^n}-\delta\right)\mathbb{I}$ onto $\left(\frac{1}{d^n}-\delta\right)\mathbb{I}$ at zero cost, and transport $\rho - \left(\frac{1}{d^n}-\delta\right)\mathbb{I}$ onto $\sigma - \left(\frac{1}{d^n}-\delta\right)\mathbb{I}$ via any partial transport plan, at a maximum cost of $\text{Tr}\left[\rho - \left(\frac{1}{d^n}-\delta\right)\mathbb{I} \right]\text{diam}_d(\mathbb{P}\mathcal{H}) = \delta d^n \text{diam}_d(\mathbb{P}\mathcal{H})$. We will show that this is most likely very small.

Focusing on $\rho$, we know from \cite{PageConjecture} that
\begin{equation}
    \mathbb{E}[S(\rho)] \geq n \log d - \frac{1}{2}d^{-(c-1)n}.
\end{equation}
Knowing also that for the von Neumann entropy $S$ we have $S(\rho) \leq n \log d$, and using the top-down Markov inequality, we get that for any $\alpha > 0$ we have
\begin{equation}
    \mathbb{P}\left[S(\rho) \geq n \log d - \frac{\alpha}{2}d^{-(c-1)n}\right] \geq 1 - \frac{1}{\alpha}.
\end{equation}
Using then that $D \left( \rho || \frac{I}{d^n} \right) = n \log d - S(\rho)$, and the quantum Pinsker's inequality, we have
\begin{equation}
    \mathbb{P}\left[\norm{\rho-\frac{I}{d^n}}_1 \leq \sqrt{\frac{\alpha}{2}}d^{-(c-1)n/2} \right] \geq 1- \frac{1}{\alpha}
\end{equation}
and so
\begin{equation}
    \mathbb{P}\left[\norm{\rho-\frac{I}{d^n}}_{\infty} \leq \sqrt{\frac{\alpha}{2}}d^{-(c-1)n/2} \right] \geq 1- \frac{1}{\alpha}.
\end{equation}
Reintroducing $\sigma$, we then get that
\begin{equation}
    \mathbb{P}\left[\max \left\{\norm{\rho-\frac{I}{d^n}}_\infty,\norm{\sigma-\frac{I}{d^n}}_\infty\right\} \leq \sqrt{\frac{\alpha}{2}}d^{-(c-1)n/2} \right] \geq \left(1- \frac{1}{\alpha}\right)^2 \geq 1-\frac{2}{\alpha}.
\end{equation}
Note that for any $\rho,\sigma$ with $\max \left\{\norm{\rho-\frac{I}{d^n}}_\infty,\norm{\sigma-\frac{I}{d^n}}_\infty\right\} \leq \sqrt{\frac{\alpha}{2}}d^{-(c-1)n/2}$, we can take $\delta = \sqrt{\frac{\alpha}{2}}d^{-(c-1)n/2}$. Then setting $\beta = \sqrt{\alpha / 2}$, and applying the transport plan described above, we have
\begin{equation}
    \mathbb{P}\left[W_1^d(\rho,\sigma) \geq \sqrt{\frac{\alpha}{2}} d^{-(c-1)n/2} \cdot d^n \text{diam}_d(\mathbb{P}\mathcal{H}) \right] \leq \frac{2}{\alpha}
\end{equation}
and so
\begin{equation}
    \mathbb{P}\left[W_1^d(\rho,\sigma) \geq \beta d^{-(c-3)n/2} \text{diam}_d(\mathbb{P}\mathcal{H}) \right] \leq \frac{1}{\beta^2}.
\end{equation}

\end{proof}

This result can be easily applied to the two underlying distances mentioned above. For qudit ratios $c > 3 + \frac{2}{n} \log_d \text{diam}_d(\mathbb{P}\mathcal{H})$ we can define $\beta$ by, for any $\lambda \in (0,1)$, value 
\begin{equation}
    \log_d \beta = \lambda \left(c - 3 - \frac{2}{n} \log_d \text{diam}_d(\mathbb{P}\mathcal{H})\right)\frac{n}{2}
\end{equation}
to show that the probability of an exponentially small deviation falls exponentially as the number $n$ of qudits increases.

For the $W_1^H$ case, we know $\text{diam}_d(\mathbb{P}\mathcal{H}) = n$, and so this exponential decay applies for qubit ratio $c > 3$ and large enough $n$. Applying this to the expectation gives, taking $\lambda = 1/3$ for an optimal decay rate,
\begin{align}
    \mathbb{E}_{\rho,\sigma}\left[W_1^H(\rho,\sigma)\right]
        &\leq \left( 1 - \frac{1}{\beta^2}\right)\beta d^{-(c-3)n/2} \text{diam}_d(\mathbb{P}\mathcal{H}) + \frac{1}{\beta^2} \text{diam}_d(\mathbb{P}\mathcal{H}) \\
        &\leq n^{1-\lambda}\exp_d\left(-\frac{1}{2}(1-\lambda)(c-3)n \right) + \exp_d n^{1+2\lambda}\left(\lambda (c-3)n \right) \\
        &= \exp_d\left(-\frac{1}{3}(c-3)n + \frac{2}{3}\log_d n \right) + \exp_d \left(-\frac{1}{3}(c-3)n +\frac{5}{3}\log_d n\right) \\
        &= \left(n^{2/3}+n^{5/3}\right)\exp_d\left( -\frac{1}{3}(c-3)n \right)
\end{align}

For the $W_1^C$ case, any $n$-qudit gate can be synthesised in at most $2^n(2^n-1)$ one- and two-qubit gates~\cite{nielsen_chuang_2010}, and so the metric space $\mathbb{P}\mathcal{H}$ has diameter at most $2^{2n}$. It then follows that for any qudit ratio $c > 7$, the probability of an exponentially large deviation becomes exponentially small.

Applying equation \eqref{eq:random_mixed_largerank} to the expectation gives, taking $\lambda = \frac{c-3}{3(c-7)}$ for an optimal decay rate,
\begin{align}
    \mathbb{E}_{\rho,\sigma}\left[W_1^C(\rho,\sigma)\right]
        &\leq \left( 1 - \frac{1}{\beta^2}\right)\beta 2^{-(c-3)n/2} \text{diam}_d(\mathbb{P}\mathcal{H}) + \frac{1}{\beta^2} \text{diam}_d(\mathbb{P}\mathcal{H}) \\
        &\leq 2^{-(1-\lambda)(c-3)n/2} 2^{(1-\lambda)2n} + 2^{-\lambda(c-3)n}2^{(1+2\lambda)2n} \\
        &= 2^{-(1-\lambda)(c-7)n/2} + 2^{-(\lambda(c-7)-2)n} \\
        &= 2 \cdot 2^{(c - 9)n/3}
\end{align}
giving exponential decay in expectation for qubit ratios $c > 9$.

For the low-rank setting, we look first at the $W_1^H$ distance. There are two lines of intuition here. The first is that $W_1^H$ generalises the Hamming distance, and the Hamming distance quantifies the local distinguishability of $d$-nary strings. If this property propagated to the quantum setting, we'd expect $W_1^H$ to be small on average as random pure states are generally locally indistinguishable. This was the behaviour conjectured in~\cite{de2022limitations}. The second is that the average Hamming distance between two $d$-nary strings of length $n$ is $n(1-1/d)$, and so we might also expect the average $W_1^H$ distance between random pure strings to grow linearly with the number of qudits.

Let $\mathcal{H} = (\mathbb{C}^d)^{\otimes n}$ and let $m = \log_d s$, noting again that, while $s$ is an integer, $m$ need not be. In the case $m < n$, we can apply Theorem 9.1 of \cite{De_Palma_2023} to lower bound the expected distance between two low-rank random mixed states. A similar result was noted independently in \cite{depalma2023shadows}.

\begin{prop} \label{prop:random_hamming_lowrank}
    Let $\rho$, $\sigma$ be two i.i.d.\ random mixed states on $\mathcal{H} = (\mathbb{C}^d)^{\otimes n}$ generated using an auxiliary system of dimension $s = d^m$ for $m < n$. Write $c = \frac{m}{n}$. Then
    \begin{equation}
    \mathbb{E}_{\rho,\sigma}\left[W_{p=1}^H(\rho,\sigma)\right] \geq \lambda_c n
    \end{equation}
    where $\lambda_c$ satisfies $\left(1-c\right)\log d = h_2(\lambda) + \lambda \log(d^2-1)$ for $h_2$ the binary entropy.
\end{prop}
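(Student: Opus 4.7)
The plan is to reduce the problem to a bound on the expected $W_1^H$-norm distance from a single random state to the maximally mixed state, then apply the Fannes-type entropic continuity inequality that forms Theorem 9.1 of \cite{De_Palma_2023}, combined with the trivial rank-based bound on $S(\rho)$.

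First, I would use the hierarchy relating $W_{p=1}^H$ to the norm $\norm{\cdot}_{W_1^H}$. Since the metric $d_H$ on pure states is itself induced by this norm, the remark following Proposition \ref{prop:more-than-norm} (i.e.\ replacing $\norm{\cdot}_2$ by $\norm{\cdot}_{W_1^H}$ in the proof) gives $W_{p=1}^H(\rho,\sigma) \geq \norm{\rho - \sigma}_{W_1^H}$. It is therefore enough to lower bound $\mathbb{E}\norm{\rho-\sigma}_{W_1^H}$.

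Next, I would exploit the i.i.d.\ structure via Jensen's inequality. Since $\norm{\cdot}_{W_1^H}$ is convex and, for $\sigma = \text{Tr}_\mathcal{A}\ketbra{\psi}{\psi}$ with $\ket{\psi}$ Haar-distributed on $\mathcal{H}\otimes\mathcal{A}$, one has $\mathbb{E}[\sigma] = I/d^n$, Jensen's inequality applied to the convex map $\sigma \mapsto \norm{\rho-\sigma}_{W_1^H}$ yields, for each fixed $\rho$,
\begin{equation*}
    \mathbb{E}_\sigma \norm{\rho - \sigma}_{W_1^H} \geq \norm{\rho - \mathbb{E}_\sigma[\sigma]}_{W_1^H} = \norm{\rho - I/d^n}_{W_1^H}.
\end{equation*}
Averaging over $\rho$ reduces the task to proving $\mathbb{E}_\rho \norm{\rho - I/d^n}_{W_1^H} \geq \lambda_c n$.

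For this, I would invoke Theorem 9.1 of \cite{De_Palma_2023}, a continuity-of-entropy bound of the form
\begin{equation*}
    |S(\tau_1)-S(\tau_2)| \leq \norm{\tau_1-\tau_2}_{W_1^H}\log(d^2-1) + n\, h_2\!\left(\norm{\tau_1-\tau_2}_{W_1^H}/n\right).
\end{equation*}
Specialising to $\tau_2 = I/d^n$, so that $S(\tau_2)=n\log d$, and using the trivial bound $S(\rho)\leq \log(\mathrm{rank}\,\rho)\leq m\log d = cn\log d$, one obtains
\begin{equation*}
    (1-c)n\log d \leq \norm{\rho - I/d^n}_{W_1^H}\log(d^2-1) + n\, h_2\!\left(\norm{\rho - I/d^n}_{W_1^H}/n\right).
\end{equation*}
Taking expectations and applying Jensen's inequality to the concave function $h_2$, with $\bar\alpha := \mathbb{E}\norm{\rho - I/d^n}_{W_1^H}/n$, gives $(1-c)\log d \leq \bar\alpha\log(d^2-1) + h_2(\bar\alpha)$. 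Since $\lambda \mapsto \lambda\log(d^2-1)+h_2(\lambda)$ is strictly increasing on $[0,1-1/d^2]$, this forces $\bar\alpha \geq \lambda_c$, which completes the proof.

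The main technical obstacle is pinning down the precise form of Theorem 9.1 of \cite{De_Palma_2023} and verifying that the chain of reductions loses no more than the constants stated; once the correct functional form is available, the remainder is a straightforward application of Jensen's inequality and a monotonicity check on the sphere-packing-type function $h_2(\lambda)+\lambda\log(d^2-1)$.
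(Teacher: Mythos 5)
Your proposal is correct and follows essentially the same route as the paper's proof: reduce $W_{p=1}^H$ to $\norm{\cdot}_{W_1^H}$ via the norm-induced version of Proposition \ref{prop:more-than-norm}, use convexity/Jensen to replace $\sigma$ by $\mathbb{I}/d^n$, apply Theorem 9.1 of \cite{De_Palma_2023} together with the rank bound $S(\rho)\leq cn\log d$, and invert the sphere-packing function. The only (immaterial) difference is that the paper concludes $\norm{\rho-\mathbb{I}/d^n}_{W_1^H}\geq\lambda_c n$ pointwise for each $\rho$ before averaging, whereas you average first and apply Jensen to the concave $h_2$.
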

\begin{proof}
    First note from Proposition \ref{prop:more-than-underlying-norm} that $W_{p=1}^H(\rho,\sigma) \geq \norm{\rho-\sigma}_{W_1^H}$, and so we prove that $\mathbb{E}_{\rho,\sigma} \left[\norm{\rho-\sigma}_{W_1^H}\right] \geq \lambda n$.

    Fix $\rho$, and note that averaging over $\sigma$ and using convexity of the norm we have
    \begin{align}
        \mathbb{E}_{\sigma} \left[\norm{\rho-\sigma}_{W_1^H}\right]
            &\geq \norm{\rho - \mathbb{E}_{\sigma}\sigma}_{W_1^H} \\
            &= \norm{\rho-\frac{\mathbb{I}_d^{\otimes n}}{d^n}}_{W_1^H}. \label{prop:random_hamming_lowrank_use_convexity}
    \end{align}
Applying Theorem 9.1 of \cite{De_Palma_2023} then gives
\begin{equation} 
    \left(1-c\right)\log d \leq \frac{1}{n}\left|S(\rho)-S\left(\frac{\mathbb{I}_d^{\otimes n}}{d^n}\right)\right| \leq h_2\left(\frac{\norm{\rho - \mathbb{I}_d^{\otimes n}/d^n}_{W_1^H}}{n}\right) + \frac{\norm{\rho - \mathbb{I}_d^{\otimes n}/d^n}_{W_1^H}}{n}\log(d^2-1).
\end{equation}
Noting then that the function $g(t) = h_2(t) + t \log(d^2-1)$ takes the value $\left(1-c\right)\log d$ at exactly one value $\lambda \in [0,1]$, and that for for $t < \lambda$ we have $g(t) < g(\lambda)$ and for $t > \lambda$ we have $g(t) > g(\lambda)$, we conclude that $\norm{\rho - \mathbb{I}_d^{\otimes n}/d^n}_{W_1^H} > \lambda n$. Averaging over $\rho$ gives the result.
\end{proof}
For $\rho, \sigma$ random pure states, we simply take the case $m = 0$. In general, this shows that the expected $W_{p=1}^H$ distance between two random states generated using small auxiliary systems grows linearly with the number $n$ of qudits. This is in direct contrast to initial conjecture, and brings up interesting ideas about the nature of $\norm{\cdot}_{W_1^H}$. On one hand, this gives insight into the behaviour of the widely celebrated $\norm{\cdot}_{W_1^H}$ norm, showing that it behaves qualitatively very differently in the quantum setting to the classical setting. It also highlights the importance of entangled states to analysis of $\norm{\cdot}_{W_1^H}$. This in turn opens up question about the qualitative nature of Lipschitz operators according to its dual norm $\norm{\cdot}_L$. We have shown here that Lipschitz operators applied to $\rho-\sigma$ do not necessarily detect local distinguishability, so what do they detect? On the other hand, this calls into question the assumptions made in~\cite{Kiani2022} that the weighted sum of Pauli coefficients of a Hermitian operator $O$ is a good approximation to $\norm{O}_L$, as such an approximation links directly to local distinguishability in the dual setting. This will be discussed further in Section~\ref{section:GANs}.

This property can be understood more concretely by looking at the distance operator~\cite{Eldar_2017} of a state $\ket{\psi}$. The distance operator $\Lambda_{\ket{\psi}} = \sum_{i=1}^n i\Pi_{\mathcal{V}_i \cap \mathcal{V}_{i-1}^{\perp}}$ is defined as a weighted sum of projectors onto subspaces $\mathcal{V}_i \cap \mathcal{V}_{i-1}^\perp$, where
\begin{equation}
    \mathcal{V}_i = \text{span}\{ O \ket{\psi} : O \text{ acts on at most $i$ qudits}\}.
\end{equation}
This has been specifically constructed so that $\text{Tr}[\Lambda_{\ket{\psi}}\ketbra{\psi}{\psi}] = 0$, and that if $\ket{\varphi}$ can be written as a sum of states which differ from $\psi$ in at most $k$ qubits, then $\text{Tr}[\Lambda_{\ket{\psi}}\ketbra{\varphi}{\varphi}] \leq k$. It was proven in~\cite{De_Palma_2023} that $\norm{\Lambda_{\ket{\phi}}}_{L} \leq 1$, and analysis of the dimension of $\mathcal{V}_i$ shows that $\text{Tr}[\Lambda_{\ket{\psi}}] = \mathcal{O}(nd^n)$. It follows that, for fixed $\ket{\psi}$, $\mathbb{E}_{\ket{\varphi}\sim \mu_{\text{Haar}}}\left[\left|\text{Tr}[\Lambda_{\ket{\psi}}(\ketbra{\psi}{\psi} - \ketbra{\varphi}{\varphi})] \right|\right] = \mathcal{O}(n)$.

Turning our attention to the $W_1^C$ distance generated from complexity geometry on $\mathbb{P}\left((\mathbb{C}^2)^{\otimes n}\right)$, we see a similar picture for low-rank states.
As noted earlier, for the approximate gate complexity $G(U,\epsilon)$ and the gate complexity $G(U)$, we have the bound
\begin{equation} \label{eq:bound_dIU_by_complexities}
    \frac{\kappa G(U,\epsilon)^{1/3}\epsilon^{2/3}}{n^{2}} \leq d(I,U) \leq G(U)
\end{equation}
for some constant $\kappa > 0$.

For pure states we also know that $W_p^C(\ket{\psi}\bra{\psi},\ketbra{\varphi}{\varphi}) = d_C(\ket{\psi},\ket{\varphi})$. And so to show the behaviour of $W_p^C$ distances on low-rank states, we look at $d_C$.

\begin{lemma} \label{lemma:link_complexity_allgates_universalset}
    Let $\mathcal{H} = (\mathbb{C}^2)^{\otimes n}$ be an $n$-qudit space, and let $\mathcal{S} \subseteq{\mathcal{SU}(4)}$ be a finite universal gate set with inverses. Letting $G_{\mathcal{S}}(U,\epsilon)$ be the $\epsilon$-approximate gate complexity of $U$ from set $\mathcal{S}$ viewed as a set of gates on 2 qubits, and $G(U,\epsilon)$ the $\epsilon$-approximate gate complexity of $U$ using any one- or two-qubit gates, we have
    \begin{equation} \label{eq:link_complexity_allgates_universalset}
        G(U,\epsilon)\text{{\em poly}}(\log(G(U,\epsilon))+\log(\epsilon^{-1})) \geq G_{\mathcal{S}}(U,2\epsilon)
    \end{equation}
\end{lemma}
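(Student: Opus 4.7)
The plan is to use the Solovay--Kitaev theorem to replace each individual gate in an $\epsilon$-approximate circuit for $U$ by a short word in $\mathcal{S}$, and then control the total error by triangle inequality on operator norms. Concretely, fix a circuit $V = V_N \cdots V_1$ of $N = G(U,\epsilon)$ one- or two-qubit gates with $\|U - V\|_\infty \leq \epsilon$. Each $V_i$ can be regarded as an element of $\mathcal{SU}(4)$ by padding one-qubit gates with identity on an adjacent qubit, so the dimension stays bounded independently of $n$.

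Next I would allocate the error budget evenly: the Solovay--Kitaev theorem says that any element of $\mathcal{SU}(4)$ can be approximated to precision $\delta$ in operator norm by a word of length $O(\log^c(1/\delta))$ in $\mathcal{S}$, where $c < 4$ is an absolute constant. Applying this with $\delta = \epsilon/N$ to each $V_i$ produces a word $W_i$ over $\mathcal{S}$ with $\|V_i - W_i\|_\infty \leq \epsilon/N$ and $|W_i| = O(\log^c(N/\epsilon))$. The total length is
\begin{equation}
\sum_{i=1}^N |W_i| \;=\; N \cdot O\bigl(\log^c(N/\epsilon)\bigr) \;=\; G(U,\epsilon)\,\mathrm{poly}\bigl(\log G(U,\epsilon) + \log(\epsilon^{-1})\bigr).
\end{equation}

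It remains to bound $\|U - W_N\cdots W_1\|_\infty$. For unitaries the standard telescoping estimate gives $\|V_N\cdots V_1 - W_N\cdots W_1\|_\infty \leq \sum_i \|V_i - W_i\|_\infty \leq N \cdot (\epsilon/N) = \epsilon$, using that unitary multiplication is a contraction in operator norm on either side. Combining with $\|U - V\|_\infty \leq \epsilon$ via the triangle inequality yields $\|U - W_N\cdots W_1\|_\infty \leq 2\epsilon$, so the constructed circuit certifies $G_{\mathcal{S}}(U, 2\epsilon) \leq G(U,\epsilon)\,\mathrm{poly}(\log G(U,\epsilon) + \log(\epsilon^{-1}))$, which is exactly the claimed inequality.

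The only nontrivial ingredient is the Solovay--Kitaev theorem itself; the rest is essentially bookkeeping. A minor subtlety worth being explicit about is the embedding of one-qubit gates into $\mathcal{SU}(4)$ (so that a single universal set on two qubits suffices) and the choice of polynomial exponent, which can be taken to be any constant strictly greater than the Solovay--Kitaev exponent; the stated $\mathrm{poly}(\cdot)$ absorbs this freedom.
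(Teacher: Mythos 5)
Your proposal is correct and follows essentially the same route as the paper: take an optimal circuit of $G(U,\epsilon)$ one- and two-qubit gates, apply Solovay--Kitaev to each gate with error budget $\epsilon/G(U,\epsilon)$, and compound the errors linearly to land within $2\epsilon$ of $U$. The extra care you take with the telescoping estimate and the embedding of one-qubit gates into $\mathcal{SU}(4)$ only makes explicit what the paper leaves implicit.
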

\begin{proof}
    Let $V_1,\dots, V_{G(U,\epsilon)}$ be any circuit of one- and two-qubit gates to synthesise $V$ such that $\norm{U-V}_{\infty} \leq \epsilon$. Using Solovay-Kitaev, each of these can be approximated to within error $\epsilon/G(U,\epsilon)$ in $\text{poly}(\log \left(G(U,\epsilon)/\epsilon\right))$ gates from $\mathcal{S}$. Compounding errors linearly, these form a circuit of length $G(U,\epsilon)\text{ poly}(\log(G(U,\epsilon))+\log(\epsilon^{-1}))$ of gates from $\mathcal{S}$ which synthesises $U$ to within operator norm $2\epsilon$.
\end{proof}

\begin{lemma} \label{lemma:approx_complexity_universalset_low_mixed}
    Let $\mathcal{S} \subseteq{\mathcal{SU}(4)}$ be a finite universal gate set with inverses and $\rho$, $\sigma$ i.i.d. random quantum states on $\mathcal{H} = \left(\mathbb{C}^2\right)^{\otimes n}$ generated by auxiliary system $\mathcal{A}$ of integer dimension $s = 2^{cn}$ where $0 \leq c < 1$. Let $G_{\mathcal{S}}(U_{\rho \to \sigma}^{\text{opt}},\epsilon) = \min \left\{G_{\mathcal{S}}\left(U_{\ket{\psi} \to \ket{\varphi}}^{\text{opt}},\epsilon\right) : \ket{\psi} \in \text{{\em span }}\rho, \ket{\varphi} \in \text{{\em span }} \sigma\right\}$. Then
    \begin{equation} \label{eq:lemma_approx_complexity_universalset_low_mixed}
        \mathbb{P}_{\rho,\sigma} \left[ G_{\mathcal{S}}(U^{\text{opt}}_{\rho \to \sigma}, \epsilon)  \leq 2^{(1-\delta) n} \right] \leq e^{-\Omega \left(2^n \log(1/\epsilon) \right)}.
    \end{equation}
\end{lemma}
\begin{proof}
Let $\mathcal{C}_x$ be the set of circuits of $\mathcal{S}$ of length $x$.
    \begin{align}
        G_{\mathcal{S}}(U^{\text{opt}}_{\rho \to \sigma}, \epsilon)  \leq 2^{(1-\delta) n}
            &\implies \exists  \ket{\psi} \in \text{span }\rho, \ket{\varphi} \in \text{span }\sigma, C \in \mathcal{C}_{2^{(1-\delta)n}} \text{ s.t. } C\ket{\psi} \approx_\epsilon \ket{\varphi}.
    \end{align}
    Note that span $\rho$ and span $\sigma$ are i.i.d. hyperplanes distributed according to the Haar measure of the Grassmannian $\text{Gr}_s\mathcal{H}$. Fix element $R$ of $\text{Gr}_s\mathcal{H}$ and choose an $\epsilon$-ball covering of $(R \setminus \{0 \}) / \mathbb{C}$ of minimal size $N = e^{\mathcal{O}\left( 2^{cn}\log(1/\epsilon)\right)}$ centred on elements $\{y_i \}_{i=1}^N$. Choose fixed unitaries $U_\rho$ and $U_\sigma$ such that $U_\rho R = \text{span } \rho$ and $U_\sigma R = \text{span } \sigma$, and independent random unitaries $V_\rho, V_\sigma \sim \mu_{\text{Haar}}$ on $ \text{span } \rho$ , $\text{span } \sigma$ respectively. This gives a randomly chosen independent $\epsilon$-covers $\{Y_i = V_\rho U_\rho y_i\}_{i=1}^N$ of $\text{span } \rho$ and $\{ Z_j = V_\sigma U_\sigma y_j \}_{j=1}^N$ of $\text{span } \sigma$. Each $Y_i, Z_j$ is distributed according to the Haar measure on $\text{span } \rho$, $\text{span } \sigma$ respectively. And so
    \begin{align}
        \mathbb{P} &\left[ \exists  \ket{\psi} \in \text{span }\rho, \ket{\varphi} \in \text{span }\sigma, C \in \mathcal{C}_{2^{(1-\delta)n}} \text{ s.t. } C\ket{\psi} \approx_\epsilon \ket{\varphi}  \right] \\
            &\leq  \mathbb{P} \left[ \exists  1 \leq i,j \leq N, C \in \mathcal{C}_{2^{(1-\delta)n}} \text{ s.t. } CY_i \approx_{3\epsilon} Z_j\right]  \\
            &\leq \sum_{i,j = 1}^N \sum_{C \in \mathcal{C}_{2^{(1-\delta)n}}} \mathbb{P}\left[ CY_i \approx_{3\epsilon} Z_j \right] \\
            &= N^2|\mathcal{C}_{2^{(1-\delta)n}}| \mathbb{P}_{\ket{\psi}, \ket{\varphi} \sim_{\text{i.i.d.}} \mu_{\text{Haar}}} \left[\ket{\psi} \approx_{3\epsilon} \ket{\varphi} \right] \\
            &= e^{\mathcal{O}\left( 2^{cn}\log(1/\epsilon)\right)} \left( n(n-1)|\mathcal{S}| \right)^{2^{(1-\delta)n}} \frac{\text{vol} (3 \epsilon \text{-ball})}{\text{vol}(\mathbb{P}\mathcal{H})} \\
            &= e^{\mathcal{O}\left( 2^{cn}\log(1/\epsilon)\right)} e^{\mathcal{O}(2^{(1-\delta)n} \log n)} e^{-\Omega \left(2^n \log(1/\epsilon) \right)} \\
            &= e^{-\Omega \left(2^n \log(1/\epsilon) \right)}.
    \end{align}
\end{proof}

Combining these two propositions with equation \eqref{eq:complexity_distance_approxcompl_bounds} gives the result.
\begin{cor} \label{cor:approx_complexity_allgates_high}
Let $\rho$, $\sigma$ be i.i.d. states on $\mathcal{H} = \left( \mathbb{C}^2 \right)^{\otimes n}$ generated by an auxiliary system $\mathcal{A}$ of dimension $s = 2^{cn}$ where $0 \leq c < 1$. For all $\delta > 0$,
    \begin{equation}
        \mathbb{P}_{\ket{\varphi}} \left[W_1^C(\rho,\sigma) \leq \epsilon^{2/3}n^{-1}\kappa \left(\frac{2^{(1-\delta)n}}{\text{{\em poly}}(n,\log \epsilon^{-1})} \right)^{1/3} \right] \leq e^{-\Omega(2^n \log ((2\epsilon)^{-1}))}.
    \end{equation}
\end{cor}
\begin{proof}
    The proof is technical and not particularly instructive, so has been placed in Appendix~\ref{appendix:approx_complexity_allgates_high}.
\end{proof}
In other words, the chance of two independent low-rank states on $n$ qubits being less than exponentially far apart in the complexity geometry quantum Wassterstein distance becomes exponentially small as $n$ tends towards infinity.

As far as we are aware, no such link between Wasserstein distances and circuit complexity has been made in the classical case due to the discreteness of classical circuits.
In particular, such comments on the computational complexity cannot ever be made from a classical point of view, as the maximum classical computational complexity between any two $n$-bit strings is at most $n$, whereas the quantum computational complexity between any two $n$-qubit states is, in general, exponential in $n$. Statements which accurately quantify the computational complexity between mixed quantum states will always require quantum tools, which again points to the importance of the $W_p^C$ distances.

The first-order Wasserstein distance of \cite{de2020quantum} has been used in the quantum setting to give lower bounds on the circuit complexity of shallow random quantum circuits \cite{li2022wasserstein}, though, just as from the classical point of view, the bound $\norm{\rho - \sigma}_{W_1^H} \leq n$ greatly restricts the effectiveness of this lower bound. The results are only significant for circuits with a number of gates linear in the number of qubits, as the maximum lower bound possible using $\norm{ \cdot }_{W_1^H}$ is $n$. 
One potential application of $W_p^C$ could be to extend this result to give lower bounds on the circuit complexity of random quantum circuits of arbitrary depth, as it does not suffer from this linear constraint.

\subsection{Operational interpretation in terms of classical-quantum sources} \label{section:operational}
In this section, we show that the $W_p^d$ distances have an operational significance for distances between classical-quantum (cq) states and classical-quantum (cq) sources. This application is particularly relevant to the quantum Wasserstein distances presented in this work because its value enhanced by both the adaptability in terms of the underlying distance $d$, and by the generality in order $p$. Indeed, let $R$ and $S$ be two cq sources, each controlled by a classical random variable $X$ on $\{1,\dots, N\}$ with probabilities $p_i$. On input $i$, let $R$ output $\ketbra{\psi_i}{\psi_i}$ and $S$ output $\ketbra{\varphi_i}{\varphi_i}$. We define $r$ to be the random variable which is the output of $R$, and $s$ the random variable which is the output of $S$. Let the output be in finite-dimensional Hilbert space $\mathcal{H}$ and equip $\mathbb{P}\mathcal{H}$ with distance $d$. These sources can be simulated by measuring the first (classical) register of the following cq states in the standard basis:
\begin{equation}
\Tilde{\rho} = \sum_{i=1}^N p_i \ketbra{i}{i}\otimes\ketbra{\psi_i}{\psi_i} \qquad \Tilde{\sigma} = \sum_{i=1}^N p_i \ketbra{i}{i}\otimes\ketbra{\varphi_i}{\varphi_i}.
\end{equation}

Letting then $\rho = \sum_{i=1}^N p_i \ketbra{\psi_i}{\psi_i}$ and $\sigma = \sum_{i=1}^N p_i \ketbra{\varphi_i}{\varphi_i}$, these $\rho$ and $\sigma$ are the expected outputs of $R$ and $S$ respectively. Using $d$ we can then talk about the distance between the outputs $r$ and $s$ as a random variable taking value $d(\ket{\psi_i},\ket{\varphi_i})$ when $X = i$. The $W_p^d$ distance provides a lower bound between the $p^{\text{th}}$ moment of the distance between the outputs. Broadly speaking, we interpret the distance as the cost of moving between $R$ and $S$.

\begin{prop} \label{prop:cq-sources}
    Let $R$ and $S$ be cq sources with expected outputs $\rho$ and $\sigma$ respectively on dimension $D$, and let $1 \leq p < \infty$. Given access to the output register and classical control register, the expected distance $d$ between the outputs $r$ and $s$ satisfies
    \begin{equation}
        \mathbb{E}_X[d(r,s)^p] \geq W_p^d(\rho,\sigma)^p
    \end{equation}
    and this bound is sharp.
\end{prop}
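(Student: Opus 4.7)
The proof is essentially a direct translation between cq source decompositions and quantum transport plans, so most of the work is bookkeeping rather than technical machinery.

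The plan is as follows. First I would observe that the data specifying the sources $R$ and $S$ already furnish a valid quantum transport plan. Concretely, the collection $Q = \{(p_i,\ket{\psi_i},\ket{\varphi_i})\}_{i=1}^N$ satisfies $\sum_i p_i \ketbra{\psi_i}{\psi_i} = \rho$ and $\sum_i p_i \ketbra{\varphi_i}{\varphi_i} = \sigma$ by the very definition of the expected outputs, so $Q \in \mathcal{Q}(\rho,\sigma)$. Its $p^{\text{th}}$-order transport cost, by \eqref{eq:quantum_trans_cost}, is
\begin{equation}
T_p^d(Q) = \sum_{i=1}^N p_i\, d(\ket{\psi_i},\ket{\varphi_i})^p = \mathbb{E}_X\!\left[d(r,s)^p\right].
\end{equation}
Since $W_p^d(\rho,\sigma)^p$ is the infimum of $T_p^d$ over $\mathcal{Q}(\rho,\sigma)$, this immediately gives $W_p^d(\rho,\sigma)^p \leq \mathbb{E}_X[d(r,s)^p]$, which is the desired inequality.

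For sharpness, I would invoke Proposition~\ref{prop:trans_plan_inf_attained}, which under the standing continuity assumption on $d$ guarantees that the infimum in the definition of $W_p^d$ is attained by some transport plan $Q^* = \{(q_j^*,\ket{\psi_j^*},\ket{\varphi_j^*})\}_{j=1}^M$ with $M \leq 2D^2$. I then construct cq sources $R^*$ and $S^*$ controlled by the classical random variable $X$ on $\{1,\dots,M\}$ with $\mathbb{P}[X=j] = q_j^*$, where on input $j$ the source $R^*$ outputs $\ketbra{\psi_j^*}{\psi_j^*}$ and $S^*$ outputs $\ketbra{\varphi_j^*}{\varphi_j^*}$. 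Their expected outputs are $\rho$ and $\sigma$ by optimality of $Q^*$, and the expected distance is
\begin{equation}
\mathbb{E}_X\!\left[d(r,s)^p\right] = \sum_{j=1}^M q_j^*\, d(\ket{\psi_j^*},\ket{\varphi_j^*})^p = T_p^d(Q^*) = W_p^d(\rho,\sigma)^p,
\end{equation}
establishing sharpness.

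The only genuine subtlety is that sharpness requires an optimizer to exist, which is why the reference to Proposition~\ref{prop:trans_plan_inf_attained} (and hence to the continuity hypothesis on $d$) is needed. Aside from that, the argument is an unpacking of definitions, and the proposition is best read as a confirmation that $W_p^d$ is exactly the optimal $p^{\text{th}}$-moment of the pointwise distance between cq simulations of $\rho$ and $\sigma$.
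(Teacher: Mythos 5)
Your lower-bound argument is identical to the paper's: the source data $\{(p_i,\ket{\psi_i},\ket{\varphi_i})\}$ is itself a transport plan, so $\mathbb{E}_X[d(r,s)^p] = T_p^d(Q) \geq W_p^d(\rho,\sigma)^p$. For sharpness, however, you take a slightly different route. The paper does not invoke an attained optimizer: it constructs, for \emph{every} transport plan $Q = \{(q_j,\ket{\psi_j},\ket{\varphi_j})\}_j$, a pair of cq sources whose $p^{\text{th}}$ moment equals $T_p^d(Q)$, and then takes the infimum over $Q$ to conclude that the constant in the bound cannot be improved. You instead appeal to Proposition~\ref{prop:trans_plan_inf_attained} to get an optimal plan $Q^*$ and build sources achieving equality exactly. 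Your version proves something marginally stronger (equality is attained, not just approached), but it imports the continuity hypothesis on $d$, which is not among the assumptions of the proposition as stated; the paper's infimum argument establishes sharpness with no hypothesis on $d$ beyond it being a distance, and only afterwards remarks that continuity lets one take the classical register to have size at most $2D^2$ in the equality case. So your claim that ``sharpness requires an optimizer to exist'' is not quite right --- it is only the \emph{attainment} of equality that requires one. This is a minor point and your argument is otherwise correct and complete.
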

\begin{proof}
    For the lower bound, consider quantum transport plan $Q = \{(p_i,\ket{\psi_i},\ket{\varphi_i})\}_{i=1}^N$ between $\rho$ and $\sigma$. The $p^{\text{th}}$-order cost of this transport plan is
    \begin{equation}
        T_p^d(Q) = \sum_{i=1}^N p_i d(\ket{\psi_i},\ket{\varphi_i})^p = \mathbb{E}_X[d(r,s)^p]
    \end{equation}
    which is lower bounded by the optimal $p^{\text{th}}$-order quantum transport cost $W_p^d(\rho,\sigma)$.
    
    For sharpness, let $Q = \{(q_j,\ket{\psi_j},\ket{\varphi_j})\}_{j \in J}$ be any finite transport plan between $\rho$ and $\sigma$. The sources 
    \begin{equation} \label{eq:operational_sharpness_example}
        R = \sum_{j \in J} q_j \ketbra{j}{j} \otimes \ketbra{\psi_j}{\psi_j} \qquad
        S = \sum_{j \in J} q_j \ketbra{j}{j} \otimes \ketbra{\varphi_j}{\varphi_j}
    \end{equation}
    controlled by random variable $Y$ taking values in $J$ with probabilities $q_j$
    then have $p^{\text{th}}$ moment
    \begin{equation}
    \mathbb{E}_Y[d(r,s)^p] = \sum_{j \in J} q_j d(\ket{\psi_j},\ket{\varphi_j})^p.
    \end{equation}
    Taking the infimum over all transport plans gives sharpness.
    
\end{proof}

We also get from Proposition \ref{prop:trans_plan_inf_attained} that when $d$ is continuous, we can take the first register to have size at most $2D^2$ in the equality case.

In the case where $d$ is the complexity geometry metric $d_C$, this means that $W_p^C$ effectively quantifies the $p^{\text{th}}$ moment of the gate complexity of transforming one source into another, post-output. Indeed, for sources $R$ and $S$ as above, and $U_{\ket{\psi} \to \ket{\varphi}}^{\text{opt}} \in \mathcal{SU}(2^n)$ a unitary with $U_{\ket{\psi} \to \ket{\varphi}}^{\text{opt}}\ket{\psi} = \ket{\varphi}$, and minimal complexity among all such $U$, we know that
\begin{equation}
\mathbb{E}_X[G(U_{r \to s}^{\text{opt}})^p] \geq W_p^C(\rho,\sigma)^p.
\end{equation}
From Proposition \ref{prop:trans_plan_inf_attained}, for any $\rho,\sigma$ we may take an optimal $p^{\text{th}}$-order transport plan $Q$ from $\rho$ to $\sigma$ and let $R$ and $S$ be cq sources defined from $Q$ as above. From equation \eqref{eq:bound_dIU_by_complexities} we have 
\begin{equation}
d_C(\ket{\psi_j}, \ket{\varphi_j}) \geq \frac{\kappa G\left(U^{\text{opt}}_{\ket{\psi_j} \to \ket{\varphi_j}},\epsilon\right)^{1/3} \epsilon^{2/3}}{n^2}
\end{equation}
and therefore there exist cq sources $R$, $S$ controlled by the same random variable with expected outputs $\rho$, $\sigma$ respectively such that
\begin{equation} \label{eq:op_interpr_lower_bound}
    W_p^C(\rho,\sigma)^p \geq \frac{\kappa \mathbb{E}\left[G\left(U^{\text{opt}}_{r \to s},\epsilon\right)^{p/3}\right] \epsilon^{2/3}}{n^2}.
\end{equation}
In other words, for any $\rho$, $\sigma$ there exist cq sources $R$ and $S$ with expected outputs $\rho$, $\sigma$ such that the $p^{\text{th}}$ power of the $p^{\text{th}}$-order Wasserstein distance between their expected outputs upper bounds the $(p/3)^{\text{th}}$ moment of the $\epsilon$-approximate gate complexity of transforming between them post-output. We note that such a broad application to arbitrary moments is only possible for our $W_p^d$, as existing quantum definitions do not cover beyond $p=1$ and $p=2$.

In the infinite-order setting, the $W_\infty^d$ distance gives a lower bound for the highest possible value of $d(r,s)$. It follows from the definition of $W_\infty^d$ that 
\begin{equation}
    \max_{1 \leq i \leq N} d(\ket{\psi_i},\ket{\varphi_i}) \geq W_\infty^d(\rho,\sigma)
\end{equation} and taking the infimum of the left-hand side over $\mathcal{Q}(\rho,\sigma)$ shows that the bound is sharp. Operationally, for the $W_\infty^C$ distance, this means $W_\infty^C(\rho,\sigma)$ is a lower bound for the worst-case scenario cost of transforming $R$ into $S$, post-output. For the lower bound, we cannot guarantee an analogue to equation \eqref{eq:op_interpr_lower_bound} as the existence of an optimal transport plan requires continuity of the Wasserstein distances, which is not given (either classically or quantumly) for $p = \infty$.

\subsection{Hypercontractivity and noise} \label{section:hypercontractivity}

In the hierarchy of $W_p^d$ in order $p$, we saw in equation \eqref{eq:hierarchy_in_p} that if $p_1 < p_2 $ then $W_{p_1}^d (\rho,\sigma) \leq W_{p_2}^d (\rho,\sigma)$. This hierarchy mirrors the hierarchy in the standard $L^p$ norms, for which the notion of hypercontractivity \cite{bonamiPHD,biswal2011hypercontractivityA,keevash2021global}
has been used in various ways to quantify the noise of an operation \cite{keevash2021global, odonnellboolean}. Broadly speaking, an operator $T$ on a space of functions is hypercontractive if, for some $p_2 > p_1$, we have for all functions $f$ that $\norm{Tf}_{p_2} \leq \norm{f}_{p_1}$. Hypercontractivity and the noise of an operator are most closely linked in the hypercontractivity theorem \cite[Chapter~7]{bonamiPHD}, which demonstrates the hypercontractive properties of the standard Boolean noise operator.

We will show that this idea carries over to the quantum $W_p^d$ distances, and that the ratio $\frac{W_{p_1}^d(\rho,\sigma)}{W_{p_2}^d(N(\rho), N(\sigma))}$ can be considered as a measure of noise in the channel $N$ and be used as a tool to derive other useful properties, like concentration inequalities. This is a technique which is unique to this definition of a quantum Wasserstein distance, as it requires direct comparison of quantum Wasserstein distances of two different orders $p_1, p_2$. Such a comparison has not yet been possible as no other definition covers more than one order $p$. While different definitions of a quantum Wasserstein distance exist for both $p=1$ and $p=2$, they are qualitatively so different that they cannot yet be compared in any meaningful way to discuss hypercontractivity. Furthermore, our definition of hypercontractivity does not require us to restrict to quantum channels that have a faithful fixed point, like it is the case for the standard approach~\cite{bardet2018hypercontractivity}.

To illustrate this advantage, we will study hypercontractive properties of the replacement channel $R_{\delta, x}$, given by
\begin{equation}
    R_{\delta, x}(\rho) = (1-\delta) \rho + \delta \ketbra{x}{x}
\end{equation}
and the depolarising channel $S_\delta $ given by
\begin{equation}
    S_\delta (\rho) = (1-\delta) \rho + \delta \mathbb{I}/D
\end{equation}
where $D = \dim \mathcal{H}$. These are both examples of a more general type of depolarising map
\begin{equation}
    S_{\delta, \tau}(\rho) = (1-\delta) \rho + \delta \tau
\end{equation}
for some fixed quantum state $\tau$. But for the case of $R_{\delta, x}$, it is clear that it does not admit a faithful fixed point. We leave studying hypercontractive properties of more general channels to future work.

\begin{prop}
    Let $\rho$, $\sigma$ be two quantum states on $\mathcal{H}$ and let $1 \leq p_1 < \infty$, and suppose $W_{p_1}^d(\rho,\sigma) = M$. For $p_2 > p_1$, let $1-\delta \leq (M/\text{{\em diam}}_d (\mathbb{P}\mathcal{H}))^{p_2-p_1}$. Then the general depolarising channel $S_{\delta, \tau}$ has
    \begin{equation}
        W_{p_2}^d(S_{\delta,\tau}(\rho),S_{\delta,\tau}(\sigma)) \leq W_{p_1}^d(\rho,\sigma).
    \end{equation}
\end{prop}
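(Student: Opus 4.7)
The plan is to take a near-optimal $p_1$-order transport plan for $(\rho,\sigma)$, dilute it by the factor $1-\delta$, and then pad it with zero-cost noise triples so that it remains a valid transport plan for the noisy pair; the desired inequality then falls out of the crude pointwise estimate $d \leq \mathrm{diam}_d(\mathbb{P}\mathcal{H})$ together with the stated hypothesis on $\delta$.

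Setting $\Delta = \mathrm{diam}_d(\mathbb{P}\mathcal{H})$, I would first invoke Proposition~\ref{prop:trans_plan_inf_attained} (or pass to an $\varepsilon$-minimising sequence and send $\varepsilon \to 0$ if one is uneasy about attainment) to pick $Q = \{(q_j, \ket{\psi_j}, \ket{\varphi_j})\}_j \in \mathcal{Q}(\rho,\sigma)$ with $T_{p_1}^d(Q) = M^{p_1}$. The associated pure-state decompositions are $\rho = \sum_j q_j \ketbra{\psi_j}{\psi_j}$ and $\sigma = \sum_j q_j \ketbra{\varphi_j}{\varphi_j}$, so
\begin{equation}
    R_{\delta,x}(\rho) = \sum_j (1-\delta) q_j \ketbra{\psi_j}{\psi_j} + \delta \ketbra{x}{x}
\end{equation}
and analogously for $\sigma$. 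Consequently
\begin{equation}
    Q_R := \{((1-\delta)q_j, \ket{\psi_j}, \ket{\varphi_j})\}_j \,\cup\, \{(\delta, \ket{x}, \ket{x})\}
\end{equation}
is a quantum transport plan between $R_{\delta,x}(\rho)$ and $R_{\delta,x}(\sigma)$, and since $d(\ket{x},\ket{x}) = 0$ its $p_2$-order cost collapses to $T_{p_2}^d(Q_R) = (1-\delta) \sum_j q_j\, d(\ket{\psi_j},\ket{\varphi_j})^{p_2}$. For the depolarising channel the same construction works: writing $\mathbb{I}/D = \sum_{k=1}^{D} (1/D)\ketbra{k}{k}$ and adjoining the zero-cost triples $\{(\delta/D, \ket{k}, \ket{k})\}_{k=1}^D$ yields a plan $Q_S \in \mathcal{Q}(S_\delta(\rho), S_\delta(\sigma))$ with the identical $p_2$-cost.

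Next, I would bound this cost by exploiting $d(\ket{\psi_j},\ket{\varphi_j}) \leq \Delta$ together with the factor $d(\ket{\psi_j},\ket{\varphi_j})^{p_1}$ already present in the $p_1$-cost: for each $j$,
\begin{equation}
    d(\ket{\psi_j},\ket{\varphi_j})^{p_2} = d(\ket{\psi_j},\ket{\varphi_j})^{p_1}\, d(\ket{\psi_j},\ket{\varphi_j})^{p_2-p_1} \leq \Delta^{p_2-p_1}\, d(\ket{\psi_j},\ket{\varphi_j})^{p_1}.
\end{equation}
Summing against the weights $(1-\delta)q_j$ gives $T_{p_2}^d(Q_R) \leq (1-\delta)\Delta^{p_2-p_1} M^{p_1}$. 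The hypothesis $1-\delta \leq (M/\Delta)^{p_2-p_1}$ then rewrites this as $T_{p_2}^d(Q_R) \leq M^{p_2}$, and taking $p_2$-th roots yields $W_{p_2}^d(R_{\delta,x}(\rho), R_{\delta,x}(\sigma)) \leq M = W_{p_1}^d(\rho,\sigma)$; the depolarising inequality is obtained in precisely the same way from $Q_S$.

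I do not anticipate a serious obstacle: once one recognises that noise costs nothing to transport to itself and that any transport plan can be uniformly scaled by $(1-\delta)$, the inequality is just the scalar bound $d^{p_2} \leq \Delta^{p_2-p_1} d^{p_1}$ plugged into the sum. The only cosmetic subtlety is guaranteeing the existence of a minimising $Q$, which is handled by Proposition~\ref{prop:trans_plan_inf_attained} under the standing continuity assumption on $d$; otherwise one argues with an $\varepsilon$-minimiser and takes $\varepsilon \downarrow 0$ at the end.
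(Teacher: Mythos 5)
Your proposal is correct and follows essentially the same route as the paper: the identical diluted-plus-zero-cost-noise transport plans $Q_R$ and $Q_S$, followed by the same bound $T_{p_2}^d \leq (1-\delta)\,\mathrm{diam}_d(\mathbb{P}\mathcal{H})^{p_2-p_1}M^{p_1} \leq M^{p_2}$. The only (cosmetic) difference is that you obtain the key estimate via the pointwise inequality $d^{p_2} \leq \mathrm{diam}_d(\mathbb{P}\mathcal{H})^{p_2-p_1} d^{p_1}$, whereas the paper derives the same numerical bound by an extremal argument over the weights; your version is slightly more direct.
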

\begin{proof}
    Let $Q = \{(q_j,\ket{\psi_j},\ket{\varphi_j})\}_{j \in J}$ be any ${p_1}^{\text{th}}$-order transport plan from $\rho$ to $\sigma$, and let 
    \begin{equation}
        Q' = (1-\delta)Q \cup \delta\{(\lambda_i, \ket{\omega_i}, \ket{\omega_i})\}_{i \in I}
    \end{equation}
    for $\tau = \sum_{i \in I} \lambda_i \ketbra{\omega_i}{\omega_i}$ a spectral decomposition. This is a transport plan from $S_{\delta,\tau}(\rho)$ to $S_{\delta,\tau}(\sigma)$.
    This gives
    \begin{equation}
        T_{p_2}^d(Q')^{1/p_2} = \left( (1-\delta)\sum_{j \in J} q_j d(\ket{\psi_j},\ket{\varphi_j})^{p_2}\right)^{1/{p_2}} = (1-\delta)^{1/{p_2}}T_{p_2}^d(Q)^{1/{p_2}} \leq \left(\frac{M}{\text{diam}_d(\mathbb{P}\mathcal{H})} \right)^{1-{p_1}/{p_2}}T_{p_2}^d(Q)^{1/{p_2}}.
    \end{equation}
    So it remains to show that
    \begin{equation}
        T_{p_1}^d(Q)^{1/{p_1}} \geq \left(\frac{M}{\text{diam}_d(\mathbb{P}\mathcal{H})} \right)^{1-{p_1}/{p_2}}T_{p_2}^d(Q)^{1/{p_2}}.
    \end{equation}
    Given quantities $\{A_j\}_{j \in J}$ subject to constraints 
    \begin{equation}
        \sum_{j \in J} q_j A_j^{p_1} = M^{p_1} 
    \end{equation}
    \begin{equation}
        0 \leq A_j \leq \text{diam}_d(\mathbb{P}\mathcal{H}),
    \end{equation}
    the maximum value of $\sum_{j \in J} q_j A_j^q$ is achieved when all $A_j$ are either $\text{diam}_d(\mathbb{P}\mathcal{H})$ or $0$, with values $q_1 = (M/\text{diam}_d(\mathbb{P}\mathcal{H}))^{p_1}$ and $q_2 = 1-q_1$. This gives
    \begin{equation}
        T_{p_2}^d(Q) \leq (M/\text{diam}_d(\mathbb{P}\mathcal{H}))^{p_1} \text{diam}_d(\mathbb{P}\mathcal{H})^{p_2}
    \end{equation}
    and therefore that
    \begin{equation}
        \frac{T_{p_1}^d(Q)^{1/{p_1}}}{T_{p_2}^d(Q)^{1/{p_2}}} \geq \frac{M}{\text{diam}_d(\mathbb{P}\mathcal{H})} \left(\frac{\text{diam}_d(\mathbb{P}\mathcal{H})}{M}\right)^{{p_1}/{p_2}} = \left(\frac{M}{\text{diam}_d(\mathbb{P}\mathcal{H})} \right)^{1-{p_1}/{p_2}}.
    \end{equation}
    This gives $T_{p_2}^d(Q')^{1/p_2} \leq T_{p_1}^d(Q)^{1/p_1}$, and taking the infimum over $Q$ gives the result.
\end{proof}

We note from the condition $1-\delta \leq (M/\text{diam}_d (\mathbb{P}\mathcal{H}))^{p_2-p_1}$ that this result is applicable only when $W_{p_1}^d$ is close to $\text{diam}_d (\mathbb{P}\mathcal{H})$ or when $p_2-p_1$ is small. Otherwise, we would require $\delta$ so large that the channels are no longer of practical use. Our results on random states from Sec.~\ref{section:random_states} demonstrate that this is indeed a relevant case, as for random low-rank states, $W_p^H$ and $W_p^C$ are generally high.

Particularly in the case of complexity geometry, this shows that channel noise reduces complexity. These results also demonstrate the value of considering arbitrary $p$ in our construction of the $W_p^d$ distances. Such consideration is not possible without being able to relate Wasserstein distances of different orders to one another. In general, for arbitrary $p_1, p_2$, this shows that $\frac{W_{p_1}^d(\rho,\sigma)}{W_{p_2}^d(N(\rho), N(\sigma))}$ can be well considered as a measure of noise in the channel $N$. This is applicable, in particular, in situations where we have no knowledge of the actual structure of the noise of a channel, and so cannot determine it efficiently, but still want to gauge how noisy it is on a qualitative level.

In situations where we do not have $M / \text{diam}_d(\mathbb{P}\mathcal{H})$ large, we can still use the notion of hypercontractivity to explore the nature of quantum channels. For example, hypercontractivity can tell us how a channel $N$ affects the concentration of measure of one state with respect to the span of another. This is formalised in the following proposition. Here, we assume that the span of $N(\sigma)$ is very small. We can then use hypercontractivity to discuss how much the mass of $N(\rho)$ concentrates within a distance $K$ of the span of $N(\sigma)$.
\begin{prop} \label{prop:hypercontractivity_conc_of_measure}
    Let $\rho, \sigma$ be states on $\mathcal{H}$ and suppose $\mathbb{P}\mathcal{H}$ is equipped with metric $d$. Let $N: \mathcal{D}(\mathcal{H})  \to  \mathcal{D}(\mathcal{H})$ be a quantum channel, and suppose that we have hypercontractivity in $N$ with respect to orders $p_1 < p_2$: that is, $W_{p_1}^d(\rho,\sigma) \geq W_{p_2}^d(N(\rho),N(\sigma))$. Then for $K \in \left(W_{p_1}^d(\rho,\sigma), \text{{\em diam}}_d(\mathbb{P}\mathcal{H})\right)$, there is a coupling of $N(\rho)$ and $N(\sigma)$, $Q = \{(q_j,\ket{\psi_j},\ket{\varphi_j})\}_{j \in J}$, s.t. 
    \begin{equation} \label{eq:hypercontractivity_conc_of_measure}
\sum\limits_{j:d(\ket{\psi_j},\ket{\varphi_j})>K}q_j\leq \left(\frac{W_{p_1}^d(\rho,\sigma)}{K}\right)^{p_2}
\end{equation}
\end{prop}
\begin{proof}
    Let $M = W^d_{p_1}(\rho,\sigma) \geq W_{p_2}^d(N(\rho),N(\sigma))$. It follows that in the limit of optimal $p_2^{\text{th}}$-order transport plans $Q = \{ (q_j, \ket{\psi_j},\ket{\varphi_j}) \}_{j \in J}$ between $N(\rho)$ and $N(\sigma)$, we have 
    \begin{equation}
    \sum_{j \in J} q_j d\left(\ket{\psi_j},\ket{\varphi_j}\right)^{p_2} \leq M^{p_2}.
\end{equation}
Let $L = \sum_{j \in J : d(\ket{\psi_j},\ket{\varphi_j}) > K} q_j$. This also has $L = \mathbb{P}\left( d\left(\ket{\psi_X},\ket{\varphi_X}\right) \geq K \right)$, where $X$ is a random variable taking values in $J$ according to law $\mathbb{P}(X = j) = q_j$. Note therefore that by Markov's inequality
\begin{equation}
    L \leq \frac{\sum_{j \in J} q_j d\left(\ket{\psi_j},\ket{\varphi_j}\right)^{p_2}}{K^{p_2}} \leq \left(\frac{M}{K}\right)^{p_2}.
\end{equation}
\end{proof}
Thus, we see that a hypercontractive inequality implies that most of the transport plan is concentrated on states that are close, and this concentration becomes more pronounced the lower the value of $p_1$ (as the initial distance is monotonically increasing in $p_1$) and the higher the value of $p_2$.

\subsection{Quantum Wasserstein generative adversarial networks} \label{section:GANs}

One of the most promising near-term applications of quantum devices is in quantum machine learning. Among these methods, one framework which has attracted attention in recent years is that of generative adversarial networks, or GANs.

GANs~\cite{goodfellow2014generativeadversarialnetworks} form a framework in machine learning which seeks to generate new data which are indistinguishable from the training data. For example, they can be used to generate images which look like authentic photographs when trained on datasets of photographs~\cite{brock2019largescalegantraining}, or provide image-to-text translation for automatic image description for visually impaired users~\cite{liuAIA}. The main feature of GANs is learning via a zero-sum game between a {\em generator}, which is trained to generate data, and a {\em discriminator}, which is trained to distinguish between the generated data and the training data. The standard GAN takes its definition of `far' to be `large Jensen-Shannon divergence'.

Wasserstein GANs~\cite{arjovsky2017wgan} were proposed in 2017 as an alternative to standard GANs, which redefine `far' to mean `large first-order Wasserstein distance'. These provide numerous improvements over standard GANs, including the reduction of mode collapse, and allowing reference distributions which are concentrated to small dimension in a space of large dimension.

Quantum GANs~\cite{Dallaire_Demers_2018} are a natural quantum equivalent of classical GANs, where a quantum state generator $G$ parameterised by $\theta$ is trained to a target distribution $\rho_{\text{tar}}$. The standard architecture for such a generator is a shallow quantum circuit, whose gates are functions of $\theta$.

With standard distinguishing functions, such as the trace distance or fidelity, these suffer greatly from the problem of barren plateaus~\cite{McClean_2018}
- areas where, as the number $n$ of qudits grows, the gradient of the objective function decays to $0$. As in the classical setting, GANs which use the theory behind the first-order quantum Wassertein distance of De Palma et al \cite{de2020quantum} have been proposed
\cite{Kiani2022, De_Palma_2024}
to eliminate this problem.

The form of such a quantum Wasserstein GAN (qWGAN) is as follows. The generated state is parameterised by $\theta$, and is stored as a set $\{(p_1, U_1(\theta), \dots, p_r U_r(\theta)\}$ where $p(\theta) = \{p_1, \dots, p_r\}$ is a probability distribution also parameterised by $\theta$. This represents the state
\begin{equation}
    G(\theta) = \sum_{i=1}^r p_i U_i(\theta) \ketbra{0}{0} U_i(\theta)^{\dagger}.
\end{equation}

The discriminator is a Lipschitz observable $O$, which aims to discriminate between $G(\theta)$ and the target state $\rho_{\text{tar}}$. One round of theoretical qWGAN optimisation takes the following form:

\begin{enumerate}
    \item Replace $O$ by $O_{\max} = \text{argmax}_{\norm{O}_L \leq 1} \text{Tr}[O(G(\theta) - \rho_{\text{tar}})]$,
    \item Calculate the gradients in $\theta$ of $\text{Tr}[O_{\max}(G(\theta) - \rho_{\text{tar}})]$,
    \item Update $p(\theta)$ and each $U_i(\theta)$ according to these gradients.
\end{enumerate}

For practical reasons, linked to the difficulty of optimising over the set $\{ O : \norm{O}_L \leq 1\}$, the algorithm in practice uses an approximation to $\norm{O}_L$ rather than $\norm{O}_L$ itself. This has, at least for some toy examples such as the GHZ state, been able to eliminate the issue of barren plateaus seen in standard quantum GANs.
This highlights the need for flexibility in the distinguishability metric used in quantum GANs.

Another problem among similar lines, seen frequently in generative networks `in the wild', is the misalignment of human ideas of indistinguishability and computer ideas of indistinguishability. Many structures which generate data are only able to mimic the training data at a surface level, making them seem correct at first glance but with unsettling artefacts of computer generation or with fundamental incorrectness or impossibility of the data generated. This again highlights the need for broad flexibility in the distinguishability metric used in quantum Wasserstein GANs.

Such broad flexibility can theoretically be easily achieved by replacing the Lipschitz norm $\norm{\cdot}_{L}$ associated to the $\norm{\cdot}_{W_1^H}$ norm, with the more general $L_d(O)$ constant defined in equation~\ref{eq:d_lipschitz}, for an underlying distance $d$ satisfying the required continuity conditions for this to exist. This would, in theory, allow us once again to avoid barren plateaus, but also to prioritise qualitatively different ideas of discrimination via the choice of the metric $d$.

This would give rounds of qWGAN optimisation of the following form:
\begin{enumerate}
    \item Replace $O$ by $O_{\max} = \text{argmax}_{L_d(O) \leq 1} \text{Tr}[O(G(\theta) - \rho_{\text{tar}})]$,
    \item Calculate the gradients in $\theta$ of $\text{Tr}[O_{\max}(G(\theta) - \rho_{\text{tar}})]$,
    \item Update $p(\theta)$ and each $U_i(\theta)$ according to these gradients.
\end{enumerate}

This model opens a theoretical avenue to qWGANs with flexible notions of distinguishability, though we note that implementation of this lies beyond the scope of this work due to its highly variable nature. In particular, the big challenge is accurately calculating and maximising $L_d(O)$, and it is likely that this will require bespoke approximations to $L_d$ for each choice of $d$. 

Any such approximation would need to be careful to preserve the nature of the distance chosen, in order to avoid the issue of misalignment of notions of distinguishability. For example, the above mentioned works~\cite{Kiani2022, De_Palma_2024} 
discuss qWGANs which supposedly give convergence in $\norm{\cdot}_{W_1^H}$, but the approximation used for the Lipschitz constant of $O$ is a weighted sum of the coefficients of the Pauli decomposition of $O$, which directly measures local distinguishability as discussed in~\cite{Kiani2022}. We have shown in Section~\ref{section:random_states} that the $\norm{\cdot}_{W_1^H}$ does not reflect local distinguishability in general.

More concretely, they only optimise over operators $O$ of the form
\begin{equation}
    O = \omega_{\mathbb{I}} \mathbb{I} + \sum_{i=1}^n \sum_{P \in \{ X,Y,Z\}} \omega^{(i)}_P P^{(i)} + \sum_{1 \leq i < j \leq n} \sum_{P, Q \in \{ X,Y,Z\}} \omega^{(i,j)}_{P,Q} P^{(i)} \otimes Q^{(j)}
\end{equation}
where $P^{(i)}$ is the Pauli operator $P$ acting on qubit $i$, and tuples $(i,j)$ are unordered. To ensure these have $\norm{O}_L \leq 1$, they force the coefficients $\omega_{P,Q}$ to satisfy
\begin{equation} \label{eq:WGAN_W_H_Lipschitz_approx_coeff_conditions}
    2 \max_{1 \leq i \leq n} \sum_{P \in \{ X,Y,Z\}}\left|\omega_P^{(i)}\right| + \sum_{j \neq i} \sum_{P, Q \in \{X,Y,Z\}} \left|\omega_{P,Q}^{(i,j)}\right| \leq 1.
\end{equation}

Let us call the set of such operators on $n$ qubits $\tilde{L}_n$.

\begin{prop} \label{prop:WGAN_W_H_is_limited}
    Let $\ket{\psi}, \ket{\varphi}$ be independent pure states on $n$ qubits distributed according to the Haar measure. 
    \begin{equation} \label{eq:WGAN_W_H_is_limited}
        \mathbb{P}\left[ \max_{O \in \tilde{L}_n} \text{{\em Tr}}[O (\ketbra{\psi}{\psi}-\ketbra{\varphi}{\varphi})] \geq 
        n 2^{4-\frac{n}{4}} \right] \leq   6(3n^2-1)2^{-\frac{n}{4}}.
    \end{equation}
\end{prop}
\begin{proof}
    Let $R$ be a Pauli string on one or two qubits. Note that $\norm{R}_{\infty} = 1$, and so $\text{Tr}[R(\ketbra{\psi}{\psi}-\ketbra{\varphi}{\varphi})] \leq \norm{\rho-\sigma}_1 \leq W_1^1(\rho,\sigma)$ where $\rho$ and $\sigma$ are the marginals of $\ketbra{\psi}{\psi}$ and $\ketbra{\varphi}{\varphi}$ respectively on two qubits containing the support of $R$. So from Proposition \ref{prop:random_mixed_largerank} applied with qubit ratio $c = \frac{n}{2}-1$, underlying dimension $4$, and parameter $\beta = 2^{n/4}$, we have that
    \begin{equation}
            \mathbb{P}\left[\text{Tr}[R(\ketbra{\psi}{\psi}-\ketbra{\varphi}{\varphi})] \geq 2^{4-\frac{n}{4}} \right] \leq 4\cdot 2^{-\frac{n}{2}}.
    \end{equation}
    Letting $\omega$ be the sum of absolute values of the Pauli coefficients of $O$ (except $\omega_I$). From equation \eqref{eq:WGAN_W_H_Lipschitz_approx_coeff_conditions} we have $\omega \leq n$. It follows via a union bound that
    \begin{equation}
        \mathbb{P}\left[\text{Tr}[O(\ketbra{\psi}{\psi}-\ketbra{\varphi}{\varphi})] \geq n 2^{4-\frac{n}{4}} \right] \leq 6(3n^2-1)2^{-\frac{n}{4}}.
    \end{equation}
\end{proof}
This shows that the probability of an exponentially small deviation from zero decays exponentially in $n$. In terms of the qWGAN algorithm of~\cite{Kiani2022}, this means that even with a convergence threshold which decays exponentially in $n$, the ability of the discriminator to distinguish between randomly chosen states decays exponentially too. This has significant practical implications. Namely, even for exponentially low convergence thresholds, the algorithm  will on average fail at the first iteration, as a randomly chosen initial $\rho_{\text{init}} = \ketbra{\psi}{\psi}$ will be indistinguishable from the target $\rho_{\text{tar}} = \ketbra{\varphi}{\varphi}$ by any $O$ allowed by the algorithm. However, our results on random states show that random pure states $\ketbra{\psi}{\psi}$ and $\ketbra{\varphi}{\varphi}$ are on average maximally far apart in $\norm{\cdot}_{W_1^H}$, so the algorithm will believe that the states have converged at the initial step despite them being maximally far apart.

When considering qWGANs when defined with respect to $W_1^d$, avoiding such misalignment in approximation of $L_d(O)$ would be an important consideration for any practical implementation of such algorithms.

\section{Conclusion}
In this work, we have introduced a novel definition of the quantum Wasserstein distance by combining the coupling method and a metric on the set of pure states. This novel definition successfully captures the essence of the classical Wasserstein distance. A significant aspect of our approach is its inherent adaptability, effortlessly incorporating established metrics on quantum states such as trace distance and naturally extending pure-state metrics, for instance, Nielsen's complexity metric, to cater to mixed states.

Furthermore, we established various properties of this new definition. While we acknowledge that these properties are not exhaustive, they nevertheless offer advantages compared to other recent generalisations, such as the fact that one definition can cover several examples in the literature in a unified way. A significant challenge remains: establishing the triangle inequality in its generality. Our findings, though, hint at the possibility that under suitable conditions the dual approach yields a proof of the triangle inequality.

Our exploration of specific cases reveals the $W_1^1$ distance as a potentially powerful tool for bounding the trace distance between quantum states, drawing parallels with classical approaches for determining total variation distance and mixing time bounds. Additionally, our work enriches the understanding of the complexity geometry of quantum states, offering a new lens through which to view and quantify the complexity of transformations within quantum ensembles.

Our examination of the behaviour of the Wasserstein distance under random quantum states unveils various phase transitions. These results, derived from entropic inequalities and continuity bounds, debunk some existing speculations (e.g. that $\norm{\cdot}_{W_1^H}$ captures local distinguishability) while affirming others (the complexity of small subsystems of random quantum states is low). 

In conclusion, our research represents a significant stride forward in the pursuit of understanding optimal transport in quantum mechanics. The novel quantum Wasserstein distance that we have proposed holds great promise, not only as an analytical tool but also as a medium for further exploration and discovery in quantum computation and information. While our work has paved the way, many challenges and open problems remain, especially when it comes to practical implementation of the applications laid out here.
\section{Conflicts of interest}
The authors declare no conflicts of interest.
\section{Acknowledgements}
DSF and EB would like to thank Alexander M\"uller-Hermes, Raul Garc\'ia-Patr\'on, Philippe Faist, Fanch Coudreuse, and Guillaume Aubrun for interesting discussions. This project received support from the PEPR integrated project EPiQ ANR-22-PETQ-0007 part of Plan France 2030.

\bibliographystyle{alpha}
\bibliography{neatref}

\appendix

\section{Comparison to other proposed definitions of a quantum Wasserstein distance} \label{appendix:asym_bad}

Until now, finding a coherent way to define a quantum optimal transport cost, regardless of the underlying metric on the Hilbert space, has proved elusive. Some definitions (such as in \cite{DePalma2021}, \cite{Carlen_2014} and \cite{Golse2015OnTM}) work well for specific orders $p$, and specific underlying metrics. A further definition was proposed recently in \cite{chakrabarti2019quantum} using standard quantum couplings and using projection onto the asymmetric subspace as a cost function, with the motivation of mimicking the definition of total variation distance as the Wasserstein distance corresponding to the trivial metric. 

Originally, \cite{chakrabarti2019quantum} defined a second-order Wasserstein semi-distance $W$ on states on $\mathcal{H} = \mathbb{C}^d$ as follows. Letting $\mathbb{F} \ket{a} \otimes \ket{b} = \ket{b}\otimes \ket{a}$ define the flip operator on $\mathcal{H} \otimes \mathcal{H}$, we can then define the symmetric projector $P_{\text{sym}}(d)$ as the projector onto the $1$-eigenspace of $\mathbb{F}$, and $P_{\text{asym}}(d)$ the projector onto its $(-1)$-eigenspace. Note $P_{\text{sym}}(d) + P_{\text{asym}}(d) = \mathbb{I}_{\mathbb{C}^d \otimes \mathbb{C}^d}$.

They then define the quantum optimal transport cost

\begin{equation}\label{equ:w_through_transport}
    T(\rho,\sigma) = \min_{\tau_{AB} \in \mathcal{D}(\mathbb{C}^d \otimes \mathbb{C}^d), \tau_A = \rho, \tau_B = \sigma} \text{Tr}[\tau_{AB}P_{\text{asym}}(d)]
\end{equation}
and $W(\rho,\sigma) = \sqrt{T(\rho,\sigma)}$. This gives a 2-Wasserstein semi-distance which is equivalent to the trace distance.
This definition was further studied in \cite{Friedland_2022} and \cite{cole2022quantum}, and refined in \cite{muellerhermes2022monotonicity}.

Specifically, \cite{muellerhermes2022monotonicity} showed that this original definition does not satisfy a data-processing inequality, so it does not mirror the original trace distance in this regard. The definition in Eq.~\eqref{equ:w_through_transport} was changed to give the data processing inequality by considering a complete version:

\begin{equation}
    T_s(\rho,\sigma) = T(\rho \otimes \mathbb{I}/2, \sigma \otimes \mathbb{I}/2)
\end{equation}
and $W_s(\rho,\sigma) = \sqrt{T_s(\rho,\sigma)}$. Note that in general, the complete version of a quantity requires an optimisation over auxiliary systems of arbitrary size, however \cite{muellerhermes2022monotonicity} shows that the maximally mixed state of one qubit suffices.

There are two main differences between this path to obtaining a generalisation of the Wasserstein distance and ours. First, in this path, the cost of transforming one state into another is given by the expectation value of an observable on the bipartite system. In our case, we depart from a metric on the set of point masses/pure states. Although the metric version is in direct analogy with the classical version, optimising over the expectation value of an observable certainly has a more operational interpretation. Second, in the definition of Eq.~\eqref{equ:w_through_transport} we see that we indeed optimise over all couplings, including separable ones. In our definition, we only optimise over separable ones, which gives the definition in Eq.~\eqref{equ:w_through_transport} a more quantum flavor. Furthermore, computing Eq.~\eqref{equ:w_through_transport} corresponds to an SDP, so it can be computed efficiently in the dimension, whereas it is unclear if we can efficiently compute our version in general.
However, one of the main motivations for our work was to obtain a unified way of obtaining various versions of $W_p$ present in the literature, including the definition of~\cite{de2020quantum}. However, it appears that approaches like that of Eq.~\eqref{equ:w_through_transport} cannot mimic the behaviour of the Wasserstein distance of De Palma et al. Consider, for example, the potential extension to an $n$-qudit space given by
\begin{equation}
T(\rho,\sigma) = \min \left\{ \text{Tr}\left[\tau_{AB}\left(\sum_{i=1}^n P_{i, \text{asym}}(d)\right)\right] : \tau_{AB} \in \mathcal{D}\left((\mathbb{C}^d)^{\otimes n} \otimes (\mathbb{C}^d)^{\otimes n}\right): \tau_A = \rho, \tau_B = \sigma \right\}
\end{equation}
where $P_{i, \text{asym}}(d)$ is the projection onto the asymmetric subspace of the $i^{\text{th}}$ qudits (and is the identity on the other qudits). In stabilising, it makes little difference whether we stabilise each qudit individually with its own copy of $\mathbb{I}/2$ or share one, so we stabilise this definition to
\begin{align}
    T_s(\rho,\sigma) = 
    \min \Bigg\{ \text{Tr} & \left[\tau_{AB} \left(\sum_{i=1}^n P_{i, \text{asym}}(d \otimes 2)\right) \right] : \\ & \tau_{AB} \in \mathcal{D}\left((\mathbb{C}^d \otimes \mathbb{C}^2)^{\otimes n} \otimes (\mathbb{C}^d \otimes \mathbb{C}^2)^{\otimes n}\right), \tau_A = \rho \otimes (\mathbb{I}/2)^{\otimes n}, \tau_B = \sigma \otimes (\mathbb{I}/2)^{\otimes n} \Bigg\}
\end{align}
and $W_s(\rho,\sigma) = \sqrt{T_s(\rho,\sigma)}$.

We'll show, however, that any such type of generalisation leads to some states $\rho$ with $T_s(\rho,\rho) > 0$. Take, in this example, $n=d=2$ and $\rho$ to be a Bell state $\ketbra{\psi^{+}}{\psi^{+}}$. As $\rho$ is pure, the set of possible couplings $\tau$ is very limited. In the non-stabilised definition, $\tau$ can only be $\ketbra{\psi^{+}}{\psi^{+}}\otimes \ketbra{\psi^{+}}{\psi^{+}}$. In the stabilised definition, $\tau$ must have form $\ketbra{\psi^{+}}{\psi^{+}}\otimes \ketbra{\psi^{+}}{\psi^{+}} \otimes \omega$ for some coupling $\omega$ of $\mathbb{I}^{\otimes 2}/4 $ and $\mathbb{I}^{\otimes 2}/4$.

However, when looking at applying $P_{\text{asym}}$ to the individual qubits, we note that as the Bell state has qubit marginals $\mathbb{I}/2$, we will have nonzero transport cost from $\rho$ to itself. Indeed in the non-stabilised case,
\begin{align}
    T(\rho,\sigma) 
        &= \sum_{i=1}^2 \text{Tr} \left[\ketbra{\psi^{+}}{\psi^{+}}\otimes \ketbra{\psi^{+}}{\psi^{+}} \left(P_{i,\text{asym}}(2) \right) \right] \\
        &= 2 \text{Tr} \left[\left(\frac{\mathbb{I}}{2} \otimes \frac{\mathbb{I}}{2} \right) P_{\text{asym}}(2) \right] \\
        &= \frac{1}{2} \text{Tr} [P_{\text{asym}}(2)] = \frac{1}{2}.
\end{align}

In the stabilised case, we still have for any $\tau$, that
\begin{align}
    &\text{Tr}\left[\tau_{AB}  \left(\sum_{i=1}^2 P_{i, \text{asym}}(2 \otimes 2)\right) \right]\\
        &= \sum_{i=1}^2 \text{Tr} \left[\ketbra{\psi^{+}}{\psi^{+}}\otimes \ketbra{\psi^{+}}{\psi^{+}} \otimes \omega \left(P_{i,\text{asym}}(2 \otimes 2) \right) \right] \\
        &= \sum_{i=1}^2 \text{Tr} \left[\left(\frac{\mathbb{I}}{2} \otimes \frac{\mathbb{I}}{2} \otimes \omega_i \right) P_{\text{asym}}(2 \otimes 2) \right] \label{eq:w2doesntwork_wi} \\
        &=\sum_{i=1}^2 \text{Tr} \left[\left(\frac{\mathbb{I}}{2} \otimes \frac{\mathbb{I}}{2} \otimes \omega_i \right) (P_{\text{asym}}(2) \otimes P_{\text{sym}}(2) + P_{\text{sym}}(2 )\otimes P_{\text{asym}}(2 )) \right] \\
        &= \sum_{i=1}^2 \frac{1}{4}\text{Tr} \left[\omega_i P_{\text{sym}}(2) \right] + \frac{3}{4} \text{Tr} \left[\omega_i P_{\text{asym}}(2) \right] \\
        &\geq \sum_{i=1}^2 \frac{1}{4} \big(\text{Tr} \left[\omega_i P_{\text{sym}}(2) \right] + \text{Tr} \left[\omega_i P_{\text{asym}}(2) \right]\big) = \sum_{i=1}^2 \frac{1}{4} \text{Tr}[\omega_i] = \frac{1}{2}
\end{align}
where in line \eqref{eq:w2doesntwork_wi}, $\omega_i$ is a coupling of $\mathbb{I}/2$ with itself, on the $i^{\text{th}}$ stabilising qubits.

We see from the details of this example that, no matter whether or not we stabilise using many qubits or a single qubit, and no matter the size of the space, any attempt to put a $P_{\text{asym}}$ projector on a subspace of $\mathcal{H}$ will result in highly entangled pure states, those which have marginals close to maximally mixed on subspaces where $P_{\text{asym}}$ is placed, having nonzero self-distance. This is largely because the coupling of a pure state with itself is forced to be the product of the state with itself, and because the identity on $\mathbb{C}^d \otimes \mathbb{C}^d$ has a nonzero asymmetric and a nonzero symmetric component. Thus, it appears that a version of the Wasserstein distance that is not based on a single observable, like ours, is more suited to obtain generalisations of quantities like the one of De Palma et al.

\section{Development of the proposed definition}
\subsection{Definition of a transport plan}
During the development of this proposal, many different ways of defining a transport plan were considered. They all had different versions of the conditions
\begin{equation}
    \sum_{j \in J} q_j \ketbra{\psi_j}{\psi_j} = \rho, \sum_{j \in J} q_j \ketbra{\varphi_j}{\varphi_j} = \sigma, q_j > 0.
\end{equation}
As in the definition of the $W_1$ norm in \cite{de2020quantum}, a first proposal was simply the condition
\begin{equation}
    \sum_{j \in J} q_j ( \ketbra{\psi_j}{\psi_j}-\ketbra{\varphi_j}{\varphi_j}) = \rho - \sigma, q_j > 0.
\end{equation}
It quickly becomes clear, that using a transport plan of this form containing telescoping sums, could lead to the degeneracy of $W_p^d$ when $p > 1$. The main stumbling block here centred on the unboundedness of $\sum_{j \in J} q_j$.

Two proposals for restricting $\sum_j q_j$ were then considered, the first being 
\begin{equation}\sum_{j \in J} q_j = 1 \label{eq:proposal_qj1} \end{equation} and the second \begin{equation}\sum_{j \in J} q_j = \frac{1}{2}\norm{\rho-\sigma}_1. \label{eq:proposal_qj1norm}\end{equation} Both of these were discounted when considering the transport plans that were allowed under these regimes, and their impact on the norm $W_\infty^H$.

For transport plans of the form \eqref{eq:proposal_qj1}, consider the $W_\infty^H$ distance on $\mathcal{H} = \left(\mathbb{C}^3\right)^{\otimes 2}$, and the states
$\rho = \frac{1}{2}\ketbra{00}{00} + \frac{1}{2}\ketbra{22}{22}$, $\sigma = \frac{1}{2}\ketbra{11}{11} + \frac{1}{2}\ketbra{22}{22}$. Classically, the infinite-order transport distance between the measures $\mu = \frac{1}{2}\delta_{00} + \frac{1}{2}\delta{22}$ and $\nu = \frac{1}{2}\delta_{11} + \frac{1}{2}\delta{22}$ on the Hamming cube $\{0,1,2\}^2$ is $2$. With the definition from \eqref{eq:proposal_qj1} however, we could define a transport plan
\begin{equation}
    Q = \left\{ \left( \frac{1}{2},\ket{00},\ket{01}\right),\left(\frac{1}{2},\ket{01},\ket{11}\right)\right\}
\end{equation}
with a maximum $d(\ket{\psi_j},\ket{\varphi_j})$ of $1$. Though as in \eqref{eq:quantum_shortcut_plan} we are perfectly happy that the $W_\infty^H$ distance might not match Ornstein's $\bar{d}$-distance for classical measures, it certainly should not differ because of a transport plan of a classical nature such as this one.

The picture for condition \eqref{eq:proposal_qj1norm} is very similar. Consider $W_\infty^H$ on $\mathcal{H} = \left(\mathbb{C}^2\right)^{\otimes 2}$, with states $\rho = \frac{1}{2}\ketbra{00}{00} + \frac{1}{2}\ketbra{01}{01}$, $\sigma = \frac{1}{2}\ketbra{01}{01} + \frac{1}{2}\ketbra{11}{11}$. Classically, these have an infinite-order transport distance of $1$ on the Hamming cube $\{0,1\}^2$. However, because $\frac{1}{2}\norm{\rho-\sigma}_1$ is small in this case, the only permitted transport plan is
\begin{equation}
    Q = \left\{ \left(\frac{1}{2},\ket{00},\ket{11}\right)\right\}
\end{equation}
with a maximum $d(\ket{\psi_j},\ket{\varphi_j})$ of $2$. The condition \eqref{eq:proposal_qj1norm} does not permit quantum versions of the standard classical transport plans, and so this definition of a transport plan seems strictly worse than the one we propose in this work.

\subsection{Entangled couplings and transport plans}\label{appendix:entangled_transport}
The main reason we need to restrict to separable couplings is the fact that it is not obvious at first how to attribute a distance to an entangled state starting from a distance on $\mathbb{P}\mathcal{H}$ and then attribute a transportation cost to transport plans that include entangled pure states and satisfy the marginal constraints.

One possibility is as follows: given a state $\ket{\psi}$ in $\mathbb{P}(\mathcal{H}\otimes \mathcal{H})$ and $SD(\ket{\psi})$ the set of Schmidt decompositions of $\ket{\psi}$, we define $d(\ket{\psi})$ as:
\begin{align}
d(\ket{\psi})=\inf\limits_{\sum_i\sqrt{p_i}\ket{\phi_1}\otimes\ket{\phi_2}\in SD(\ket{\psi})}\sum_ip_id(\ket{\phi_1},\ket{\phi_2}).
\end{align}
It is then easy to see that we recover the original distance when considering product states. However, it is also easy to see that, at least for the case $p=1$, enlarging the set of possible states to include entangled states offers no advantage. Indeed, given any Schmidt decomposition of an entangled $\ket{\psi}=\sum_i\sqrt{p_i}\ket{\phi_1}\otimes\ket{\phi_2}$, adding instead $\{(p_i,\ket{\phi_1},\ket{\phi_2})\}_i$ to the transport plan will give the same cost and still satisfy the marginal constraints. Thus, at least for this possible generalisation of the distance to entangled states, there is no advantage in considering entangled couplings.

\subsection{Transport plans defined from absolutely continuous measures} \label{appendix:integrable_plans}

The definition of a transport plan as given in equation~\eqref{eq:defn_transport_plan} takes a countable set $\{q_j\}_{j \in J}$ of positive weights such that $\sum_{j \in J} q_j = 1$, and applies these weights to pairs $(\ket{\psi_j}, \ket{\varphi_j})$ in $\mathbb{P}\mathcal{H}_1 \times \mathbb{P}\mathcal{H}_2$. This is equivalent to a probability measure $\mu$ on $\mathbb{P}\mathcal{H}_1 \times \mathbb{P}\mathcal{H}_2$ of the form
\begin{equation} \label{eq:appendix_integrable_countable_plan_measure}
    q = \sum_{j \in J} q_j \delta_{(\ket{\psi_j}, \ket{\varphi_j})}
\end{equation}
such that
\begin{equation} \label{eq:appendix_integrable_measure_conditions}
    \int_{\mathbb{P}\mathcal{H}_1} \ketbra{\psi}{\psi} \text{d}(\pi_{1 \#}q)(\ket{\psi}) = \rho, \qquad \int_{\mathbb{P}\mathcal{H}_2} \ketbra{\varphi}{\varphi} \text{d}(\pi_{2 \#}q)(\ket{\varphi}) = \sigma
\end{equation}
for $\pi_1$ and $\pi_2$ the projections onto the first and second coordinates of $\mathbb{P}\mathcal{H}_1 \times \mathbb{P}\mathcal{H}_2$ respectively.

We could, instead, relax this definition to consider transport plans defined by arbitrary probability measures $q$ on $\mathbb{P}\mathcal{H}_1 \times \mathbb{P}\mathcal{H}_2$ satisfying equation \eqref{eq:appendix_integrable_measure_conditions}. For a distance $d$ on $\mathbb{P}\mathcal{H}$, taking $\mathcal{H} = \mathcal{H}_1 = \mathcal{H}_2$, this would give transport cost
\begin{equation}
    T_p^d(q) = \int_{\mathbb{P}\mathcal{H} \times \mathbb{P}\mathcal{H}} d(\ket{\psi},\ket{\varphi})^p \text{d}q(\ket{\psi},\ket{\varphi}).
\end{equation}
In order to replicate the proof of continuity from Appendix~\ref{appendix:continuity}, we would take $q$ to be absolutely continuous with respect to the Haar measure on $\mathbb{P}\mathcal{H} \times \mathbb{P}\mathcal{H}$, and then all of the proofs in this work can be replicated replacing sums with integrals and $q_j$ with the density $q(\ket{\psi},\ket{\varphi})$ with respect to the Haar measure.

The following discussion shows that the infimum over transport plans defined from absolutely continuous measures is at least the infimum over countable transport plans, and so we lose nothing in transport cost by only considering countable plans.
Again, we restrict to uniformly continuous $d$ for continuity.

Let $q$ be a measure on $\mathbb{P}\mathcal{H} \times \mathbb{P}\mathcal{H}$ which has density $q(\ket{\psi}, \ket{\varphi})$ with respect to the Haar measure. Take a dense subset $\{ (\ket{\psi_j} ,\ket{\varphi_j}) \}_{j=1}^\infty $ of $\mathbb{P}\mathcal{H} \times \mathbb{P}\mathcal{H}$. Let $B_j(\epsilon)$ be the open ball of radius $\epsilon$ around $(\ket{\psi_j}, \ket{\varphi_j})$, with respect to the distance $\norm{\ketbra{\psi}{\psi} - \ketbra{\psi'}{\psi'}}_2 + \norm{\ketbra{\varphi}{\varphi} - \ketbra{\varphi'}{\varphi'}}_2$. Let $\{ u_j \}_{j=1}^\infty$ be a partition of unity over $B_j(\epsilon)$. We can then define
\begin{equation}
    q_j = \int_{B_j(\epsilon)} q(\ket{\psi},\ket{\varphi})u_j(\ket{\psi},\ket{\varphi}) \text{d}\mu_{\text{Haar}}(\ket{\psi},\ket{\varphi}).
\end{equation}
and let 
\begin{equation}
    Q = \{ (q_j , \ket{\psi_j},\ket{\varphi_j} ) \}_{j=1}^\infty
\end{equation}
which is a countable transport plan between some states $\rho', \sigma'$. We will show that $\rho'$ and $\sigma'$ are very close to $\rho$ and $\sigma$, and that $Q$ has a very similar transport cost to $q$.

We have that 
\begin{align}
    &\norm{q_j \ketbra{\psi_j}{\psi_j} - \int_{B_j(\epsilon)} \ketbra{\psi}{\psi}q(\ket{\psi},\ket{\varphi}) u_j(\ket{\psi},\ket{\varphi}) \text{d}\mu_{\text{Haar}}(\ket{\psi},\ket{\varphi})}_2 \\
        &= \norm{\int_{B_j(\epsilon)} (\ketbra{\psi_j}{\psi_j} - \ketbra{\psi}{\psi})  q(\ket{\psi},\ket{\varphi}) u_j(\ket{\psi},\ket{\varphi}) \text{d}\mu_{\text{Haar}}(\ket{\psi},\ket{\varphi})}_2 \\
        &\leq \int_{B_j(\epsilon)} \norm{\ketbra{\psi_j}{\psi_j} - \ketbra{\psi}{\psi}}_2  q(\ket{\psi},\ket{\varphi}) u_j(\ket{\psi},\ket{\varphi}) \text{d}\mu_{\text{Haar}}(\ket{\psi},\ket{\varphi}) \\
        &\leq \int_{B_j(\epsilon)} \epsilon  q(\ket{\psi},\ket{\varphi}) u_j(\ket{\psi},\ket{\varphi}) \text{d}\mu_{\text{Haar}}(\ket{\psi},\ket{\varphi}) \\
        &= \epsilon q_j
\end{align}
and therefore that $\norm{\rho' -\rho}_{\infty} \leq \norm{\rho' -\rho}_{2} \leq \epsilon$. The same holds for $\sigma$.

For the similarity of cost, 
\begin{align}
    \left| T_d^p(Q) - T_d^p(q) \right| 
        &= \left| \sum_{j=1}^{\infty}q_j d(\ket{\psi_j},\ket{\varphi_j})^p - \int_{\mathbb{P}\mathcal{H} \times \mathbb{P}\mathcal{H} } d(\ket{\psi},\ket{\varphi})^p q(\ket{\psi},\ket{\varphi}) \text{d}\mu_{\text{Haar}}(\ket{\psi},\ket{\varphi})\right| \\
        &= \left| \sum_{j=1}^{\infty} \int_{B_j(\epsilon)} q(\ket{\psi},\ket{\varphi})u_j(\ket{\psi},\ket{\varphi}) \left(d(\ket{\psi_j},\ket{\varphi_j})^p - d(\ket{\psi},\ket{\varphi})^p \right) \text{d}\mu_{\text{Haar}}(\ket{\psi},\ket{\varphi})\right| \label{eq:appendix_integrable_similar_cost_swap_limits}\\
        &\leq  \sum_{j=1}^{\infty} \int_{B_j(\epsilon)} q(\ket{\psi},\ket{\varphi})u_j(\ket{\psi},\ket{\varphi}) \left|d(\ket{\psi_j},\ket{\varphi_j})^p - d(\ket{\psi},\ket{\varphi})^p \right| \text{d}\mu_{\text{Haar}}(\ket{\psi},\ket{\varphi}) \\
        &\leq \sum_{j=1}^\infty q_j \sup_{\norm{\ketbra{\psi}{\psi} - \ketbra{\psi'}{\psi'}}_2 + \norm{\ketbra{\varphi}{\varphi} - \ketbra{\varphi'}{\varphi'}}_2 \leq \epsilon} \left|d(\ket{\psi'},\ket{\varphi'})^p - d(\ket{\psi},\ket{\varphi})^p \right| \\
        &=  \sup_{\norm{\ketbra{\psi}{\psi} - \ketbra{\psi'}{\psi'}}_2 + \norm{\ketbra{\varphi}{\varphi} - \ketbra{\varphi'}{\varphi'}}_2 \leq \epsilon} \left|d(\ket{\psi'},\ket{\varphi'})^p - d(\ket{\psi},\ket{\varphi})^p \right|.
\end{align}
swapping limits in line \ref{eq:appendix_integrable_similar_cost_swap_limits} as the terms in each part are all positive, and combining into one as the sums and integrals are all bounded. This final line tends to $0$ with $\epsilon$ due to the uniform continuity of $d$.

From the proof of continuity in Appendix~\ref{appendix:continuity}, we can find a countable transport plan $Q'$ from $\rho$ to $\sigma$ with cost at most the cost of $Q$ in the limit $\epsilon \to 0$. Therefore in the limit $\epsilon \to \infty$, $T_p^d(Q') \leq T_p^d(q)$ and we are done.

\section{Towards the triangle inequality} \label{appendix:tri_ineq}

As with many other attempts to generalise the classical Wasserstein distances to the quantum setting, the triangle inequality eludes us. The central barrier to the triangle inequality is the quantum marginal problem: even in the case where we do not require our couplings to be separable, there is no coherent way to build a coupling $\tau_{13}$ from coupling $\tau_{12}$ of $\rho_1$ and $\rho_2$ and coupling $\tau_{23}$ of $\rho_2$ and $\rho_3$.

All known proofs of the triangle inequality for the classical $\mathcal{W}_p$ distances rely on this idea: given a coupling $\gamma_{12}$ of measures $\mu_1$ and $\mu_2$, and coupling $\gamma_{23}$ of $\mu_2$ and $\mu_3$, we can find measure $\gamma_{123}$ with $(1,2)$-marginal $\gamma_{12}$ and $(2,3)$-marginal $\gamma_{23}$. We can then take the $(1,3)$-marginal $\gamma_{13}$ of this overarching measure, and show that \cite{clement2008}
\begin{equation}
    T_p(\gamma_{13})^{1/p} \leq T_p(\gamma_{12})^{1/p} + T_p(\gamma_{23})^{1/p}
\end{equation}
via Minkowski's inequality. Without an equivalent $\tau_{123}$ in the quantum setting from which to form coupling $\tau_{13}$, there is as of yet no clear path to a triangle inequality.

For the case $p=1$, we can focus on the case where $\norm{\rho-\sigma}_{DW_1^d} = W_1^d(\rho,\sigma)$, as $\norm{\cdot}_{DW_1^d}$ does satisfy the triangle inequality. As mentioned in Sec.~\ref{section:dual_picture}, this remains a key open problem of our work.

\section{Auxiliary proofs} \label{appendix:proofs}

\subsection{Proof of Proposition \ref{prop:Wp_cts} on the continuity of $W_p^d$} \label{appendix:continuity}

\noindent \textbf{Proposition \ref{prop:Wp_cts}.} \textit{Suppose $d$ is continuous on $\mathbb{P}\mathcal{H}$ and let $1 \leq p < \infty$. Then $W_p^d$ is uniformly continuous.}

\vspace{2mm}
\noindent \textit{Proof.}
Take $\epsilon > 0$. Let $\rho, \sigma, \rho', \sigma'$ be states on $\mathbb{P}\mathcal{H}$ with $\norm{\rho-\rho'}_{\infty} \leq \delta$ and $\norm{\sigma-\sigma'}_{\infty} \leq \delta$, for some $\delta$ to be chosen later. For any transport plan from $\rho$ to $\sigma$, we will form a transport plan from $\rho'$ to $\sigma'$ with a similar cost. Note that as $\mathcal{H}$ is finite-dimensional, $\mathbb{P}\mathcal{H}$ is compact and so $d$ is uniformly continuous.

    Let $c \gg 1$ (also to be chosen later, but note scale $1 \gg c\delta$) and let $S_{\rho'}$, $S_{\sigma'}$ respectively be the span of the eigenvectors of $\rho', \sigma'$ whose eigenvalues are at least $c\delta$. Let $\Pi_{\rho'}$, $\Pi_{\sigma'}$ be the projectors onto $S_{\rho'}, S_{\sigma'}$ respectively.

    Let 
    \begin{equation}
        Q = \{(q_j, \ket{\psi_j}, \ket{\varphi_j})\}_{j \in J}
    \end{equation}
    be any $p^{\text{th}}$-order transport plan from $\rho$ to $\sigma$. Define then $\ket{\psi_j} = \ket{\psi_j^{\rho'}} + \ket{\psi_j^{\perp}}$ where $\ket{\psi_j^{\rho'}} = \Pi_{\rho'} \ket{\psi_j}$ and $\ket{\psi_j^{\perp}}$ is perpendicular to $S_{\rho'}$. Define $\ket{\varphi_j^{\sigma'}}$ and $\ket{\varphi_j^{\perp}}$ analogously.

We can then begin to define a transport plan from $\rho'$ to $\sigma'$. Let
\begin{equation}
        \tilde{\rho} = \Pi_{\rho'} \rho \Pi_{\rho'} = \sum_{j \in J} q_j \ket{\psi_j^{\rho'}}\bra{\psi_j^{\rho'}}
    \end{equation}
    and define $\tilde{\sigma}$ analogously. Note that $\norm{\Pi_{\rho'}(\rho-\rho')\Pi_{\rho'}}_{\infty} \leq \norm{\rho-\rho'}_{\infty} < \delta$. Therefore
    \begin{equation}
        \tilde{\rho} \leq \Pi_{\rho'} \rho' \Pi_{\rho'} + \delta \mathbb{I}_{S_{\rho'}} \leq \left(1 + \frac{1}{c} \right)\Pi_{\rho'} \rho' \Pi_{\rho'} \leq \left(1+\frac{1}{c}\right) \rho'
    \end{equation}
    and so $\frac{c}{c+1}\tilde{\rho} \leq \rho'$.

The same holds for $\sigma$, so $\frac{c}{c+1}\tilde{\sigma} \leq \sigma'$.

We may then begin to build our new transport plan starting with the partial transport plan
\begin{equation}
    Q_1 = \left\{\left(q_j', \frac{\ket{\psi_j^{\rho'}}}{\sqrt{\braket{\psi_j^{\rho'} |\psi_j^{\rho'} } }},\frac{\ket{\varphi_j^{\sigma'}}}{\sqrt{\braket{\varphi_j^{\sigma'} |\varphi_j^{\sigma'} } }}\right) \right\}_{j \in J}
\end{equation}
where $q_j' = \frac{c}{c+1} q_j \min\{ \braket{\psi_j^{\rho'} |\psi_j^{\rho'} }  , \braket{\varphi_j^{\sigma'} |\varphi_j^{\sigma'} }  \}$. Note then that
\begin{equation}
    \rho' \geq \frac{c}{c+1}\tilde{\rho} \geq \sum_{j \in J} q_j' \frac{\ket{\psi_j^{\rho'}}\bra{\psi_j^{\rho'}}}{\braket{\psi_j^{\rho'} |\psi_j^{\rho'} } } \qquad \text{and} \qquad \sigma' \geq \frac{c}{c+1} \tilde{\sigma} \geq \sum_{j \in J} q_j' \frac{\ket{\varphi_j^{\sigma'}}\bra{\varphi_j^{\sigma'}}}{\braket{\varphi_j^{\sigma'} |\varphi_j^{\sigma'} } }.
\end{equation}
We can then transport the positive semidefinite operator $\rho' - \sum_{j \in J} q_j' \frac{\ket{\psi_j^{\rho'}}\bra{\psi_j^{\rho'}}}{\braket{\psi_j^{\rho'} |\psi_j^{\rho'} } }$ onto $\sigma' - \sum_{j \in J} q_j' \frac{\ket{\varphi_j^{\sigma'}}\bra{\varphi_j^{\sigma'}}}{\braket{\varphi_j^{\sigma'} |\varphi_j^{\sigma'} } }$ via any partial transport plan. Let these transport plan elements form set $Q_2$. It follows that $Q_1 \cup Q_2$ is a transport plan from $\rho'$ to $\sigma'$. 

We can now attempt bounding the cost of this transport plan. We will show that $Q_2$ has a very small cost and that $Q_1$ has a cost very close to $T_p^d(Q)$. Starting with the $Q_2$ part, we know that
\begin{equation}
    T_p^d(Q_2) \leq \text{Tr}\left[ \rho' - \sum_{j \in J} q_j' \frac{\ket{\psi_j^{\rho'}}\bra{\psi_j^{\rho'}}}{\braket{\psi_j^{\rho'} |\psi_j^{\rho'} } }\right] \text{diam}_d(\mathbb{P}\mathcal{H})^d. \label{eq:cts_bound_Q_2}
\end{equation}

We will show that this trace part is very small. Indeed
\begin{align}
    \text{Tr}\left[ \rho' - \sum_{j \in J} q_j' \frac{\ket{\psi_j^{\rho'}}\bra{\psi_j^{\rho'}}}{\braket{\psi_j^{\rho'} |\psi_j^{\rho'} } }\right]
    &=\norm{\rho' - \sum_{j \in J} q_j' \frac{\ket{\psi_j^{\rho'}}\bra{\psi_j^{\rho'}}}{\braket{\psi_j^{\rho'} |\psi_j^{\rho'} } }}_{1}\\
        &\leq \norm{\rho-\rho'}_1 + \norm{\rho - \sum_{j \in J} q_j' \frac{\ket{\psi_j^{\rho'}}\bra{\psi_j^{\rho'}}}{\braket{\psi_j^{\rho'} |\psi_j^{\rho'} } }}_1 \\
        &\leq \delta \text{dim}\mathcal{H} + \norm{\rho - \sum_{j \in J} q_j' \frac{\ket{\psi_j^{\rho'}}\bra{\psi_j^{\rho'}}}{\braket{\psi_j^{\rho'} |\psi_j^{\rho'} } }}_1 \label{eq:cts_split_Q_2}
\end{align}

Choose $M > 0$ (again to be determined later), and select $L > 0$ such that whenever $\text{Tr}\ketbra{\psi_j^{\rho'}}{\psi_j^{\rho'}} \geq 1-L$ and $\text{Tr}\ketbra{\varphi_j^{\sigma'}}{\varphi_j^{\sigma'}} \geq 1-L$ we have $d(\ket{\psi_j^{\rho'}},\ket{\varphi_j^{\sigma'}})^p < d(\ket{\psi_j},\ket{\varphi_j})^p + M$. This is possible by uniform continuity of $d$. We then split the set of $j$ into those for which the projections onto $S_{\rho'}$ and $S_{\sigma'}$ are large, and those for which they are not.

Let \begin{equation}
    \mathcal{K} = \left\{ j \in J : \text{Tr} \ketbra{\psi_j^{\rho'}}{\psi_j^{\rho'}}, \text{Tr}\ketbra{\varphi_j^{\sigma'}}{\varphi_j^{\sigma'}} > 1-L \right\}.
\end{equation}

and we can split up these $1$-norm terms into $j \in \mathcal{K}$ and $j \notin \mathcal{K}$. For $j \in \mathcal{K}$, we have 
\begin{align}
\norm{q_j \ketbra{\psi_j}{\psi_j} - q_j' \frac{\ketbra{\psi_j^{\rho'}}{\psi_j^{\rho'}}}{\text{Tr}\ketbra{\psi_j^{\rho'}}{\psi_j^{\rho'}}}}_1 
    &\leq   q_j\norm{\ketbra{\psi_j}{\psi_j}-\frac{\ketbra{\psi_j^{\rho'}}{\psi_j^{\rho'}}}{\text{Tr}\ketbra{\psi_j^{\rho'}}{\psi_j^{\rho'}}}}_1 + |q_j - q_j'|\\
    &\leq 2q_j\sqrt{L} + \left(L + \frac{1}{c+1}\right)q_j = \left(2\sqrt{L} + L + \frac{1}{c+1} \right) q_j
\end{align}
 and for $j \notin \mathcal{K}$, we have $\norm{q_j \ketbra{\psi_j}{\psi_j} - q_j' \frac{\ketbra{\psi_j^{\rho'}}{\psi_j^{\rho'}}}{\text{Tr}\ketbra{\psi_j^{\rho'}}{\psi_j^{\rho'}}}}_1 \leq 2q_j$.
In order to bound $\sum_{j \notin \mathcal{K}} q_j$, let $\{\ket{\alpha}\}_{\alpha}$ be some orthonormal basis for the orthogonal complement of $S_{\rho'}$. We have
\begin{equation}
    \norm{\rho-\rho'}_{\infty} \text{dim}\mathcal{H} \geq \sum_{\alpha} \braket{ \alpha | \rho-\rho' | \alpha} \geq \text{Tr} \left[\sum_j q_j (\ketbra{\psi_j^{\perp}}{\psi_j^{\perp}} + \ketbra{\varphi_j^{\perp}}{\varphi_j^{\perp}})\right] - c\delta \text{dim} \mathcal{H}.
\end{equation}
The same applies to $\sigma$. Hence
\begin{align}
    \sum_{j \notin \mathcal{K}} q_j &\leq \frac{1}{L} \sum_{j \notin \mathcal{K}} q_j \text{Tr} [\ketbra{\psi_j^{\perp}}{\psi_j^{\perp}} + \ketbra{\varphi_j^{\perp}}{\varphi_j^{\perp}}] \\
    &\leq \frac{1}{L} \text{Tr} \left[\sum_j q_j (\ketbra{\psi_j^{\perp}}{\psi_j^{\perp}} + \ketbra{\varphi_j^{\perp}}{\varphi_j^{\perp}})\right] \\
    &\leq \frac{1}{L} \text{dim}\mathcal{H} \left(\norm{\rho-\rho'}_\infty + \norm{\sigma - \sigma'}_{\infty} + 2c\delta \right) \\
    &\leq \frac{2\delta(c+1)}{L} \text{dim}\mathcal{H}.
\end{align}
This gives us
\begin{equation}
 \norm{\rho - \sum_{j \in J} q_j' \frac{\ket{\psi_j^{\rho'}}\bra{\psi_j^{\rho'}}}{\braket{\psi_j^{\rho'} |\psi_j^{\rho'} } }}_1 \leq 2\sqrt{L} + L  + \frac{1}{c+1} + \frac{4\delta(c+1)}{L} \text{dim}\mathcal{H}.
\end{equation}
By symmmetry this holds for $\sigma$, and so substituting into equation \eqref{eq:cts_split_Q_2} and then \eqref{eq:cts_bound_Q_2} gives
\begin{equation}
    T_p^d(Q_2) \leq \left(\delta \text{dim}{\mathcal{H}} + 2\sqrt{L} + L + \frac{1}{c} + \frac{4\delta(c+1)}{L}\text{dim}\mathcal{H} \right) \text{diam}_d(\mathbb{P}\mathcal{H})^p.
\end{equation}
This bounds $T_p^d(Q_2)$ above.

For the bounding of $Q_1$, we have
\begin{align}
    T_p^d(Q_1) 
        &= \sum_{j \in J} q_j' d(\ket{\psi_j^{\rho'}},\ket{\varphi_j^{\sigma'}})^p \\
        &= \sum_{j \in \mathcal{K}} q_j' d(\ket{\psi_j^{\rho'}},\ket{\varphi_j^{\sigma'}})^p  + \sum_{j \notin \mathcal{K}} q_j' d(\ket{\psi_j^{\rho'}},\ket{\varphi_j^{\sigma'}})^p \\
        &\leq \sum_{j \in \mathcal{K}} q_j d(\ket{\psi_j},\ket{\varphi_j})^p + M + \frac{2\delta}{L}\text{dim}\mathcal{H}\cdot \text{diam}_d(\mathbb{P}\mathcal{H})^p \\
        &\leq T_p^d(Q) + M + \frac{2\delta}{L}\text{dim}\mathcal{H}\cdot \text{diam}_d(\mathbb{P}\mathcal{H})^p.
\end{align}
It follows then that
\begin{equation}
    T_p^d(Q_1 \cup Q_2) \leq T_p^d(Q) + M + \left(\frac{(4c+6)\delta}{L}\text{dim}\mathcal{H} + \delta\text{dim}\mathcal{H} + 2\sqrt{L} + L + \frac{1}{c+1} \right)\text{diam}_d(\mathbb{P}\mathcal{H})^p.
\end{equation}

And so choosing $M < \frac{1}{4}\epsilon$, $L$ sufficiently small such that the continuity condition for $M$ is satisfied and such that $2\sqrt{L} + L \leq \frac{\epsilon}{4\text{diam}_d(\mathbb{P}\mathcal{H})^p}$, $c = \frac{4\text{diam}_d(\mathbb{P}\mathcal{H})^p}{\epsilon}$, and then choosing $\delta < \frac{\epsilon}{4(1 + (4c+6)/L)\text{dim}\mathcal{H}\text{diam}_d(\mathbb{P}\mathcal{H})^p}$ gives $T_p^d(Q_1\cup Q_2) \leq T_p^d(Q) + \epsilon$. Thus we have a transport plan for $\rho'$ to $\sigma'$ with cost at most $\epsilon$ more than the transport cost of plan $Q$ between $\rho$ and $\sigma$.

Taking the infimum over all such transport plans $Q$, we see that whenever $\norm{\rho-\rho'}_{\infty} < \delta$ and $\norm{\sigma-\sigma'}_{\infty} < \delta$, we have
\begin{equation}
    \inf_{Q' \in \mathcal{Q}(\rho',\sigma')} T_p^d(Q') \leq \epsilon + \inf_{Q \in \mathcal{Q}(\rho,\sigma)} T_p^d(Q).
\end{equation}
It follows that $W_p^d$ is uniformly continuous. \hfill $\square$

\subsection{Proof of Corollary \ref{cor:approx_complexity_allgates_high} on the approximate gate complexity of random pure states} \label{appendix:approx_complexity_allgates_high}

\noindent \textbf{Corollary \ref{cor:approx_complexity_allgates_high}.}
\textit{Let $\rho$, $\sigma$ be i.i.d. states on $\mathcal{H} = \left( \mathbb{C}^2 \right)^{\otimes n}$ generated by an auxiliary system $\mathcal{A}$ of dimension $s = 2^{cn}$ where $0 \leq c < 1$. For all $\delta > 0$,}
    \begin{equation}
        \mathbb{P}_{\ket{\varphi}} \left[W_1^C(\rho,\sigma) \leq \epsilon^{2/3}n^{-1}\kappa \left(\frac{2^{(1-\delta)n}}{\text{{\em poly}}(n,\log \epsilon^{-1})} \right)^{1/3} \right] \leq e^{-\Omega(2^n \log ((2\epsilon)^{-1}))}.
    \end{equation}
\vspace{2mm}
\noindent \textit{Proof.}
    For any fixed $\ket{\psi}, \ket{\varphi}$, using equation \eqref{eq:complexity_distance_approxcompl_bounds} and Lemma \ref{lemma:link_complexity_allgates_universalset}, we have
    \begin{align}
        d_C(\ket{\psi},\ket{\varphi}) 
            &\geq \min_{U \in \mathcal{SU}(2^n), U\ket{\psi}=\ket{\varphi}}  \kappa G(U,\epsilon)^{1/3}\epsilon^{2/3} n^{-2}   \\
            &\geq \min_{U \in \mathcal{SU}(2^n), U\ket{\psi}=\ket{\varphi}} \kappa \left(\frac{G_{\mathcal{S}}(U,2\epsilon)}{\text{poly}(\log(G(U,\epsilon))+\log(\epsilon^{-1}))}\right)^{1/3}\epsilon^{2/3} n^{-2} \\
            &\geq \min_{U \in \mathcal{SU}(2^n), U\ket{\psi}=\ket{\varphi}} \kappa \left(\frac{G_{\mathcal{S}}(U,2\epsilon)}{\text{poly}(n,\log(\epsilon^{-1}))}\right)^{1/3}\epsilon^{2/3} n^{-2}.
    \end{align}
    Given that $W_1^C(\rho,\sigma) \geq \min\left\{ d_C(\ket{\psi},\ket{\varphi}) : \ket{\psi} \in \text{ span } \rho, \ket{\varphi} \in \text{ span } \sigma \right\}$,
    it follows from Lemma \ref{lemma:approx_complexity_universalset_low_mixed} that
    \begin{align}
        &\mathbb{P}_{\rho, \sigma} \left[ W_1^C(\rho,\sigma) 
            \leq \epsilon^{2/3}n^{-1}\kappa \left(\frac{2^{(1-\delta)n}}{\text{poly}(n,\log \epsilon^{-1})} \right)^{1/3} \right] \\
            &\leq \mathbb{P}_{\rho,\sigma} \left[\min_{U : U\ket{\psi}=\ket{\varphi}, \ket{\psi} \in \text{ span } \rho, \ket{\varphi} \in \text{ span } \sigma } \kappa \left(\frac{G_{\mathcal{S}}(U,2\epsilon)}{\text{poly}(n,\log(\epsilon^{-1}))}\right)^{1/3}\epsilon^{2/3} n^{-2} \leq \epsilon^{2/3}n^{-2}\kappa\left(\frac{2^{(1-\delta)n}}{\text{poly}(n,\log \epsilon^{-1})} \right)^{1/3} \right] \\
            &= \mathbb{P}_{\rho,\sigma} \left[G_{\mathcal{S}}\left(U_{\rho \to \sigma}^{\text{ opt}},2\epsilon\right) \leq 2^{(1-\delta)n} \right]\\
            &\leq e^{-\Omega(2^n \log (1/2\epsilon)}
    \end{align}
    as claimed.  \hfill $\square$

\end{document}